% TODO: Configure automatically

% FOR ACM SIGPLAN (SOSP, Eurosys, etc.)
\documentclass[sigplan, 10pt]{}
% FOR ACM SIGCONF (SIGMOD, etc.)
% \documentclass[sigconf]{acmart}
% FOR USENIX (OSDI, NSDI, Usenix Security, etc.):
%\documentclass[letterpaper,twocolumn,10pt]{article}
% For LIPCS (PODC, DISC, etc.)
%\documentclass[acmsmall,nonacm]{acmart}
% For IEEE (Oakland, ...)
%\documentclass[conference,compsoc]{IEEEtran}

\usepackage{Style-Files/preamble}

% User-specific grumblers (See Preamble.sty for function definition)
\newcommand{\nc}[1]{{\color{cyan}\grumbler{Natacha}{#1}}}
\newcommand{\neil}[1]{{\color{blue}\grumbler{Neil}{#1}}}
\newcommand{\fs}[1]{{\color{purple}\grumbler{Florian}{#1}}}

\newcommand{\la}[1]{{\color{olive}\grumbler{LA}{#1}}}

% User-specific margin grumblers (See Preamble.sty for function definition)

%\newcommand{\mia}[1]{\grumblermargin{Ittai}{#1}{green!25}}

\settopmatter{printfolios=false, printacmref=true} %sets first page pagenumber, sets acmref
% \setcopyright{acmcopyright}
% \copyrightyear{2024}
% \acmYear{2024}
% \acmDOI{TODO}

% \acmConference[SOSP '24]{ACM SIGOPS 30th Symposium on Operating Systems Principles}{November 2--6, 2024}{Austin, Texas}
% \acmBooktitle{ACM SIGOPS 38th Symposium on Operating Systems Principles (SOSP '24), November 2--6, 2024, Austin, Texas}
% \acmPrice{15.00}
% \acmISBN{TODO}

\acmYear{2024}\copyrightyear{2024}
\acmConference[SOSP '24]{ACM SIGOPS 30th Symposium on Operating Systems Principles}{November 4--6, 2024}{Austin, TX, USA}
\acmBooktitle{ACM SIGOPS 30th Symposium on Operating Systems Principles (SOSP '24), November 4--6, 2024, Austin, TX, USA}
\acmDOI{10.1145/3694715.3695942}
\acmISBN{979-8-4007-1251-7/24/11}

\usepackage{draftwatermark}
% \SetWatermarkText{{\ifnum\value{page}>16 Not Peer-Reviewed \else\fi}}
% \SetWatermarkColor[gray]{0.5}
% \SetWatermarkFontSize{0.5cm}
% \SetWatermarkScale{1.2}
% \SetWatermarkLightness{0.8}
% % \SetWatermarkAngle{90}
% % \SetWatermarkHorCenter{20.4cm}
% \SetWatermarkAngle{0}
% \SetWatermarkVerCenter{1.4cm}
% \SetWatermarkHorCenter{10.8cm}

\SetWatermarkText{{\ifnum\value{page}=16 The following material has not been Peer-Reviewed \else\fi}}
\SetWatermarkColor[gray]{0.5}
\SetWatermarkFontSize{0.4cm}
\SetWatermarkScale{1}
\SetWatermarkLightness{0.6}
% \SetWatermarkAngle{90}
% \SetWatermarkHorCenter{20.4cm}
\SetWatermarkAngle{0}
\SetWatermarkVerCenter{2.3cm}
\SetWatermarkHorCenter{10.8cm}

\title{\sys: Seamless high speed BFT}
\author{Neil Giridharan}
\authornote{Joint first authors, alphabetical.}
\affiliation{\institution{UC Berkeley}\country{}}
\author{Florian Suri-Payer}
\affiliation{\institution{Cornell University}\country{}}
\authornotemark[1]
\author{Ittai Abraham}
\affiliation{\institution{Intel Labs}\country{}}
\author{Lorenzo Alvisi}
\affiliation{\institution{Cornell University}\country{}}
\author{Natacha Crooks}
\affiliation{\institution{UC Berkeley}\country{}}
% \author{\large {\rm Neil Giridharan$^\ddagger$^*}, {\rm Florian Suri-Payer$^\ddagger\thanks{Equal contribution.}$}, {\rm Ittai Abraham}, {\rm Lorenzo Alvisi$^\dagger$}, {\rm Natacha Crooks$^\ddagger$} \\
%\vspace{2pt}\normalsize {\it $^\dagger$ Cornell University}, {\it $^\ddagger$UC Berkeley}, {\it VMWare Research}}

\ccsdesc[500]{Computer systems organization~Dependable and fault-tolerant systems and networks}
\ccsdesc[500]{Security and privacy~Distributed systems security}
%\ccsdesc[300]{Security and privacy~Database and storage security}
%\ccsdesc[100]{Information systems~Data management systems}

%% Keywords. The author(s) should pick words that accurately describe
%% the work being presented. Separate the keywords with commas.
\keywords{consensus, Byzantine fault tolerance, blockchains, distributed systems}

%-------------------------------------------------------------------------------
%    ABSTRACT
%-------------------------------------------------------------------------------
\begin{abstract} 
%\fs{new version with blips/good intervals. Old in comments}
%\la{Some edits, and a couple of changebars. Removed the first reference to "blippy" since it occurred before blips were ever mentioned...}
Today's practical, high performance Byzantine Fault Tolerant (BFT) consensus protocols operate in the partial synchrony model. 
However, existing protocols are inefficient when deployments are indeed \textit{partially} synchronous. 
%\changebars{}{Real deployments are \textit{blippy}, and subject to intermittent failures or asynchronous network effects that disrupt progress.}
They deliver either low latency during fault-free, synchronous periods (\textit{good intervals}) or robust recovery from events that interrupt progress (\textit{blips}). At one end, traditional, view-based BFT protocols optimize for latency during good intervals, but, when  blips occur, can suffer from performance degradation (\textit{hangovers}) that can last beyond the return of a good interval.
At the other end, modern DAG-based BFT protocols recover more gracefully from blips, but exhibit lackluster latency during good intervals.
To close the gap, this work presents \sys{}, a novel high-throughput BFT protocol that offers both low latency and \textit{seamless} recovery from blips. By combining a highly parallel asynchronous data dissemination layer with a low-latency, partially synchronous consensus mechanism, \sys{} \one avoids the hangovers incurred by traditional BFT protocols and \two matches the throughput of state of the art DAG-based BFT protocols while cutting their latency in half, matching the latency of traditional BFT protocols.

%%%%%%%%%%%%%%%%%%%% OLD:
\iffalse
Today's practical, high performance Byzantine Fault Tolerant (BFT) consensus protocols operate in the partial synchrony model. However, existing protocols are often inefficient when networks are indeed \textit{partially} synchronous. They obtain either low latency during synchrony \fs{periods of progress?} or robust recovery from failures and periods of asynchrony (aka \textit{blips}). At one end, traditional, view-based BFT protocols optimize for latency in the sunny-network case, but when faced with \changebars{progress interruptio}periods of asynchrony are subject to performance degradations (\textit{hangovers}) that can last beyond the return t\changebars{of progress}{to synchrony}.
At the other end, modern DAG-based BFT protocols \changebars{tolerate progress interruptions gracefully}{recover gracefully from asynchrony}, but exhibit lackluster latency during synchronous intervals.
% They must choose between low latency or robustness to changing network conditions: traditional, view-based BFT protocols optimize for latency in the sunny-network case, but when faced with periods of asynchrony are subject to performance degradations (\textit{hangovers}) that can last beyond the return to synchrony. Modern DAG-based BFT protocols in turn recover gracefully from asynchrony, but exhibit lackluster latency during synchronous intervals. 
To close the gap, this work presents \sys{}, a novel high-throughput BFT protocol that offers both low latency and \textit{seamless} recovery from periods of asynchrony. \sys{} combines a highly parallel asynchronous data dissemination layer with a low-latency, partially synchronous consensus mechanism to construct an efficient consensus protocol for \textit{partial} synchrony. \sys{} \one avoids the hangovers incurred by traditional BFT protocols and \two matches the throughput of state of the art DAG-BFT protocols while reducing latency by 2.1x, matching the latency of traditional BFT protocols.

\fi
\end{abstract}

%%%%%%%%%%%%%%%%%%% Comments // OLD notes %%%%%%%%%%%%%%%%%%%%%%%%%5
\iffalse

Tension: 
\begin{itemize}
    \item Traditional BFT (Pbft etc) has low latency, but poor robustness to async 
    %\item Multi-leader like Mir have good tput and latency, but no async responsive tput
    \item DAG-based multi-leader has async resp + high tput, but bad latency
\end{itemize}

 We get the best of all:
    - high tput 
    - async resp tput
    - low end-to-end latency \\

Our idea TLDR:
    - Layer Partially sync traditional consensus protocol on top of asynchronous multi-leader data dissemination layer

\end{abstract}

 %Sailfish:
DAG-based BFT consensus allows for high throughput + "easy" consensus embeddings atop reliable broadcast.
    \item However, commit latency is much higher than typical consensus protocols (e.g. PBFT/Hotstuff) because they require multiple rounds of reliable broadcast, which itself needs multiple message delays
    \item To bridge the gap we introduce Sailfish, a new BFT DAG consensus protocol that combines the throughput of DAG, with latency of traditional consensus
    \item Main insight: Reliable broadcast used by Dag does "almost" full consensus. Doing it multiple times introduces unnecessary latency for consensus (doing redundant work). Instead, we treat DAG and consensus as logically separate; but physically re-use their messages that do the same work.

\fi

\begin{document}

\maketitle
\pagestyle{empty} % removes running headers and page numbers

%-------------------------------------------------------------------------------
%   INTRO
%-------------------------------------------------------------------------------
%Today's psync protocols are not efficient under PARTIAL synchrony
% Trad are efficient during synchrony, but perform poorly under fluctuating network; they are built upon an unrealistic assumption that the network is instable, but eventually consistently stable. 
%DAG BFT are robust to periods of asynchronyl but don't do well when recovering.

\section{Introduction}

This work presents \sys, a Byzantine Fault Tolerant (BFT) state machine replication (SMR) protocol that sidesteps current tradeoffs between low latency, high throughput, and robustness to faults.

%\sys{} \fs{is the first to?} achieves low end-to-end latency while simultaneously offering high throughput and streamlining recovery from periods of \changebars{duress}{asynchrony}. \fs{we want to change all mentions of "asynchrony" to periods of "lack of consensus progress". "duress" is maybe not the right word, just a placeholder for now} \la{right -- not a fan of duress, frankly. By the way--went ahead and changed "offing high throughput" to "offering high throughput.}
%\nc{Why not remove this paragraph completely. We have the short header, and then we can actually introduce the context and what we're talking about (in more detail). This paragraph is a bit redundant otherwise.}

BFT SMR offers the appealing abstraction of a centralized, trusted, and always available server, even as some replicas misbehave. Consensus protocols implementing this abstraction~\cite{castro1999pbft,yin2019hotstuff,Bullshark} are at the core of recent interest in decentralized systems such as blockchains, central-bank endorsed digital currencies~\cite{digital-euro, digital-euro-2}, and distributed trust systems. CCF, a Trusted Execution Environment (TEE) enabled BFT system~\cite{microsoft-ccf}, for instance, is used to power Azure SQL's integrity functionality~\cite{ccf-ledger}. 
%\nc{I don't think we should go for additional blockchain motivation. Why not list examples of where CCF is used (that are specifically not blockchain based but more generic instances of ditrisbuted trust)}
%\changebars{Many of the biggest companies spearheading the billion dollar web3 industry \cite{aptos, mysten, algorand, avalanche}, including the recently absorbed Facebook Diem project \cite{fb-diem}, are powered by BFT SMR: these systems aim to reliably and securely process hundreds of thousand transactions per second, while offering smooth and responsive user interactions. Today's systems still fall short of these deliverables, and has motivated a lot of protocol advances in recent years \cite{jolteon-ditto, spiegelman2022bullshark, spiegelman2023shoal, keidar2023cordial}.}{}\la{Not sure what it means for a project to be "absorbed" - is it a polite word for "dead"?}\fs{it's not "dead", Aptos is using the Libra codebase/carrying it on}
%\nc{Rather than suggest that they have not yet been adopted in the mainstream, why not just directly say "These protocols must offer}
These protocols must offer
\one high throughput, \two low end-to-end latency, and \three robustness to both faults and changing network conditions; all whilst being simple enough to build and deploy. %({\em i.e.}, achieve {\em network robustness})

% Unfortunately, existing BFT systems must choose between low latency and robustness. \fs{it's not trivial to get tput either... If we change "existing" to "prevalent", then we could say "between low latency, and high throughput and robustness} This tension, we submit, arises from an oversimplified \changebars{treatment of the network and fault model that is at odds with realistic deployment behavior.}{network model that is at odds with realistic network behavior.} \fs{I kind of want to move away from this. I don't think an oversimplified model is why the existing systems aren't good enough?}\fs{RATHER: realistic deployment = faults and network}
% \nc{Radical suggestion. Why don't we remove this paragraph starting from unfortunately, finish our description of partial synchrony, and then say:

% Unfortunately, existing partially synchronous BFT protocols are not efficient under actual partial synchrony, and then continue with what we have}

Today's popular (and actually deployed) BFT consensus protocols operate primarily in the {\em partial synchrony model}~\cite{dwork1988consensus}, introduced to get around the impossibility of a safe and live solution to consensus in an asynchronous system~\cite{fischer1985impossibility}. Partially synchronous protocols are always safe, but provide no liveness or performance guarantees before some ``magic'' \textit{Global Stabilization Time} (GST), after which synchrony (all messages are received within a timebound $\Delta$) is expected to hold forever on; they optimize for latency \textit{after} GST holds. 
%\changebars{To materialize this, protocols rely on hand-tuned (exponentially increasing) timeouts, and make consensus progress only given timely message arrivals.}{}
Deployed systems simulate this model using exponentially increasing timeouts -- pragmatically, they declare GST attained and $\Delta$ found once timeouts are  large enough that they are not being violated. In this regime, user requests (transactions) are guaranteed to commit.\fs{what about faults? I think the sentence reads fine without (and is correct)}

% and guarantee progress only when timeouts are not not violated. Transactions are thus guaranteed to commit only after GST and when timeouts exceed $\Delta$.

% The partial synchrony that real deployments might experience, however, \changebars{is unforgiving}{looks quite different}: it is subject to intermittent replica and network failures, network adversaries (ddos/route hijacking), latency fluctuations, network partitions, link asymmetries, etc., where periods of progress interleave with timeout violations (asynchronous periods).  \fs{it's easy to break consensus \cite{consensus-dos, spiegelman_ace}}

In practice, however, this happy regime only holds intermittently. GST is an elegant fiction, as real deployments are subject to unpredictable replica and network failures, network adversaries (DDoS/route hijacking), latency fluctuations, link asymmetries, etc. that can cause periods in which progress stalls ({\em blips}). These begin when a timeout is violated and end once timeouts are again met (perhaps by adjusting the timeout length, or removing a presumed faulty leader).
%these begin when a timeout is violated and endonce, perhaps by adjusting the timeout length~\cite{}\fs{is any citation needed?} \nc{that's not quite true though? faulty leader won't be solved by adjusting the timeout, only by view changing?}\fs{I thought this too the first time I read the sentence (Lorenzo wrote it). But actually it just says " once timeouts are met again". It just gives "adjusting timeout" as an example of something that might be necessary. If it's confusing maybe we just remove this part}\nc{i'm down to remove it, or we can change the order of propositions "once timeouts are again met (either by increasing the timeout value or by removing the presumed faulty leader)}\fs{there are more causes than that, e.g. partitions}, timeouts are again met. 
Unlike theoretical partial synchrony, which promises endless bliss after an initial (asynchronous) storm, {\em real} partial synchrony is instead piece-wise: periods where timeouts are met are separated by blips, during which progress is stalled.

Unfortunately, we find that existing partially synchronous protocols are not actually efficient under real partial synchrony: they have to choose between achieving low-latency when timeouts are met or remaining robust to blips.

\par \textbf{Traditional BFT} \cite{castro1999pbft, yin2019hotstuff, kotla10zyzzyva, gueta2019sbft} protocols and their multi-leader variants \cite{stathakopoulou2019mir, stathakopoulou2022state, gupta2021rcc} optimize for performance after GST ({\em i.e.,} in the absence of timeout events). This design choice  allows these protocols to minimize message exchanges in the common (synchronous) case. Unfortunately, we find that, after even a brief blip, these protocols cannot guarantee a graceful resumption of operations once synchrony returns; instead, they experience what we characterize as a {\em hangover}. Specifically, under sustained high load, the lack of progress caused by a blip (and the consequent loss of throughput) can generate large request backlogs that cause a degradation in end-to-end latency that persists well after the consensus blip has ended (Figs. \ref{fig:hangover},\ref{fig:leader-fail},\ref{fig:partition}).
%\fs{TODO Lean much more into "GOOD INTERVALS" Early. Define early that BFT can only make progress when a) synchrony holds, and b) the leader is honest (and yes, there must be a leader)}
Thus, while on paper traditional BFT protocols enjoy low consensus latency, their poor resilience to blips can cause them to underperform under real-life partial synchrony.   % \changebars{and in the presence of faults}{}. \fs{Somehow we need to get it smoothly conveyed that "asynchrony" includes faults (indistinguishable)...}
%\nc{This is what I ended up writing in the text:Traditional BFT protocols (PBFT, Hotstuff,Zyzzzyva SBFT, ..) and their multi-leader variants (Mir, RCC,...) optimize for performance after GST. This allows such protocols to minimize the number of message exchanges/latency in the common (effectively synchronous) case. Unfortunately, we find that, even after fairly short consensus blips, such protocols cannot gracefully resume operation processing once synchrony returns. We characterize such behavior as an asynchrony hangover.  Under continuous high load, lack of progress during blips (and the consequent loss of throughput) can cause large request backlogs that cannot be made up for quickly (if at all) once synchronous network conditions resume. These asynchrony hangovers result in increased end-to-end latency that persists well after synchrony returns. In summary, while traditional BFT protocols enjoy low latency on paper, their latency frequently spikes in the presence of blips. }
\vspace{2.5mm}
\par \textbf{Directed Acyclic Graph (DAG) based BFT,} a newly popular class of consensus protocols \cite{baird2016swirlds, keidar2021all, danezis2022narwhal, malkhi2022maximal, Bullshark, spiegelman2023shoal} takes an entirely different approach, steeped in an \textit{asynchronous} system model. Asynchronous protocols optimize for worst-case message arrivals, and leverage randomness to guarantee progress without relying on timeouts. %\fs{this is too strong. partitions will still kill quorum systems of course.} 
DAG systems employ as backbone a high throughput asynchronous data dissemination layer which, through a series of structured, tightly synchronized rounds, forms a DAG of temporally related data proposals. Replicas eventually converge on the same DAG and deterministically \textit{interpret} their local view  to establish a consistent total order. %\changebars{for commit, without exchanging additional messages}{}.\fs{The DAG is built with a special structure that guarantees all replicas eventually converge on the same view of the DAG. This allows them to locally interpret when to commit, rather than exchange messages. }
%\fs{section is unbalanced w.r.t to trad one. Trad section does not talk about how the consensus works} \changebars{Consensus deterministically \textit{interprets} the DAG to commit a total order.}{Consensus leverages this structure to act as a common clock, and is embedded with zero message overhead: in each round, the consensus protocol \textit{interprets} the DAG structure to identify leader (\textit{anchor}) nodes, and constructs an ordered log by committing and flattening the anchor's topological (or \textit{causal}) history in the DAG.}\fs{cut until here? => move this discussion to related work}
This approach yields excellent throughput and reduces the effects of blips on the system. Unfortunately, these benefits come at the cost of prohibitive latency; consequently, these protocols see little practical use. In pursuit of lower latency, recent (deployed) DAG protocols \cite{spiegelman2022bullshark, spiegelman2023shoal} switch to  partial synchrony and reintroduce timeouts,  but still require several rounds of Reliable Broadcast~\cite{bracha1985asynchronous} communication. For instance, Bullshark~\cite{Bullshark}, \changebars{the most popular partially synchronous DAG protocol}{the state of the art partially synchronous DAG protocol at the core of Sui~\cite{mysten} and Aptos~\cite{aptos}}, still requires up to 12 message delays (mds) to commit transactions. %operations.

%\par This work discusses how to soften the tension between low consensus latency and robustness and arrive at an efficient design under \textit{partial} synchrony. 
%\sys{} demonstrates how to selectively leverage asynchrony and partial synchrony across different protocol components, to construct an efficient partially synchronous protocol.
\vspace{2.5mm}

This paper shows that, thankfully, these tradeoffs are not fundamental. We propose \sys{}, a new consensus protocol that minimizes hangovers without sacrificing low latency during synchronous periods. 
%\changebars{}{DAG protocols demonstrate that blips can be tolerated without significant hangovers, but at a significant latency cost in the happy path. We believe that this tradeoff is not fundamental and propose a new consensus protocol, \sys, whose design softens the tension between achieving low latency in the sunny (common) case and remaining robust to blips.}
\sys's architecture is inspired by recent DAG protocols: it constructs a highly parallel data dissemination layer that continues to make progress at the pace of the network ({\em i.e.,} at the lowest rate of transfer between correct nodes) even during blips.
Atop, it carefully layers a traditional-style partially synchronous consensus mechanism that orders the stream of data proposals by committing concise state cuts. Like any partially synchronous protocol, consensus may fail to make progress during blips -- however, upon return of progress, \sys{} can instantaneously commit the \textit{entire} data backlog (with complexity independent from its size), minimizing the effect of hangovers.

\fs{In comments are some more differences to DAGs.}

% Compared to DAG-based protocols, key structural differences stand out. DAGs tightly entangle data dissemination and consensus; neither can be made significantly faster as consensus correctness is a direct function of a carefully structured causal data dissemination. 
% \sys{} separates data dissemination and consensus into two logically separate layers, allowing us to identify the minimal requirements for each layer, and lift the \textit{speedlimit}: 
% \one \sys{} allows all replicas to propose and disseminate data, but reduces tail latency and eschews all dependencies \textit{across} replicas -- thus allowing the data layer to proceed as fast as it can.

%'s consensus is logically independent from the structure of the data layer -- this allows us to employ
% \two \sys{} employs a traditional-style consensus structure that minimizes latency. \sys{} includes a 3md \textit{Fast-Path} for fault-free executions (5md's otherwise), and aggressively parallelizes successive consensus instances. % to shave of wait-time for data proposals issued after the current consensus instance.
%\neil{Mention that this consensus layer can be added without any additional overhead}
% \three Finally, \sys{} introduces consensus pipelining in order to sidestep the wait-time for data proposals issued concurrently with already ongoing consensus instances -- thus minimizing end-to-end latency. \neil{Maybe highlight this pipelining is more aggressive in optimistic conditions than prior pipelining attempts}
\vspace{2.5mm}
Our results are encouraging (\S\ref{sec:eval}): \sys{} matches the throughput of Bullshark~\cite{Bullshark} while reducing its latency by more than half; and it matches the latency of HotStuff~\cite{yin2019hotstuff} without suffering the hangovers that HotStuff experiences. \fs{and hopefully we can show it also beats out Bullshark given partitions!}
\vspace{1cm}

This paper makes three core contributions:

\begin{itemize}
        \item It formalizes two new notions -- {\em hangovers} and \textit{seamless} partial synchrony (\S\ref{sec:psync}) -- to characterize the performance of partially synchronous protocols in practice.
    \item It revisits the set of system properties required to design a BFT consensus protocol that avoids protocol-induced hangovers ({\em i.e.} it is seamless).
    \item It presents the design of \sys{}, a novel \textit{seamless} partially synchronous BFT consensus protocol that offers high throughput and low latency.
%    \item \fs{REMOVE:} It evaluates the performance of an \sys{} prototype, and compares it with Hotstuff~\cite{yin2019hotstuff} (traditional BFT) and Bullshark~\cite{Bullshark} (DAG-BFT).
\end{itemize}

\section{Partial Synchrony: Theory meets Practice}
\label{sec:psync}

The partial synchrony model is an appealing theoretical framework for reasoning about liveness in spite of the seminal FLP impossibility result~\cite{fischer1985impossibility}.  Real systems can, in principle, approximate this model using timeouts: they exponentially increase the value of the timeout until they enter a regime where timeout violations subside, and progress becomes possible, as if the network had indeed reached its Global Stabilization Time.
Unfortunately, timeouts are notoriously hard to set~\cite{tennage2023quepaxa, spiegelman2023shoal, liu2023flexible}):
too low of a timeout, and even common network fluctuations stop progress; too high, and the protocol is slow to respond to faults.

In practice, timeout violations are simply inevitable. Wide area network (WAN) latencies
are unpredictable, can vary by order of magnitudes~\cite{hoiland2016measuring}, and many RPCs have been shown to suffer
from high tail latency~\cite{dean2013tail, seemakhupt2023cloud}.
The operating context for today's blockchain systems is even more challenging.  Participants in these systems can
have highly asymmetric network and hardware resources, and widely different deployment experiences, all factors that can greatly impact how well they can keep up with requests. These asymmetries complicate
timeout configurations and result in frequent timeout events~\cite{priv-com-aptos-mysten}.

More fundamentally, accounting for the possibility of repeated timeout
violations is simply part of the job description for any system that
aims to be resilient to Byzantine faults.  Malicious replicas may, for
example, intentionally fail to send a required message, or a network
adversary may target correct participants via DDoS/route hijacking attacks. Indeed, recent work, for instance, demonstrates that targeted
network attacks on leaders are surprisingly easy to mount and drastically hurt performance~\cite{consensus-dos}.

Whatever the cause may be, the corrective actions that BFT SMR protocols take in response to a timeout violation ({\em e.g.,}  electing a new consensus leader) can lead to {\em blips}, {\em i.e.,} periods
during which progress stalls.
Practical deployments are thus truly \textit{partially}
synchronous. Messages will not trigger timeouts for a while, 
but these periods of synchrony will be regularly interrupted by
blips.  To properly assess the performance of a partially synchronous
system, we should consider how it behaves \textit{at all times},
during periods of synchrony, but also after recovering from 
blips.  To this effect, we introduce the dual notions of \textit{hangovers} and \textit{seamlessness}. Informally, hangovers consist of performance degradations triggered by a blip that persist once the blip is over; we say that a  partially synchronous protocol is seamless if it does not suffer from {\em protocol-induced} hangovers (defined in \S\ref{sec:hangovers-and-seamlessness} below).  
\subsection{Hangovers \& Seamlessness}
\label{sec:hangovers-and-seamlessness}
%\par \textbf{Asynchrony Hangovers \& Seamlessness.}
Following Aardvark~\cite{aardvark} and Abraham et al.~\cite{good-case} we say that an interval is \textit{good} if the system is synchronous (w.r.t. some implementation-dependent timeout on message delay), and the consensus process is led by a correct replica.
Intuitively, good intervals capture the periods during which progress is guaranteed; all non-good intervals are blips.
Additionally, we consider an interval to be {\em gracious}~\cite{aardvark} if it is good, and \textit{all} replicas are correct. This distinction is helpful to reason about specific optimizations, such as so-called {\em fast paths} in the protocol.
Given this, we define:
\begin{definition}
    A \textbf{hangover} is any performance degradation caused by a blip that persists beyond the return of a good interval. 
    % \nc{Is there some acceptable delay T, or do we assume that it should "immediately" resume?}\fs{I'm not opposed to the T.  I think "immediately" is impossible, hence the discussion about "quasi-seamlessness". My worry is that picking some T to compare protocols under is not very principled}
\end{definition}
Hangovers impact both throughput and latency. We characterize a throughput hangover as the duration required to commit all outstanding transactions
%\fs{both fine to me. just making it consistent across paper. If you prefer request I will change it! (I went with transaction since request is also overloaded as verb). However, I'm not sure there is a need to talk about "transactions" since "transaction semantics" are not of concern here. Another option is "user request", "command", or "operation"} 
issued before the start of the next good interval. If the provided load is below the system's steady-state throughput, the system will eventually recover all throughput ``lost'' during the blip.\footnote{If the load is equal or greater to the throughput, then the lost committed throughput (or \textit{goodput}) can never be recovered.}
Delayed throughput recovery in turn increases latency, even for transactions issued \textit{during} a good interval that follows a blip. Before these transactions can commit, they must first wait for the backlog formed during the blip to commit. Hangovers can not only cause immediate latency spikes, but also make the system more susceptible to future blips. Good intervals do not last forever, and unresolved hangovers can add up---and continue indefinitely.

In contrast, in a hangover-free system, all submitted transactions whose progress is interrupted by a blip, commit instantaneously when progress resumes; further, these are the only transactions that experience a hike in latency, and for no more than the remaining duration of the blip.

%%%%%%%%%%%%%%%%%%%%%%%%%%%% RE-WRITE %%%%%%%%%%%%%%%%%%%%%%%%%%%%%%%%%%%%%%%%%%%%%%%%%%
Some hangovers are unavoidable, {\em e.g.,} those due to insufficient network bandwidth or to message delays. No consensus protocol, not even asynchronous ones, can provide progress beyond the pace of the network: %\fs{define more precisely that this is about the lower bound on communication speed between correct replicas? Say small delta (d) is the maximum delay between any two replicas.}. \la{I tried to do so the first time that we use the notion of "pace of the network", in the Intro} 
without data, there isn’t anything to agree on! Consider, for instance, the case of a network partition that causes ten thousand 1MB transactions to accumulate; it will take at least 80 seconds, on a 1Gbit link, for them to be propagated once the partition resolves. \fs{and: new tx that are being submitted after the end of partition will also become delayed}

Other hangovers, however, are the result of suboptimal system design, where protocol logic (timeouts, commit rule, etc) introduces unnecessary delays. We refer to these hangovers are \textit{protocol-induced}. As we describe further in \S\ref{sec:existing-approaches}, the most common type of protocol-induced hangovers for consensus is the artificial coupling of data dissemination with the protocol's ordering logic.  

Thankfully, protocol-induced hangover are not inevitable and can be side-stepped through careful protocol design. Perhaps obviously, a protocol should accomplish this \textit{without} making itself more susceptible to blips; a protocol that introduces \textit{more blips} may implicitly introduce hangovers, or lose progress entirely. 

We say that a consensus protocol is \textit{seamless} if it behaves optimally in the face of blips. Formally:
\begin{definition} 
   A partially synchronous system is \textbf{seamless} if (i) it experiences no protocol-induced hangovers, and (ii) it does not introduce any mechanisms\footnote{Beyond the timeouts that are, of course, necessary for liveness.} that make the protocol newly susceptible to blips.%\fs{added "beyond timeout". In theory I don't think it's necessary since "blip" implies a timeout is broken.. but it might make it less confusing..} 
   %\fs{note: fast path timeout does not violate this => one can in theory put fast path in parallel with slow path..}
\end{definition}

Seamlessness is a desirable property for any system, but especially relevant to BFT systems. A Byzantine leader can easily cause a blip: it can simply fail to propose. If this blip causes a hangover, it will continue to impact the performance of the system even after the leader has been replaced. This is clearly undesirable, especially because, after all, BFT protocols should be robust to Byzantine faults~\cite{aardvark}.

\subsection{Existing Approaches}\label{sec:existing-approaches}
Traditional BFT protocols~\cite{castro1999pbft, yin2019hotstuff} are not seamless. 
They tightly couple data dissemination and ordering: for each batch of transactions on which it seeks to achieve consensus,  the protocol's current leader jointly disseminates the data associated with the batch and proposes how this batch should be ordered relative to other batches.
Consensus is then reached on each batch independently. If the consensus logic stalls, {\em e.g.,} because of a Byzantine leader triggering a timeout, so does data dissemination.  When a good interval resumes, the time necessary to commit ({\em i.e.,} transmit, verify and order) the backlog of transactions accumulated during the blip will thus be on the order of the backlog's size.

Take for instance, the behavior of HotStuff (HS) ~\cite{yin2019hotstuff}, a state of the art BFT protocol at the core of the former Diem blockchain~\cite{baudet2019state}. We use a 1s (doubling) timeout (the default in many companies~\cite{priv-com-aptos-mysten}), and trigger a three second blip by simulating a single leader failure; because HS pipelines rounds and eagerly rotates leaders, a single failure can cause two timeouts to trigger (\S\ref{sec:eval}).
%\fs{We need to update the figure. Currently it doesn't do exactly this failure. Currently we just stopped *any* leaders sending proposals for 3 seconds.}\fs{do we want to cite the 1s timeout?}
After the blip ends, HS, burdened with individually disseminating and processing the transaction batches accumulated in its backlog, suffers from a hangover that is 30\% longer than the original blip itself! (Figure~\ref{fig:hangover}).%\fs{is percent longer the best metric to use; if you don't read right, it looks like it's 30\% OF the blip, rather than 1.3x (though ofc, looking at the figure it should be clear)}
\begin{figure}[h!]
    \vskip -2pt
   \centering
   \includegraphics[width=0.45\textwidth]{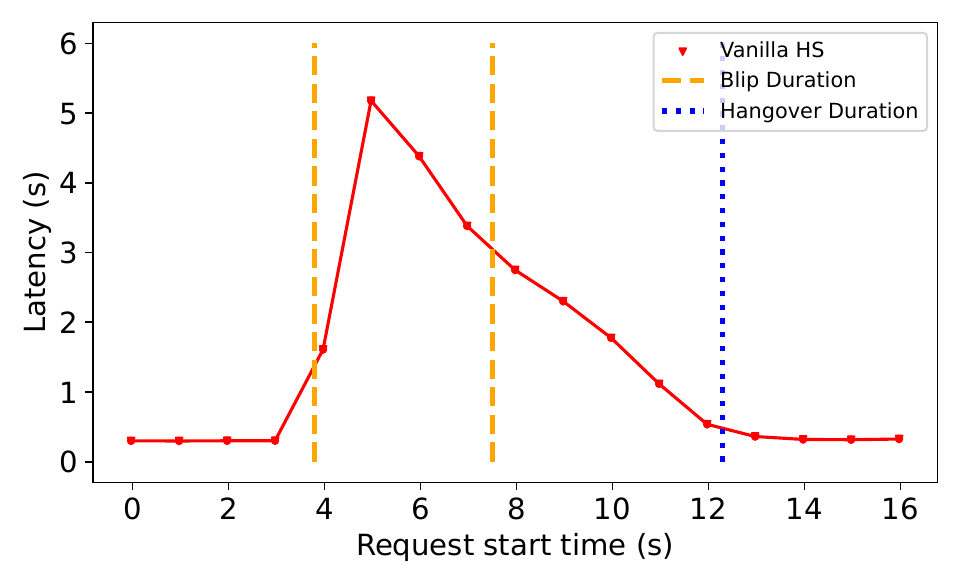}
   %{}
   \vskip -2pt
   \caption{Latency Hangover in HotStuff~\cite{yin2019hotstuff}.}%changebars{ caused by 2 consecutive leader failures (using a 1s timeout)}{}
   %\caption{Core Consensus coordination pattern. Messages may be piggybacked on existing cars (\S\ref{sec:ride_sharing}): P1 = \textsc{Prepare}, P2 = \textsc{Confirm}, C = \textsc{Commit}.}
   \label{fig:hangover}
   \vskip -2pt
\end{figure}

Simply proposing larger batches is not a viable option for reducing the number of \changebars{batches in the backlog}{required agreement instances}, as it comes at the cost of increased latency in the common case.
\fs{suggested re-write starting here (in changebars below)}
%Similarly, we confirm empirically in \S\ref{sec:eval} that a straightforward decoupling of data dissemination and consensus ordering logic (by proposing only a batch digest)~\cite{castro1999pbft, bagaria2019prism, danezis2022narwhal} cannot robustly scale throughput beyond the leader's dissemination capacity, even as it alleviates data transfers and transaction validity checks on the critical path of consensus.  \la{Try to see if the above makes sense to you. All the text that used to be here is still available, commented.}
% \fs{I think it reads well. Maybe we could say something about why it cannot? It turns out in practice that replicas go out of sync which causes replicas to transfer data on the timeout-critical path. Ironic kind of.} \fs{maybe instead of saying "even as it alleviates data transfers and validity check" we just say "cannot robustly scale beyond.., because replicas go out of sync and must fetch missing data on the timeout-critical path?}\fs{doesn't seem like we necessarily need to give a reason why it "would" help}
A naive decoupling of data dissemination and consensus ordering logic (by proposing only a batch digest)~\cite{castro1999pbft, bagaria2019prism, danezis2022narwhal} can help scale throughput, but is not robust: \one malicious proposers may overwhelm correct replicas with data, and \two replicas that are out of sync must fetch missing data (data synchronization) before voting, which is on the timeout-critical path of consensus. We demonstrate empirically in \S\ref{sec:eval} that such synchronization effects are common even in the absence of blips, and restrict throughput scalability. 

% \nc{Concerned people won't follow this}\fs{maybe too vague yeah.}, but does not address the principal concern.\fs{perhaps something along: such designs are not robust to faults (Byz replicas can flood the network without recourse), and spotty networks (replicas may be out of sync). We confirm empirically that BatchedHS fails, even in optimal conditions, to scale tput as far as DAGs/Autobahn. (because replicas must sync naively)  }%{but the fundamental complexity remains proportional to the backlog.} 
% \changebars{We confirm empirically in \S\ref{sec:eval} (BatchedHS) that such designs are unable to robustly scale throughput beyond the leader dissemination capacity.}{} 

Recent partially-synchronous DAG-based protocols~\cite{danezis2022narwhal, Bullshark}, in contrast, come {\em close} to seamlessness. %\fs{not all are seamless: the un-certified ones like Cordial are not} 
These protocols decouple data dissemination from the consensus ordering logic, and construct an asynchronously growing DAG consisting of batches of transactions that are chained together. The construction leverages Reliable Broadcast (RB) to generate proofs that transitively vouch for the availability of those batches and of all topologically preceding batches in the DAG. The consensus ordering logic can then commit entire backlogs with constant cost by proposing only a single batch. Commit complexity is thus \textit{independent} of blip duration. 

Current DAG protocols allow data dissemination to proceed at the pace of the network, rather than that of consensus, but do so at the price of slowing down consensus, and thus fall short of seamlessness. In particular, they still place data synchronization on the timeout-critical path (before voting), which may introduce protocol-generated blips; further, they need to recursively traverse the DAG backwards to infer all topologically preceding data. This may require fetching data for multiple sequential rounds, delaying commit.\footnote{In practice, DAGs minimize worst-case synchronization by forcing (a supermajority of) replicas to advance through rounds in lock-step.} 
%\la{Again, please check the above.}\fs{looks good -- only comment is the use "of data sync" without context perhaps}\fs{I've now defined data sync in the previous Hotstuff section, so should be fine.}
%\nc{Question: should we define timeout-critical path? I don't think it'll be obvious to people}\fs{maybe yes: In Overview we write "on the critical (voting) path of consensus"}\fs{tried to add it two paragraphs above: "must fetch missing data (data synchronization) before voting (i.e. on the timeout-critical path of consensus)" }
\fs{NOTE: Because DAGs move kind of in lock-step (they "slow" down how fast people can advance) they are usually fairly in-sync, and the above effects won't be as frequent (as for the naive BatchedHS); but at a principled level they remain.}

Besides, existing DAG protocols  exhibit during {\em good} intervals up to twice the latency of traditional BFT protocols. Consensus safety is hardwired into the structure of the DAG, and establishing it requires up to four rounds of RB (each comprising three message delays) -- for a total of 12 message delays. %\fs{do we need to say more about why it cannot be better? What would you say @Neil?} 

This paper asks: is it possible to develop a low-latency and high-throughput \textit{seamless} protocol? We answer this question in the positive with \sys{}, a novel BFT protocol. Autobahn instantly commits all backlogged proposals upon returning from a blip\changebars{, and without making consensus more susceptible to blips}{} (thus achieving seamlessness).
\nc{I'm nto sure we need the susceptible part, just for this question}
\section{\sys{}: Overview}
\label{sec:overview}
 \sys{}  comprises two logically distinct layers: a horizontally scalable data dissemination layer that always makes progress at the pace of the network\changebars{}{, even during blips}; and a low-latency,  partially synchronous consensus layer that tries to reach agreement on snapshots of the data layer. Decoupling consensus from data dissemination is not unique to \sys{}~\cite{castro1999pbft, bagaria2019prism, danezis2022narwhal}; its 
 %most DAG protocols and several protocols~\cite{castro1999pbft}.%\fs{is this the case?}\nc{pbft talks about it, was trying to be generous}\fs{really? I don't recall this. If so, maybe "some"? Or we could just reference things like Scalog "shared by DAG and prior CFT logs". I think the PRISM paper that BatchedHS simulates also does it.} 
 core innovation lies in more cleanly separating the responsibility of each layer while carefully orchestrating their interaction. Seamlessness requires two key properties:
\begin{itemize} 
\item \textit{Responsive transaction dissemination.} The data dissemination layer should proceed at the pace of the network. We say that a transaction has been successfully disseminated once it is delivered by one correct replica.%\la{Do you want it to be transactions?}
%\fs{a bit tricky: Is enforcing a quorum (f+1) at network "pace"? (if you can only reach 1 replica, should you be able to succeed?). Maybe: pace = rate at which ALL correct replicas can communicate? If there is a partition, we thus don't need to be able to do anything. The fact that we can in SOME partitions is strictly a bonus}
%\fs{should we say "received"? or "available for proposal" to leave us some room to later say it's only successful once a correct replica got a CERTIFICATE for the data. At least that's how our protocol works. (On paper we could ofc allow replicas to consider something disseminated once it's locally available (that is what our optimistic tip optimization does), but that won't end up being seamless (blocking data sync, or the leader has to pessimistically forward everything it proposes).}
%\nc{I meant received for receive to include that. Available for voting feels too detailed. I'd be ok with the imprecision}
 \item \textit{Streamlined Commit.} During good intervals, all successfully disseminated proposals should be committed in bounded time  and with cost and latency independent of the number of transactions to commit. 
\end{itemize} 
\fs{I'm wondering if this paragraph is even necessary? It basically re-states the above?}\nc{Removed}

%%%%%%%%%%5 Trying to re-write:
Implementing a responsive data dissemination layer may appear  straightforward: simply broadcast all transactions to all replicas in parallel. This naive strategy unfortunately makes it extremely challenging to implement a consensus layer that streamlines commit. \fs{maybe the next two sentences can be cut?} The consensus layer would have to individually propose (references to) all disseminated transactions, which would either require replicas to fetch and verify locally missing transactions (data synchronization) prior to voting in consensus, or risk breaking liveness in the presence of Byzantine proposers.
Achieving seamlessness and low latency instead requires injecting the right amount of \changebars{coordination}{structure} in the data layer's design: too little and the consensus layer cannot be designed to streamline commit; too much, and the good-case latency may spike. 
%\neil{To me structure doesn't seem like the right word to use here, "too much structure hurts latency" is very vague to me. I think synchronization is the better word here since it shows the contrast between the two extremes: DAGs ensure n-f are synchronized in a round, whereas in the fully parallel approach there is no synchronization at all}
%\nc{Happy to go from structure to synchronisation}\fs{I agree on paper. But "synchronization" is overloaded in our context to refer to data fetching. I'm worried this will be confusing. Structure may still be the wrong word, since it's not the case that we have "less structure"; it's just less "synchronized" like Neil aptly put it. Tentatively changed it to "coordination"}
We find that the dissemination layer must specifically support \textit{instant referencing}: it must allow the consensus layer to uniquely identify, and prove the availability of, the set of disseminated transactions. This must be done in constant time and space (w.r.t. to blip length), and without necessarily being in possession of all data. Note that the bounded time requirement implicitly requires that there be no data synchronization  on the critical (voting) path of consensus, as this could lead to timeout violations, resulting in further blips (\textit{non-blocking sync}). Moreover, any data syncing should finish by the time consensus commits \changebars{to avoid delays that may cause hangovers (\textit{timely sync}).}{(\textit{timely sync})--in practice, in a single round-trip--to avoid delays that may cause hangovers. }

These principles shape \sys{}'s data dissemination and consensus layers.
\par \textbf{Data dissemination.} 
All replicas in \sys{} act as proposers, responsible for disseminating transactions received by clients. Each replica, in parallel, broadcasts batches of transactions (data proposals) and constructs a local chain (a \textit{data lane}) that implicitly assigns an ordering to all of its data proposals. This structure allows us to transitively prove the availability of all data proposals in a chain in constant time, thus guaranteeing \textit{instant referencing}. Data lanes grow independently of one another, subject only to load and resource capacity, and are agnostic to blips (guaranteeing responsiveness). 
Finally, \sys guarantees that all successfully disseminated data proposals will commit during good intervals (\textit{reliable inclusion}); this holds true even for Byzantine proposers, and ensures that replicas cannot "wastefully" disseminate transactions without intending to commit them.

\par \textbf{Consensus.} The consensus layer exploits the structure of data lanes to, in a single shot, commit an arbitrarily large data lane state. \sys efficiently summarizes data lane state as a cut containing only each replica's latest proposal (its \textit{tip}), using instant referencing to uniquely identify a lane's history. 
Consensus itself follows a classical (latency optimal) PBFT-style coordination pattern~\cite{castro1999pbft, jolteon-ditto} consisting of two round-trips; during gracious intervals, \sys{}'s \textit{fast path} commits in only a single round-trip.
\iffalse
After committing to a tip cut, a replica can infer the respective chain-histories of each tip and synchronize on any missing data \changebars{(off the critical path, and in a single round-trip)}{}.  %To ensure that data synchronization cannot stall consensus progress, \sys{} carefully constructs voting rules in the data dissemination layer to ensure that synchronisation need only happen \textit{after} agreement, and all required data can be requested in a single round-trip (\textit{non-blocking synchronisation}).
\nc{I was a fan of the sentence that's commented out. Emphasises again how important the interplay is}
Replicas may be missing transactions included in the cut, as these transactions might have been replicated to a different set of nodes. 
Once a replica is in possession of the data included in each tip's history data, it deterministically interleaves the $n$ data lanes to construct a single total order.
\fi
\fs{brought back with some edits}
After committing to a tip cut, a replica can infer the respective chain-histories of each tip and synchronize on any locally missing data (some transactions in the cut may not have been replicated to all nodes). 
To ensure that data synchronization doesn't stall consensus progress, \sys{} carefully constructs voting rules in the data dissemination layer to allow synchronization to complete in parallel with agreement, and in a single round-trip.

Once a replica is in possession of the data included in each tip's history data, it deterministically interleaves the $n$ data lanes to construct a single total order.

This design allows a single consensus proposal to \one reach throughput that scales in the number of replicas, \two reach agreement with cost independent of the size of the data lanes, and \three preserve the latency-optimality of traditional BFT protocols. %\changebars{}{preserve, in good intervals, ...}\fs{traditional BFT doesn't make progress in bad intervals, so no point saying this?} 
We describe the full \sys{} protocol in \S\ref{sec:protocol}.
\section{Model}
\label{sec:model}

\sys{} adopts the standard assumptions of prior BFT work \cite{castro1999pbft, yin2019hotstuff, Bullshark}, positing $n = 3f+1$ replicas, at most $f$ of which are faulty.
We consider a participant (client or replica) correct if it adheres to the protocol specification; a participant that deviates is considered faulty (or Byzantine).  We make no assumptions about the number of faulty clients.
We assume the existence of a strong, yet static adversary, that can corrupt and coordinate all faulty participants' actions; it cannot however, break standard cryptographic primitives. 
Participants communicate through reliable, authenticated, point-to-point channels. We use $\langle m \rangle_r$ to denote a message $m$ signed by replica $r$. We expect that all signatures and quorums are validated, and that external data validity predicates are enforced; we omit explicit mention of this in protocol descriptions.

\sys{} operates under partial synchrony~\cite{dwork1988consensus}: it makes no synchrony assumptions for safety, but guarantees liveness only during periods of synchrony~\cite{fischer1985impossibility}. \fs{should we say that we assume no known Delta for our system? It is purely a proof technique, and equivalent. "There exists an upper bound delta but is NOT known to the system."}\nc{I think its ok}\la{I think so too.}
%we assume the existence of an unknown upper bound Δ on the communication delay, and an unknown Global Stabilization Time (GST) after all messages will arrive within Δ 

%it cannot prevent authenticated Byzantine clients, who otherwise follow the protocol, from overwriting correct clients’ data. It additionally assumes that, collectively, Byzantine and correct clients have similar processing capabilities, and thus Byzantine clients cannot cause a denial of service attack by flooding the system

%%%%%%%%%%%%%%%%%%%%%%%%%%%%%%%5
\iffalse

Define effective latency? end-to-end; consensus latency isn't the only thing that matters, but sequential latency is important.

\fi

%-------------------------------------------------------------------------------
%   PROTOCOL DETAILS
%-------------------------------------------------------------------------------
%\input{Sections/3-System/Notes_And_Discussions/autobahn_design}
\section{\sys{}: The Protocol}
%\section{Introducing \sys{}}
\label{sec:protocol}

To describe \sys{} more thoroughly, we first focus on the data dissemination layer (\S\ref{sec:datalayer}), which guarantees responsive data dissemination, even during blips.  We then focus on how the consensus layer (\S\ref{sec:core_consensus}) leverages the data \changebars{layer's}{layers} lane structure to efficiently commit arbitrarily long data lanes with cost and latency independent of their length, thus seamlessly recovering from blips.

\subsection{Data dissemination}
\label{sec:datalayer}

\begin{figure*}[h!]
\centering
\begin{minipage}{.5\textwidth}
  \centering
  \includegraphics[width=\linewidth]{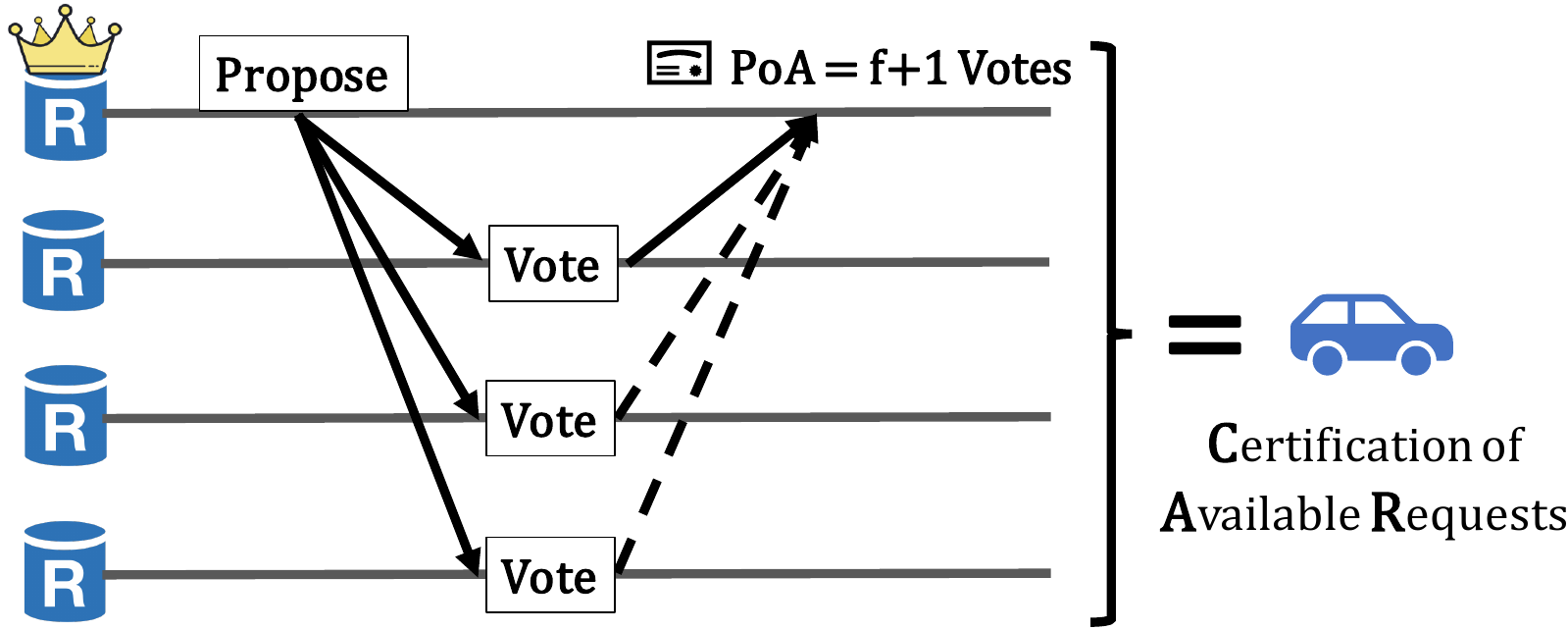}
  \vskip 8pt
  \caption{Car protocol pattern.}
  \label{fig:car}
\end{minipage}%
\begin{minipage}{.5\textwidth}
  \centering
  \includegraphics[width=.8\linewidth]{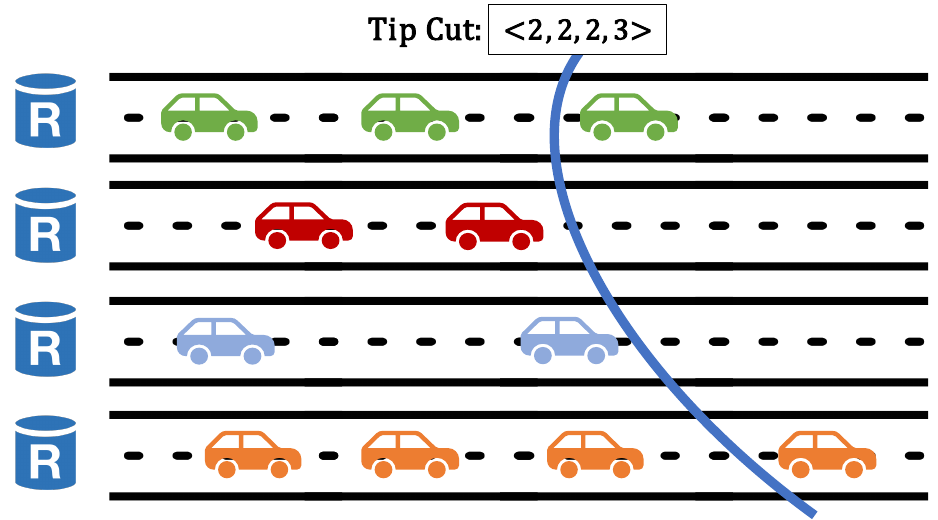}
  \vskip 0pt
  \caption{\sys{} parallel data lanes; example cut.}
  \label{fig:car_lanes}
\end{minipage}
\vskip -5pt
\end{figure*}

%%%%%%%%%%%%%%%%% Integrated version
%\sys{}'s data dissemination layer strives to precisely provide three properties: asynchronous growth, reliable inclusion, and instant referencing. We go in detail for each in the following.. Crucially, it provides \textit{no additional} gauarantees, as any additional property can only be achieved at the cost of increased tail latency

\sys{}'s data dissemination layer offers the necessary structure for the consensus layer to ensure a streamlined commit. It precisely provides to consensus the three properties of instant referencing, non-blocking sync and timely-sync. Crucially, it provides no stronger guarantee, as additional properties may cause tail latency to increase. 
%\fs{I like guaranteed commit/or reliable "inclusion". It somewhat ties into the no-orphaning property that we get} \fs{to be a bit more precise: It's not inclusion from the perspective of an honest proposer that we are worried about (the honest will eventually become leader); it's about honest replicas knowing that any data they hold onto WILL be committed, and thus is not just taking up mempool space. (because cars are certified and chained, we know that for all past proposals they will be committed (the newest proposal we don't know until we get the cert, but that's only a small bounded buffer one needs to keep}
\par \textbf{Lanes and Cars.} In \sys{}, every replica acts as a proposer, continuously batching and disseminating incoming transactions. Each replica proposes new batches of these transactions (data proposals) at its own rate, and fully independently of other replicas — we say that each replica operates in its own \textit{lane}, and as fast as it can. 
%\nc{This is a nit: we say that proposers can propose as fast as they want, but in order to certify they are dependent on f people replying. I'm wondering if people will claim that this means it's not fully decorrelated?}
%\fs{we do write "independent" of other replicas proposals? I.e. Proposals don't have to wait for others. Voting is a purely responsive action. }
%\la{We'll need to finesse this, because I think natacha has a point.}\fs{IMO "as fast as IT can" speaks to the replicas capabilities. I write somewhere that this means a) it's input rate, b) it's bandwidth capacity, c) it's point to point latencies. Yes, other replicas are involved, but the speed depends on THIS replicas capacities (unless you are worried about the voting replicas CPU capacity)}

%\fs{TODO: Explain how f+1 is not only a PoA, but also a Proof of Eventual commitment. One correct replica will propose this position in it's cut. This is what BatchedHS does not have for instance, and why it has to hard cap some mempool limited of uncommitted batches.}
Within its lane, each replica leverages a simple Propose and Vote message pattern: the proposer broadcasts data proposals to all other nodes, and replicas vote to acknowledge delivery. $f+1$ vote replies constitute a Proof of Availability (\textsc{PoA}) that guarantees  at least one correct replica is in possession of the data proposal and can forward it to others if necessary.
We call this simple protocol pattern a \textit{car} (Certification of Available Request), illustrated in Figure \ref{fig:car}.
Note that \sys{}, unlike most DAG-BFT protocols, does not force all data proposals to go through a round of \textit{reliable broadcast}~\cite{danezis2022narwhal,Bullshark}, which must process $n-f$ votes in order to achieve non-equivocation. Reliable, non-equivocating broadcast is not necessary in \sys{} to achieve the desired data layer properties  (\S\ref{sec:consensus_on_lanes}).

A lane is made up of a series of cars that are chained together (Fig. \ref{fig:car_lanes}). A proposer, when proposing a new data proposal, must include a reference to its previous car's data proposal. Similarly, replicas will vote for a car at position $i$ if and only if its proposal references a proposal for car $i-1$, which the replica has already received and voted on. 
This construction ensures that a successful car for block $i$ transitively proves the availability of a car for all blocks $0$ to $i-1$ in this proposer's lane. Validating the head of the lane is thus sufficient to reference arbitrary lane state (enabling instant referencing) and confirm that it has been disseminated (necessary for non-blocking sync). Requiring replicas to always vote \textit{in-order} further ensures that at least one correct replica is in possession of all proposals in this lane, allowing the consensus layer to (\S\ref{sec:synchronization}) to experience timely-sync.

\par \textbf{Protocol specification.} Each replica maintains a local view of all lanes. %mapping $lanes: r \rightarrow [pos \rightarrow (prop, PoA)]$, where \textit{pos} specifies a car's position in a lane
%\nc{Not super clear what you're trying to store here, a mapping from replica ids to the list of (proposal and PoA) received? In this case are you referring the PoA of the previous proposal (that certifies your right to vote for the batch), or are you talking about the POA that actually will eventually certify this block}\fs{the latter}. 
We call the latest proposal of a lane a tip; a tip is certified if its car has completed (a matching \textsc{PoA} exists). %we say it is certified if its car has completed (PoA exists), and optimistic otherwise.

\vskip -4pt
\begin{algorithm}
\caption{Data layer cheat sheet.}\label{alg:data-state}
\begin{algorithmic}[1]
\State{$car$}~~~~~~~~$\triangleright$ \textit{Propose} \& \textit{Vote} pattern
\State{\textsc{PoA}}~~~~~~$\triangleright$ $f+1$ matching \textsc{Vote}s for proposal
\State{$pass$}~~~~~~$\triangleright$ passenger: (proposal, Option(\textsc{PoA}))
\State{$pos$}~~~~~~~~$\triangleright$ car position in a lane (proposal seq no)
%\State{$parent$}~~~~~~~~~~~~$\triangleright$ digest of proposal at $pos-1$.

\State{$lane$}~~~~~~~$\triangleright$ map: [pos $\rightarrow$ passenger] \label{var:lane}
\State{$lanes$}~~~~~$\triangleright$ map: [replica $\rightarrow$ lane] \label{var:lanes}
\State{$tip$}~~~~~~~~~$\triangleright$ last passenger in a lane ($pos = lane.length$) 
%\State{$//$ \textit{only the leader of view $v_r+1$}}
\end{algorithmic}
\end{algorithm}
\vskip -0pt

\par \protocol{\textbf{1: P} $\rightarrow$ \textbf{R}: Replica $P$ broadcasts a new data proposal \textsc{Prop}.}
$P$ assembles a batch $B$ of transactions, and creates a new data proposal for its lane $l$. %\coloneqq lanes[P]$ \nc{I would just write "its lane l"?}. 
It broadcasts a message $\textsc{Prop} \coloneqq (\langle pos, B, parent \rangle_{P}, cert)$ where $pos \coloneqq l.length+1$ (the position of the proposal in the lane), $parent \coloneqq h(l.tip.prop)$ (the hash of previous lane tip), and $cert \coloneqq l.tip.\textsc{PoA}$ (the availability proof of the previous proposal).

%\changebars{}{($lanes[P][pos].empty$)}
\par \protocol{\textbf{2: R} $\rightarrow$ \textbf{P}: Replica $R$ processes \textsc{Prop} and votes.}
$R$ checks whether the received \textsc{Prop} is valid: it checks that  \one it has already voted for the parent of this proposal  $(h({\mathit lanes}[P][pos-1].prop) == {\mathit parent})$ and \two it has not voted for this lane position before.
 $R$ then stores the parent's \textsc{PoA} received as part of the message as the latest certified tip $(\textit{lanes}[P][{\mathit pos-1}].\textsc{PoA} = {\mathit cert})$, and the current proposal as latest optimistic tip in $\mathit{lanes}[P][{\mathit pos}].{\mathit prop}$ $= \textsc{Prop}$. 
 Finally, it replies with a $\textsc{Vote} \coloneqq \langle dig=h(\textsc{Prop}), pos \rangle_R$.
If $R$ has not yet received the parent proposal\fs{ for \textsc{Prop}}, it buffers the request and waits. 

\par \protocol{\textbf{3: P} $\rightarrow$ \textbf{R}: Replica $P$ assembles \textsc{Vote}s and creates a \textsc{PoA}.}
The proposer aggregates $f+1$ distinct matching \textsc{Vote} messages  into a $\textsc{PoA} \coloneqq (dig, pos, \{\sigma_r\})$, which it will include in the next car; if no new batch is ready then $P$ can choose to broadcast the \textsc{PoA} immediately. 

Cars enforce only light (but sufficient) structure on the data layer. The number and timing of cars in each lane may differ; they depend solely on the lane owner's incoming request load and connectivity (bandwidth, point-latencies). 
Cars also do not preclude equivocation. While lanes governed by correct replicas will only ever consist of a single chain, Byzantine lanes may fork arbitrarily. This is by design: non-equivocation is not a property that needs to be enforced at the data layer and causes unnecessary increase in tail latency. 

\fs{TODO: One can pipeline/overlap cars to squeeze out even more latency -- this comes at a tradeoff with bounded wastage: waiting for the preceeding car's certificate guarantees there is only 1 optimistic/uncertified proposal at any time. If one sends them earlier, then there can be more. Can use the same pipelelining optimization from consensus: allow k uncertified cars to start (only start position p when p-k is certified. NOTE: this also affects optimistic tip optimisation: There might be up to k uncertified tips that need to be synced on before voting -- these can be synced in one go (1 RT) from the leader though: the leader may only propose tips for which it either has all parents, or a certificate. E.g. if it proposes p=3, and only p=1 has a cert, then it must have locally seen at least p=2 and p=3 so it can supply them. Basically, leader must ensure "local availability" property}

\subsection{From Lanes to Consensus}
%\subsection{Layering partially sync consensus.}
\label{sec:consensus_on_lanes}
%Motorway -- layering partially sync consensus atop async dissemination.
\fs{reads a bit redundant}
\nc{agreed, but I thought the redundancy was important to really hammer the properties down}
As stated, \sys{}'s data dissemination layer guarantees responsive data dissemination, instant referencing, and non-blocking, timely sync, but intentionally provides no consistency guarantee on the state observed by correct replicas. For instance, different replicas may observe inconsistent lane states (neither a prefix of another). 

The role of the consensus layer in \sys{} is then to reconcile these views and to totally order all certified data proposals. To ensure seamless recovery from blips, the commit process should be \textit{streamlined}: the consensus layer should be able to commit, in a single shot, an arbitrarily large number of certified data proposals. 
Unlike traditional BFT consensus protocols, that take as input a batch of transactions, a \textit{consensus proposal} in \sys{} consists of a snapshot \textit{cut} of all data lanes. 
%(akin to cite{ding2020scalog, gao2022dumbo}).}\fs{added scalog and dumbo-NG citation; though we weren't actively inspired by that}\nc{if you prefer to only have this in the related work, that's ok with me}
Specifically, a consensus proposal contains a vector of $n$ \textit{certified} tip references, as shown in 
Figure \ref{fig:car_lanes}.

This design achieves two goals: \one committing a cut of all the lane states allows \sys to achieve consensus throughput that scales horizontally with the number of lanes ($n$ total), while \two proposing certified lane tips allows \sys{} to commit an arbitrarily large backlog with complexity independent of its size. We discuss next how \sys{} reaches consensus on a cut (\S\ref{sec:core_consensus}), and how it carefully exploits the lane structure to achieve seamlessness (\S\ref{sec:synchronization}). %\fs{should we also mention the lane coverage/reliable inclusion part? "how we configure when to run consensus"/"when lanes will be proposed? Should the lane coverage section move earlier, perhaps not as self standing subsection, but just as paragraphs inside core consensus?}

% \one \textbf{Chaining} lanes allow a tip to uniquely identifies its lane's history, allowing \sys{} to commit (arbitrarily long) lane backlogs with constant proposal size. \nc{But due to forking there can be multiple tips for the same lane. Phrasing is maybe slightly confusing here?} \fs{true, but that's an orthogonal point no? The whole point is that we only need to look backwards from A tip}\la{I am with Florian on this one.}
% \two \textbf{Certified tips} allow replicas to transitively prove the availability of a lane history \textit{without} observing the history themselves. This allows replicas with locally inconsistent lane states to vote for consensus proposals, and move synchronization off the critical path. \la{I don't understand this last sentence -- just me I am sure, but I'd love to go over it, Florian.}\fs{go for it! (in general feel free to make changes!! just highlight which parts changes (so I can identify) and put the old text in comments}
% \three Finally, \textbf{FIFO voting} enforces that a single honest replica must be in possession of \textit{all} data in a certified tip's history. \sys{} leverages this invariant to guarantee that all honest replicas can reconcile states in a single round-trip.

\subsubsection{Core consensus.}
\label{sec:core_consensus}

\sys{}'s consensus protocol follows a classic two-round (linear) PBFT-style~\cite{castro1999pbft} agreement pattern, augmented with a fast path that reduces latency to a single round (3mds) in gracious intervals.
%\sys{}'s architecture allows for a flexible choice in implementing the precise consensus mechanism. In theory, any consensus protocol (including asynchronous ones) can be employed. Because \sys{} aims to minimize latency, we opt to design a mechanism that follows a classic two-round (linear) PBFT-style~\cite{castro1999pbft, gueta2019sbft} \fs{(or Jolteon~\cite{jolteon-ditto})} agreement pattern, and augment it with a Fast Path inspired by Zyzzyva~\cite{kotla10zyzzyva} and SBFT~\cite{gueta2019sbft} that reduces latency to a single round in gracious intervals. 
%\nc{I'm not sure that we necessarily want to start with "we can use any off the shelf consensus we want". Seems like it weakens the contribution a little bit to introduce our protocol that way?. It also makes the transition to "adopts a paxos style slot based consensus model a bit harder}\nc{Old version in omments}
%\sys{} uses a paxos-style slot-based consensus model in which leaders are assigned to a slot $s$, and can start..
The protocol progresses as a series of slots: each slot $s$ is assigned to a leader that can start proposing a new lane cut once slot $s-1$ commits. Within each slot, \sys{} follows a classical view based structure~\cite{castro1999pbft} -- when a slot's consensus instance fails to make timely progress, replicas revolt, triggering a \textit{View Change} and electing a new leader. 

Each view consists of two phases: \textit{Prepare} and \textit{Confirm} (5mds total). The Prepare phase tries to achieve agreement within a view (non-equivocation) while the Confirm phase enforces durability across views (Fig. \ref{fig:consensus}). During gracious intervals the \textit{Confirm} phase can be omitted (Fast Path), reducing consensus latency down to a single phase (3mds). 

\begin{figure}[h!]
    \vskip -8pt
   \centering
   \includegraphics[width=0.45\textwidth]{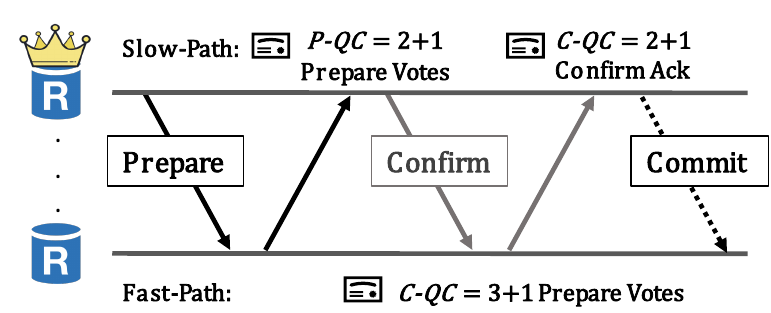}
   \caption{Core Consensus coordination pattern.}
   %\caption{Core Consensus coordination pattern. Messages may be piggybacked on existing cars (\S\ref{sec:ride_sharing}): P1 = \textsc{Prepare}, P2 = \textsc{Confirm}, C = \textsc{Commit}.}
   \label{fig:consensus}
   \vskip -0pt
\end{figure}

For simplicity, we first present our protocol for a single slot. We assume that all replicas start in view $v=0$. We explain \sys{}'s view change logic in \S\ref{sec:view_change}, and discuss in \S\ref{sec:parallel} how to orchestrate agreement for slots in parallel to reduce end-to-end latency.  We defer formal proofs of correctness and seamlessness to the Appendix, \ref{sec:proofs}.%(\ref{sec:proofs}:1,2,3).
%For simpplicity, we first discuss consensus for only a \textbf{single} slot $s$ during a  good interval; we therefore omit explicit mention of slot-number and view in most of the discussion.

%\fs{added this cheat sheet. Feel free to move it around to where it feels best, or to cut parts (or the entirety)}

% \vskip -5pt
% \begin{algorithm}[!h]
% \caption{Consensus layer data structures.}\label{alg:consensus-state}
% \begin{algorithmic}[1]
% \State{\textit{prop}}~~~~~~~~~~~~~~~~~~$\triangleright$ map: slot $\rightarrow$ $P_{s,v}$ 
% \State{\textit{conf}}~~~~~~~~~~~~~~~~~~$\triangleright$ map: slot  $\rightarrow$ \textsc{PrepareQC}$_{s, v}$ 
% %\State{\textit{current-view}}~~~~~~$\triangleright$ map: slot  $\rightarrow$ view 
% \State{\textit{last-commit}}~~~~~~~$\triangleright$ map: lane $\rightarrow$ pos \label{var:lane}
% \State{$log$}~~~~~~~~~~~~~~~~~~~~~$\triangleright$ total order of all user requests
% %\State{$//$ \textit{only the leader of view $v_r+1$}}
% \end{algorithmic}
% \end{algorithm}
% \vskip -5pt

\vskip -0pt
\begin{algorithm}
%\fs{WARNGING: IF YOU REMOVE IT FULLY, THEN NEED TO ADD BACK DEFINITION OF LAST-COMMIT in 5.2.2}
\caption{Consensus layer cheat sheet.}\label{alg:consensus-state}
\begin{algorithmic}[1]
\State{$CommitQC_s$}~~~~~~$\triangleright$ Commit proof for slot $s$
\State{$TC_{s, v}$}~~~~~~~~~~~~~~~~~$\triangleright$ View change proof for slot $s$, view $v$
\State{Ticket  $T_{s,v}$}~~~~~~~~~$\triangleright$ $\textsc{CommitQC}_{s-1}$ or $\textsc{TC}_{s, v-1}$
\State{Proposal  $P_{s,v}$}~~~~~$\triangleright$ Proposed lane cut. Requires $T_{s,v}$
\State{\textit{prop}}~~~~~~~~~~~~~~~~~~$\triangleright$ map: slot $\rightarrow$ $P_{s,v}$ 
\State{\textit{conf}}~~~~~~~~~~~~~~~~~~$\triangleright$ map: slot  $\rightarrow$ \textsc{PrepareQC}$_{s, v}$ 
%\State{\textit{current-view}}~~~~~~$\triangleright$ map: slot  $\rightarrow$ view 
\State{\textit{last-commit}}~~~~~~~$\triangleright$ map: lane $\rightarrow$ pos \label{var:lane}
\State{$log$}~~~~~~~~~~~~~~~~~~~~~$\triangleright$ total order of all user requests
%\State{$//$ \textit{only the leader of view $v_r+1$}}
\end{algorithmic}
\end{algorithm}
\vskip -0pt

%The protocol proceeds as follows:
\par \textbf{Prepare Phase (P1).}

\par \protocol{\textbf{1: L} $\rightarrow$ \textbf{R}: Leader $L$ broadcasts \textsc{Prepare}.}
A leader $L$ for slot $s$ begins processing (view $v=0$) once it has received a ticket $T \coloneqq \textsc{CommitQC}_{s-1}$ (Alg.~\ref{alg:consensus-state}), and ``sufficiently'' many new data proposals to propose a cut that advances the frontier committed in $s-1$; we coin this requirement \textit{lane coverage}, and defer discussion of it to \S\ref{sec:coverage}.

The leader broadcasts a message $\textsc{Prepare} \coloneqq (\langle P \rangle_L, T) $ that contains the \textit{consensus proposal} $P$ as well as the necessary ticket $T$. The proposal $P$ itself consists of the slot number $s$, the view $v$, and a cut of the latest certified lane tips observed by $L$:
$P \coloneqq \langle s, v, [lanes[1].tip.\textsc{PoA}, ..., lanes[n].tip.\textsc{PoA}]\rangle$.
%Replicas that receive a Prepare message underwrite its validity if it's the first they have seen in the view. 
\par \protocol{\textbf{2: R} $\rightarrow$ \textbf{L}: Replica $R$ processes \textsc{Prepare} and votes.}
$R$ first confirms the \textsc{Prepare}'s validity: it checks that \one the ticket is for the right leader/slot), and \two $R$ has not yet voted for this slot. $R$ then stores a copy of the proposal ($\textit{prop}[s] = P$), and responds with a $\textsc{Prep-Vote} \coloneqq \langle dig=h(P) \rangle_R$. Finally, if it has not locally received all tips (and their histories) included in $P$, $R$ begins to asynchronously fetch all missing data (\S\ref{sec:synchronization}). 
This is possible thanks to the non-blocking sync guarantee of the data dissemination layer: $R$ knows that the data will be available, and thus need not sync before voting.
%\changebars{Note that, unlike current DAG protocols, this step happens after (in parallel with) the replica's vote (non-blocking sync), and thus cannot result in blips.}{}
%\fs{cut this entirely?}\changebars{Note that, unlike current DAG protocols, this step happens after the replica's vote. It thus cannot stall consensus and result in blips. This is thanks to the data dissemination layer's non-blocking sync guarantee: R knows that the data will be available.}{Unlike existing DAG protocols, Autobahn's data layer in-order voting guarantee ensures that the missing data can be fetched in a single round-trip.}\fs{critical path sync is not fundamental to certified DAGs, so I do not want to berate this. The unique thing we offer is the single-shot sync (that they cannot do). Check out the Appendix discussion for the intuition}
% \fs{suggestion: Note that, unlike current DAG protocols, this step happens after (in parallel with) the replica's vote (non-blocking sync), and thus cannot result in blips.}
% \nc{I would prefer to keep the intuition and the link between the layer guarantees}

\par \protocol{\textbf{3: L}: Leader $L$ assembles quorum of PREP-VOTE messages.}
$L$ waits for at least $n-f = 2f+1$ matching \textsc{Prep-Vote} messages and aggregates them into a \textit{Prepare Quorum Certificate (QC)}
$\textsc{PrepareQC}$ $\coloneqq$ ($s$, $v$, $dig$, $\{\textsc{Prep-Vote}\}$). This ensures agreement \textit{within} a view. No two \textsc{PrepareQC}s with different $dig$ can exist as all \textsc{PrepareQC}s must intersect in at least one correct node, who will never vote more than once per view.

\par \textbf{Fast Path.} In the general case, the Prepare phase is insufficient to ensure that a particular proposal will be persisted across views. During a view change, a leader may not see sufficiently many \textsc{PREP-VOTE} messages to repropose this set of operations. In gracious intervals, however, the leader $L$ can receive $n$ such votes by waiting for a small timeout beyond the first $n-f$. A quorum of $n$ votes directly guarantees durability across views (any subsequent leader will observe at least one \textsc{PREP-VOTE} for this proposal). No second phase is thus necessary.
%During gracious intervals $L$ may receive $n$ votes by waiting for a small timeout beyond the first $n-f$. A Quorum of $n$ votes guarantees durability, thus allowing it to proceed on the FastPath.
In this scenario, $L$ upgrades the QC into a \textit{fast commit quorum certificate} and proceeds directly to the Commit step, skipping the confirm phase: it broadcasts the \textsc{CommitQC}, and commits locally. 
\par \protocol{\textbf{4 (Fast): L} $\rightarrow$ \textbf{R}: $L$
upgrades to \textsc{CommitQC} and commits.}
\vspace{2pt} 
\par \textbf{Confirm Phase (P2).} In the slow path, the leader $L$ instead moves on to the Confirm phase.
\par \protocol{\textbf{1: L} $\rightarrow$ \textbf{R}: Leader $L$ broadcasts \textsc{Confirm}.}
$L$ forwards the $\textsc{PrepareQC}_{s,v}$ by broadcasting a message $\textsc{Confirm} \coloneqq \langle PrepareQC_{s,v}  \rangle$ to all replicas.
\par \protocol{\textbf{2: R} $\rightarrow$ \textbf{L}: Replica $R$ receives and acknowledges \textsc{Confirm}.}
%$R$ validates that $PrepareQc.v \geq current-view$.
$R$ returns an acknowledgment $\textsc{Confirm-Ack} \coloneqq \langle dig \rangle_R$, and buffers the \textsc{PrepareQC} locally ($\textit{conf}[s] = \textsc{PrepareQC}$) as it may have to include this message in a later view change.

\par \protocol{\textbf{3: L} $\rightarrow$ \textbf{R}: Leader $L$ assembles \textsc{CommitQC} and commits.}
$L$ aggregates $2f+1$ matching \textsc{Confirm-Ack} messages into a \textit{Commit Quorum Certificate} $\textsc{CommitQC}$ $\coloneqq$ ($s$, $v$, $dig$, \{\textsc{Confirm-Ack}\}), and broadcasts it to all replicas.

%Once a replica has committed a proposal it can try to establish a total order for the proposals subsumed by the cut.
 %This concludes the agreement process for a single slot during a good interval. Section \S\ref{sec:view_change} describes the failure handling process (View Change). Multi-slot agreement follows naturally via the $CommitQC$ ticket design: a new consensus slot begins when the previous slot commits a proposal. We discuss in \S\ref{sec:parallel} how to orchestrate agreement for slots in parallel to reduce end-to-end latencies.
%

\subsubsection{Processing committed cuts.} 
%\subsubsection{Realizing seamless consensus}
%\subsubsection{Seamless data synchronization and commit}
\label{sec:synchronization}

Recall that consensus proposals in \sys{} are cuts of data proposals, rather than simple batches of transactions. Upon committing, the replicas must then establish a (consistent) total order of proposals to correctly execute transactions. To help with this process, each replica maintains a log of all ordered transactions, and a map \textit{last-commit} which tracks, for each lane, the position of the latest committed data proposal (Alg. \ref{alg:consensus-state}).

%%%%%%%%%%%% NEW
\par \textbf{Seamless data synchronization.}\label{sec:seamless-sync}
To process a cut, a replica must first ensure that, for each lane, it is in possession of all data proposals that are transitively referenced by the respective tip (\textit{tip history} for short). Otherwise, it must first acquire (\textit{synchronize}) the missing data; in the worst-case, this may be the entire history. During a partial partition, for instance, lanes may grow without all correct replicas being aware, requiring them to later synchronize on missing data. %\fs{Maybe say this could happen with partition (instead of Byz behavior)? Our partition experiment will show this.}\fs{suggestion: During a partial partition, for instance, lanes may grow without all correct replicas being aware, requiring them to synchronize on an arbitrarily long history.} 

%\par \textbf{Fast, bounded synchronization.} 
%\sys{} leverages the three lane properties (chaining, FIFO voting, certified tips)
\sys{} exploits the lane structure to synchronize in parallel with consensus, and in just a single message exchange, no matter the size of the history.
First, certified tips allow replicas to transitively prove the availability of a lane history \textit{without} directly observing the history. This allows replicas with locally inconsistent lane states to vote for consensus proposals without blocking (\textit{non-blocking sync}), and moves synchronization off the timeout-critical path. If \sys{} employed only best-effort (non-certified) lanes, replicas would have to synchronize \textit{before} voting on a consensus proposal to assert whether a tip (and its history) indeed exists. Such blocking increases the risk of timeouts undermining seamlessness.
%We discuss in \S\ref{sec:optimistic_tips} how to optimistically relax the certification requirement in favor of reduced latency.
%\nc{Removed references to optimistic tip}
Second, in-order voting in the data layer enables \textit{timely sync} by guaranteeing that there exists at least one correct replica that is in possession of \textit{all} data proposals in a certified tip's history.

To synchronize, a replica requests (from the replicas that voted to certify the tip) to SYNC all proposals in the tip's history with positions greater than its locally last committed lane proposal (up to $\textit{last-commit}[tip.lane].pos +1$). 
By design, lane positions must be gap free, and proposals chained together; the requesting replica thus knows \one whether a SYNC-REPLY contains the correct number of requested proposals, and \two whether the received proposals form a correct (suffix of the) history. By correctness of consensus, all correct replicas agree on the last committed lane proposal, and thus synchronize on the same suffix.

\par \textbf{Creating a Total Order.}
Once a replica has fully synchronized the cut ($\langle tips \rangle$) committed in slot $s$, and all previous slots have been committed and processed, it tries to establish a total order across all ``new'' data proposals subsumed by the lane tips. 
%\nc{for all the commands included in the cut + why does it need to establish a total order? We should probably say}\fs{sentence already starts with all proposals subsumed by the cut? Add sth like "cut represents n independent lane-logs; need to combine them to arrive at one?". Not sure the "need for a TO" needs to be justified?}. %Each replica maintains a $\texttt{log}$ of all ordered requests, and a map \texttt{latest\_committed: lane-id $\rightarrow$ pos} noting the latest committed position of each lane.
% \nc{By proposal here, do you mean the last batch, certified tip, or the consensus proposal which is itself a cut (I assume it's the certified tip)}
A replica first identifies, for each lane $l$, the oldest ancestor $start_l$ of $tips[l]$ with a position exceeding $last_l \coloneqq \textit{last-commit}[l]$ ({\em i.e.,} $start_l.pos == last_l+1$). %that could not yet have been ordered (i.e. such that $start.pos > latest[l]+1$). 
Next, a replica updates the latest committed position for each lane ($\textit{last-commit}[l] = tips[l].pos$), and appends to the log all new proposals (lane proposals from $start_l$ to $tips[l]$) using some deterministic zipping function. %\changebars{}{For instance, we may zip all lanes' new proposals (from oldest to newest) in round-robin fashion (to provide fairness). }

We note that the proposal committed at $last_l$  may not (and need not) be the parent of $start_l$; for instance, the replica governing $l$ may have equivocated and sent multiple proposals for the same position. This does not affect safety as we simply ignore and garbage collect ancestors of $start_l$.
\iffalse
We note that the proposal committed at $last_l$  may not (and need not) be the parent of $start_l$; for instance, the replica governing $l$ may have equivocated and sent multiple proposals for the same position. This does not affect safety as we simply ignore and garbage collect ancestors of $start_l$.
%\fs{it would also be safe to commit both forks! this is just simpler}\nc{Thought: is this enough? It's slightly "Surprising" that this is correct or that you can just "discard" ancestors. Do we think it's necessary to add one/two sentences saying why that's ok?}

%\nc{What is start in this case? If we're staying at the highest level here, maybe lets's skip the math? it would have been easier for me to just erad "the oldest ancestor in P.tips that has not yet been committed (or that r doesn't know has committed yet}
Next, a replica updates the latest committed position for each lane ($\textit{last-commit}[l] = tips[l].pos$), and appends to the log all new proposals (lane proposals from $start_l$ to $tips[l]$) using some deterministic zipping function. \changebars{}{For instance, we may zip all lanes' new proposals (from oldest to newest) in round-robin fashion (to provide fairness). }
\fi

%%%%%%%%%%%%%%%%%%%%%%%%% LANE COVERAGE %%%%%%%%%%%%%%%%%%%%%%%5
\subsubsection{Governing Progress.}\label{sec:coverage}
Correct leaders in \sys{} only initiate a new consensus proposal once data lanes have made ``ample'' progress. 
%\par \textbf{Lane coverage.} 
We coin this requirement \textit{lane coverage} -- it is a tunable hyperparameter that governs the pace of consensus. Intuitively, there is no need for consensus if there is nothing new to agree on!
%consensus need not make progress unless new data proposals are available.} 

Choosing the threshold above which there are sufficiently many new data proposals to agree on requires balancing latency, efficiency, and fairness. \changebars{\sys{}'s parameterizable lane coverage, unlike contemporary protocols that impose a fixed strategy~\cite{castro1999pbft, yin2019hotstuff, danezis2022narwhal, Bullshark}, offers deployments a \textit{choice}:}{}
%Coverage configuration presents a trade-off between latency, efficiency, and fairness, and can be flexibly adjusted to system preference.}
a system with heterogeneous lane capacities, for instance, may opt to minimize latency by starting consensus for every new car; a resource conscious system, in contrast, may opt to wait for multiple new data proposals to be certified.\fs{in this case, one may consider replacing intermittent cars (all but the tip) by best effort broadcasts to further reduce costs.} 

In \sys{}, we set coverage, as default, to require at least $n-f$ new tips. This ensures that at least half of the included tips belong to correct replicas. \changebars{Tips, notably, may subsume a different number of proposals. Rather than enforce that replicas must make even progress (at times, elsewhere referred to as {\em chain quality}~\cite{danezis2022narwhal}), \sys{} opts to provide each replica the same \textit{opportunity} for progress and thus can organically adapt to heterogeneous lane growth. }{} \fs{Note: It's not always a good thing to force everybody to propose. Some replicas are slower or have less to propose. Flexible adaptation to heterogeneity is important}

\fs{ADDED BACK}
Lane coverage in \sys{} is best-effort and not enforced. From a performance perspective, a Byzantine leader that disregards coverage is \changebars{equivalent to}{indistinguishable from} a crashed one; from a fairness perspective, it is equivalent to leaders in traditional BFT protocols that assemble batches at will. The next correct leader (in a good interval) will make up for any lost coverage.

\par \textbf{Reliable Inclusion.}
Importantly, lane coverage does not cause \sys{} to discriminate against slow lanes. Unlike contemporary DAG-BFT protocols~\cite{danezis2022narwhal, Bullshark} that are driven by the fastest $n-f$ replicas, and may ignore proposals of the $f$ slowest, proposal cuts in \sys{} include \textit{all} $n$ lanes. 
Since correct replicas' lanes never fork, all of their proposals are guaranteed to commit in a timely fashion (as soon as a correct leader receives a \textsc{PoA} that subsumes them). 
\fs{could cut this?}
\changebars{}{This property is a direct consequence of \sys{}'s careful separation of concern between the data layer and the consensus layer. The data layer is \textit{exclusively} responsible for disseminating data and making it available to the consensus layer. The consensus layer is instead responsible for committing cuts of \textit{all the data}. As such, the $n-f$ fastest replicas still drive consensus, but necessarily include all data that has been received, without penalizing slow nodes.} 

Further, \sys{}'s consensus layer ensures that \textit{all} certified proposals--including those of Byzantine replicas--are reliably committed\changebars{}{, or in case of forks, garbage collected}\nc{Saves a line and we don't really talk about gc elsewhere}\fs{we do mention it right at the end of the preceeding section (Creating Total Order)}. Since lane growth requires that at least one correct replica be in possession of a \textsc{PoA} for history of the latest car, Byzantine proposers cannot continue to disseminate transactions without intending to commit them; thus effectively bounding the maximum amount of ``waste'' that correct replicas may receive from Byzantine actors. We defer a formal analysis to the Appendix (\ref{sec:bounded-waste}).

\subsection{View change}
\label{sec:view_change}
In the presence of a faulty leader, progress may stall. As is standard, \sys{} relies on timeouts in these situations to elect a new leader: replicas that fail to see progress revolt, and create a special \textit{Timeout Certificate} (\textsc{TC}) ticket to trigger a \textit{View Change}~\cite{castro1999pbft, kotla10zyzzyva}. %\footnote{We adopt and modify the View Synchronizer proposed in Jolteon~\cite{jolteon-ditto}.}
Each view $v$ (for a slot $s$) is mapped to one designated leader (e.g. using round-robin). To prove its tenure a leader carries a ticket $T_{s,v}$, corresponding to either (in $v=0$) a $\textsc{CommitQC}_{s-1}$,\footnote{The view associated with a \textsc{CommitQC} ticket is immaterial; by safety of consensus, all \textsc{CommitQC}'s in a slot $s$ must have the same value.} or (in $v>0$) a quorum of mutineers $TC_{s, v-1}$ from view $v-1$. 

\textsc{Prepare} and \textsc{Confirm} messages contain the view $v$ associated with the sending leader's tenure. Each replica maintains a $\textit{current-view}_s$, and ignores all messages (besides \textsc{CommitQC}'s) with smaller views; if it receives a valid message in a higher view $v'$, it buffers it locally, and reprocesses it once it reaches view $v'$.
%advances its $\textit{current-view}_s$ to $v'$\neil{Just so Alexey doesn't comment about this}. 
Per view, a replica sends at most one \textsc{Prep-Vote} and \textsc{Confirm-Ack}. 

%Prepare validity: \two the replica has not advanced to a higher view locally ($P.v > \textit{curr-view}_R$), and \three it has not previously voted for view $P.v$. If all the above is true, it advances its current view $\textit{curr-view}_R = v$,

%\fs{Note: We should mention somewhere that having the async layer/seamlessness makes it less important to have aggressive timeouts. Pick them conservatively. If we timeout late we lose latency under failures, but we never lose throughput!!}\neil{Agreed!}
A replica starts a timer for view $v$ upon first observing a ticket $T_{s,v}$. Seamlessness alleviates the need to set timeouts aggressively to respond quickly to failures. \sys{} favors conservative timers for smooth progress in good intervals as the data layer continues progressing even during blips.
The replica cancels the timer for $v$ upon locally committing slot $s$ or observing a ticket $T_{s, v+1}$. 
Upon timing out, it broadcasts a complaint message \textsc{Timeout} 
and includes any locally observed proposals that could have committed. 

\par \protocol{\textbf{1: R} $\rightarrow$ \textbf{R}: Replica $R$ broadcasts \textsc{Timeout}.}
A replica $R$, whose timer $t_v$ expires, broadcasts a message $\textsc{Timeout} \coloneqq \langle s, v, highQC, highProp \rangle_R$.
$highQC$ and $highProp$ are, respectively, the $\textsc{PrepareQC}_{s,v'}$ ($conf[s]$) and proposal$_{s, v''}$ ($prop[s]$) with the highest views locally observed by $R$. $R$ will ignore future \textsc{Prepare} and \textsc{Confirm} messages in view $v$.

%This can either be a $prepareQC_{v'}$ with $v' <= v$, or a $commitQC_{v''}$ \nc{I don't think you've said anywhere that you store the last commitQC} with $v'' <= v-1$ ($R$ would not timeout if $v'' == v$); if $v'' == v'$, precedence is given to the $commitQC$.

%For simplicity, our implementation piggybacks timeouts on available car messages (propose or vote); one may also send them externally.
%A replica that receives a Quorum of $n-f$ relevant timeout votes assembles a \textit{Timeout Quorum Certificate} $TC$ and advances its local view -- if it is the designated view leader it uses the $TC$ as ticket to begin a new Prepare phase. 

\par \protocol{\textbf{2: R} Replica $R$ forms TC and advances view.}
A replica $R$ accepts a \textsc{Timeout} message if it has not yet advanced to a higher view ($\textit{current-view}_R \leq \textsc{Timeout}.v$) or already received a \textsc{CommitQC} for that slot. In the latter case, it simply forwards the \textsc{CommitQC} to the sender of \textsc{Timeout}. %, and the contained $highQC$ is a valid PrepareQC/CommitQC. 
%It can adopt the $Timeout.highQC$ as its own local $highQC_R$ if it has a bigger view \nc{Explain the logic behind that?}. 

Replica $R$ joins the mutiny for view $v$ once it has accepted at least $f+1$ \textsc{Timeout} messages (indicating that at least one correct node has failed to see progress) and broadcasts its own \textsc{Timeout} message for $v$. This ensures that if one correct replica locally forms a \textsc{TC}, all correct replicas eventually will. 
Upon accepting $2f+1$ distinct timeouts for $v$ a replica assembles a \textit{Timeout Certificate} \textsc{TC} $\coloneqq (s, v, \{\textsc{Timeout}\})$, and advances its local view $\textit{current-view} = v+1$ and starts a timer for $v+1$. If $R$ is the leader for $v+1$, it additionally begins a new Prepare phase, using the \textsc{TC} as ticket. 

% \fs{optional small optimization in comments:} \changebars{}{To avoid repeatedly sending the same highQC in case of consecutive timeouts a replica can instead send only a digest of the highQC, and send the Quorum itself only the \textit{once} to each replica.}
% \nc{Still prefer to cut this. Breaks the flow and I can see it being an implementation optimisation}

%\fs{we use the all to all version -- explain. the simplest one is sending them to the next leader, but the all to all ensures that all replicas move to a new view within bounded time (only after GST of course). MENTION VIEW SYNCHRONIZER HERE FOR OUR DEAR FRIEND ALEXEY!!!!!!!!!!!!!!!!!!}

% For safety, a leader empowered by a TC ticket must recover and re-propose any proposal that \textit{could have} committed.
% \fs{pure view version in comments:}
% %Crucially, a leader empowered by a TC ticket \textbf{must} recover and extend all proposals that may have previously committed.  

A leader $L$ that uses a \textsc{TC} as ticket must recover the latest proposal that \textit{could have} been committed at some correct replica.
To do so, the leader replica chooses, as \textit{winning proposal}, the greater of \one the highest $highQC$ contained in \textsc{TC}, and \two the highest proposal present $f+1$ times in \textsc{TC}; in a tie, precedence is given to the $highQC$. This two-pronged structure is mandated by the existence of both a fast path and a slow path in Autobahn. If a proposal appears $f+1$ times in \textsc{TC}, then this proposal may have gone fast-path as going fast-path requires $n$ votes.  Any later $n-f$ quorums is guaranteed to see at least $f+1$ such votes. Similarly, if a \textsc{CommitQC} formed on the slow path (from 2f+1 \textsc{Confirm-Acks} of a \textsc{PrepareQC}), at least one such \textsc{PrepareQC} will be included in the \textsc{TC} (by quorum intersection). 
Finally, the leader reproposes the chosen consensus proposal.

Put together, this logic guarantees that if in some view $v$ a \textsc{CommitQC} was formed for a proposal $P$, all consecutive views will only re-propose $P$.

%\par \textbf{Safety intuition.} If $P$ committed via a $Fast-CommitQC_v$ then any Quorum of $n-f$ Timeouts must contain $f+1$ proposals. Likewise, any "slow" $CommitQC$ implies the presence of at least one $PrepareQC_v$ in the TC. Note that \one no two $PrepareQC$ with the same view can exist, and \two the existence of a $PrepareQC$ ($n-f$ \textsc{Prep-Vote}s) in view $v$ implies that no conflicting proposal could have reached a $Fast-CommitQC$ in $v$. Thus there is a unique winning proposal.

\par \protocol{\textbf{3: L} $\rightarrow$ \textbf{R}: Leader $L$ sends $\textsc{Prepare}_{v+1}$.}
Finally, the leader $L$ broadcasts a message $\textsc{Prepare} \coloneqq (\langle v+1, \textit{tips} \rangle_L), \textsc{TC})$, where \textit{tips} is the winning proposal derived from TC; or $L's$ local certified lane cut if no winning proposal exists. Replicas that receive such a Prepare message validate \textsc{TC}, and the correctness of $tips$, and otherwise proceed with Prepare as usual. 
%%%%%%%%%%%%%%%%%%%%%%%%%%%%% PARALLEL CONSENSUS %%%%%%%%%%%%%%%%%%%%%%%%%%%%%%%%%%%%%
\subsection{Parallel Multi-Slot Agreement}
\label{sec:parallel}

Until now, we have described \sys{} as proceeding through a series of sequential slot instances. Unfortunately, lane proposals that narrowly "miss the bus" and are not included in the current slot's proposed consensus cut may, in the worst case, experience the sequential latency of up to two consensus instances (the latency of the consensus instance they missed in addition to the one necessary to commit).

To circumvent sequential wait-times, \sys{}, inspired by PBFT~\cite{castro1999pbft}, allows the next slot instance to begin in parallel with the current one, without waiting for it to commit.
We modify a slot ticket for $s$ to no longer require a $\textsc{CommitQC}_{s-1}$: the leader of slot $s$ can begin consensus for slot $s$ as soon as it receives the first \textsc{Prepare}$_{s-1}$ message (the view is immaterial), and has observed sufficiently many new tips \changebars{}{(w.r.t. to the tip cut in \textsc{Prepare}$_{s-1}$)} to satisfy lane coverage. 
% \fs{NOTE: It actually CANNOT be strictly enforce. Reason in comments:}  
% While we may demand to reference \textit{some} $\textsc{Prepare}_{s-1}$ as ticket, we cannot guarantee that it represents the proposal that is ultimately committed in slot $s-1$. \fs{e.g. view change could change} \fs{Note: If we are not enforcing coverage, then there is no need for a ticket: I.e. the Prepare does not need to be included}

% \begin{figure}[h!]
%     \vskip -5pt
%    \centering
%    \includegraphics[width=0.45\textwidth]{}
%    \vskip -5pt
%    \caption{$\textsc{Prepare}_{s+1}$ can begin upon receiving $\textsc{Prepare}_s$.}
%    \label{fig:pipeline}
%    \vskip -5pt
% \end{figure}
Parallel consensus can avoid sequential delays entirely when instantiated with a stable leader (the same leader proposes multiple slots).  When instantiated with rotating leaders, there remains a single message delay: the time to receive $\textsc{Prepare}_{s-1}$.
We adjust \sys{} as follows.

\par \textbf{Managing concurrent instances.}
\sys{} maintains independent state for all ongoing consensus slot instances. Each instance has its own consensus messages, and is agnostic to all other processing. 
%Messages of different instances can, however, ride the same cars; we discuss in Section \S\ref{sec:optimizations} how to share signature costs. 
Once a slot has committed, it waits until all preceding slot instances have already executed to execute itself. Similarly, when ordering a slot's proposal (a cut of tips), \sys{} replicas identify and order only proposals that are new. Old tip positions are simply ignored. This  filtering step is necessary as  consecutive slots may propose non-monotonic cuts.
\par \textbf{Adjusting view synchronization.}
To account for parallel, overlapping instances, replicas no longer begin a timer for slot $s$ upon reception of a $\textsc{CommitQC}_{s-1}$, but upon reception of the first $\textsc{Prepare}_{(s-1, v)}$ message.\fs{timeouts should account for time it might take to forward Prepare to the leader that needs it + time to get coverage}
Moreover, we modify the leader election scheme to ensure that different slots follow a different leader election schedule. Specifically, we offset each election schedule by $f$. Without this modification, the worst case number of sequential view changes to commit $k$ successive slots would be $\frac{k * (f+1)}{2}$, as each slot $s$ would have to rotate through the same faulty leaders to generate a ticket for $s+1$.\fs{note: this is meant to be as: If slot s starts with R1, then slot s+1 starts with slot R1+f; for f=1 this is normal rotating leader}
%\changebars{}{To avoid this, we shift round-robin assignment for slots by $f$.}
%\changebars{}{; random election is feasible too.}
%($f\geq 2$, $k \leq 4$); for f=1 it's not useful to do this of course.}

\par \textbf{Bounding parallel instances}
During gracious intervals, the number of concurrent consensus instances will be small.  As each consensus instance incurs less than five message delays, there should, in practice, be no more than three to four concurrent instances at a time.  During blips, however, this number can grow arbitrarily large: new instances may keep starting (as they only need to observe a relevant \textsc{Prepare} message). Continuously starting (but not finishing) parallel consensus instances is wasteful; it would be preferable to wait to enter a good interval, as a single new cut can subsume all past proposals.
%, but instances are not closing at the same rate. These instances are not beneficiary, as a single instance in the next good interval can subsume all such proposals with a single new cut.
\sys{} thus bounds the maximum number of ongoing consensus instances to some $k$ by (re-) introducing \textsc{CommitQC} tickets.
%To avoid wastage, and simplify garbage collection, we can bound the maximum number of open instances to some $k$ by (re-)introducing delayed \textsc{CommitQC} tickets.
To begin slot instance $s$, one must include a ticket $\textsc{CommitQC}_{s-k}$.  When committing slot $s$ we can garbage collect the \textsc{CommitQC} for slot $s-k$ as $CommitQC_s$ transitively certifies commitment for slot $s-k$. %\fs{(and $s-2k$, $s-3k$, ...).}\fs{(Note, cannot gc everything $< s-k$; need to wait for finalization to do that -- however, $s-k-1$ will have been GC by $s-1$ already, and so forth)}

\fs{Bonus argument: If we only use Prepare as ticket, then the CommitQC for each slot must be kept around in order to reply to view changes. e.g. to help inform replicas that haven't committed). We'd like to bound this. Can do so either by active checkpointing (requires synchronization), or by leveraging the depth limiter} 

\par \textbf{Discussion.} \sys{}'s parallel proposals departs from modern chained-BFT protocols~\cite{giridharan2023beegees, yin2019hotstuff, jolteon-ditto} which pipeline consensus phases (e.g. piggyback $\textsc{Prepare}_{s+1}$ on $\textsc{Confirm}_s$). We decide against this design for three reasons: \one pipelining phases is not latency optimal as it requires at least two message delays between proposals, \two it introduces liveness concerns that require additional logic to circumvent~\cite{giridharan2023beegees}, and \three it artificially couples coverage between successive slots, {\em i.e.,} coverage for slot 2 affects progress of slot 1. 
%\sys{}'s slot based consensus. 
\subsection{Optimizations}
Finally, we discuss a number of optimizations, that, while not central to \sys{}'s ethos, can improve performance.

\fs{Where to best put this?}
\subsubsection{Pipelining cars.} Replicas in \sys{} disseminate a lane proposal at position $pos$ only upon completing certificate for $pos-1$. This is not strictly necessary for seamlessness, and \sys{} can, akin to consensus (\S\ref{sec:parallel}), be configured to pipeline cars. Doing so, however, introduces a tradeoff: pipelining cars increases the number of un-certified proposals in the system that might never commit (\ref{sec:bounded-waste}).\fs{or require synchronization with optimistic tips!} We choose not to include this optimization in our prototype. \fs{alternatively to pipelining cars within a lane, one could also create "sub-lanes" for each lane, and stagger those (e.g. start proposal only upon receiving prev proposal... }

%\subsubsection{Further reducing latency.}
\subsubsection{Exploring \textit{un-certified} tips.}
\label{sec:optimistic_tips}
By default, \sys{} consensus pessimistically only proposes \textit{certified} tips. This allows voting in the consensus layer to proceed without blocking, which is crucial for seamlessness (\S\ref{sec:consensus_on_lanes}). Assembling and sharing certificates, however, imposes a latency overhead of three message exchanges in addition to the unavoidable consensus latency. This results in a best-case end-to-end latency of six message delays on the fast path, and eight otherwise. \sys{} employs two optimizations to reduce the latency between the dissemination of a data proposal and it being proposed as part of consensus (inclusion latency).  %\fs{Notably, this is a redundant cost with consensus, which also proves availability...}
\par \textbf{Leader tips.} First, we allow a leader to include a reference to the latest proposal ($dig, pos$) it has broadcast as its own lane's tip, even as the tip has not yet been certified.
%\fs{note: if ride sharing, the car carrying Prepare can be used as the tip}
If a Byzantine leader intentionally does not disseminate data, yet includes  the proposal as a tip, it is primarily hurting itself. Correct nodes will, at most, make a single extra data sync request before requesting a view change.  Moreover, if the Byzantine leader ever wants to propose additional data, it will have to backfill the data for this proposal (in-order voting).
%If the leader has failed to disseminate its own lane's history then progress of its tenured view may be interrupted; this will not happen for correct leaders that reliably disseminate (in order).} 

\par \textbf{Optimistic tips.} Second, \sys{} may choose to relax seamlessness and allow leaders to optimistically propose non-certified tip references ($dig, pos$) for lanes that have, historically, remained reliable.\footnote{Akin to recent works~\cite{carousel, spiegelman2023shoal}, \sys{} can be augmented with simple reputation mechanisms to determine reliable lanes (\ref{s:reputation}).} In gracious intervals, this optimization lowers inclusion latency to a single message exchange, which is optimal. However, this comes at the risk of incurring blocking synchronization and triggering additional view changes. Replicas that have not received a tip locally hold off from voting, and request the tip from the leader directly. 

\fs{Not voting because of lack of availability unnecessarily correlates availability and agreement: Instead of not voting, cast a "weak-vote" only for agreement.}

We note, that critical-path synchronization is only necessary for the tip itself \fs{and uncertified parents if using pipelined cars} (and thus has constant cost) -- by design, all ancestors must be certified, and thus can be synchronized asynchronously. 
Further, synchronization is only necessary in view $v=0$, or in subsequent views that did not select a winning proposal from a prior view. In all other cases, the selected proposal $P$ included in \textsc{Prepare}$_{v' > 0}$ is already implicitly certified (at least $f+1$ replicas voted for $P$).

In some cases, holding off from voting may be overly pessimistic and can unnecessarily impede agreement: while an individual replica might not (yet) be in possession of a tip's data, there might be sufficiently many (at least $f+1$) other replicas that are, and can guarantee availability; yet insufficiently many (at least $2f+1$) to reach a \textsc{PrepareQC}. To account for this, \sys{} can refine the voting procedure by allowing replicas to explicitly cast a "weak" \textsc{Vote} for agreement only, or a "strong" \textsc{Vote} that asserts both agreement and local availability. A replica may cast both a weak and strong \textsc{Vote} for the same value within a view. A \textsc{PrepareQC} is valid if it consists of $2f+1$ \textsc{Vote}'s, at least $f+1$ of which must be strong; thus ensuring availability. Fast \textsc{CommitQC}s, however, must consist of $3f+1$ strong \textsc{Vote}'s; this ensures that view change tickets retain at least $f+1$ strong \textsc{Vote}s.

\fs{longer version in comment:}

\fs{old, longer version in comments}

\vspace{2pt}
\subsubsection{Other optional modifications.}
\fs{could possibly be cut/replaced with a reference to appendix. If cut, then consolidate the section intro.}
\sys{} can, if desired, be further augmented with several standard techniques. We discuss them briefly for completeness.

\vspace{1pt}
\par \textbf{Signature aggregation.}
\sys{}, like others~\cite{gueta2019sbft, yin2019hotstuff, Bullshark} can be instantiated with threshold (TH) signatures~\cite{boldyreva2002threshold, shoup2000practical} that aggregate matching vote signature of any Quorum Certificate (PoA, PrepareQC, CommitQC) into a single signature. This reduces \sys{}'s complexity per car and consensus instance (during good intervals) to $O(n)$; view changes require $O(n^2)$ messages~\cite{yin2019hotstuff}. 
%In the presence of asynchrony or faults TH-aggregation may be at odds with ride sharing: if some correct replicas do not vote for all consensus passengers (e.g. due to timing out or advancing views) there may be non-matching; one must fall back to un-aggregated signatures, or employ aggregate signature schemes with support for multiple messages~\cite{boneh2003aggregate, lu2018lattice}.

\iffalse
\par \textbf{Signature aggregation.}
\sys{}, like many related works~\cite{gueta2019sbft, yin2019hotstuff, Bullshark} can be instantiated with Threshold (TH) Signatures~\cite{boldyreva2002threshold, shoup2000practical}. A leader can aggregate matching vote signature of any Quorum Certificate (PoA, PrepareQC, CommitQC) into a single signature, thus ensuring that all messages carrying a QC are constant-sized. This reduces \sys{}'s complexity per car and consensus instance (during good intervals) to $O(n)$; view changes require $O(n^2)$ messages~\cite{yin2019hotstuff}.
\fs{Don't talk about this, but simply put it into the list of aggregates.}
During asynchronous periods TH aggregation may be at odds with ride sharing: if some correct replicas do not vote for all passenger consensus instances (e.g. due to timing out or advancing views) the leader may receive $n-f$ different votes that cannot be aggregated; it must fall back to un-aggregated signatures, or employ aggregate signature schemes with support for multiple messages~\cite{boneh2003aggregate, lu2018lattice}.
%\fs{Problem/Tension with ride sharing: if some replicas don't vote on a consensus instance piggybacked on a car (e.g. due to having view changed) then the n-f votes are different and cannot be aggregated using Threshold signatures. One can use aggregate signatures instead, or has to resort to un-aggregated sigs for this car}
\fs{in the common case can still use TH sigs}
\fi

\vspace{1pt}
\par \textbf{All-to-all communication.}
\sys{} follows a linear-PBFT communication pattern~\cite{gueta2019sbft} that allows it to achieve linear complexity (during good intervals) when instantiated with aggregate signatures. \sys{} can, of course, also be instantiated with all-to-all communication for better latency (but no linearity). In this regime, consensus consists of only 3 message exchanges (2mds on FastPath): replicas broadcast \textsc{Prepare-Vote}s and \textsc{Confirm-Ack}s, and respectively assemble $PrepareQC$s and $ConfirmQC$s locally. %To continue ride-sharing, ambassador cars can adopt all-to-all communication too.
\fs{Slightly more discussion in comments:}
%\one the leader broadcasts \textsc{Prepare}, \two all replicas broadcast \textsc{Prepare-Vote}s, and locally form a \textsc{Prepare-QC}, and \three all replicas broadcast a \textsc{Confirm} message, and locally assemble a \textsc{Commit-QC}. 
% Two design options exist to continue ride sharing: \one Ambassador cars become all-to-all, or \two the messages of a consensus view are piggybacked on car's of \textit{any} lane -- a replica simply uses the earlier of any locally available \textsc{Propose} (own lane) or \textsc{Vote} (possibly other lane); this option amplifies tension with TH signature aggregation as, for any given car, not all votes may include the same message batch.

\fs{I've cut ride sharing here, and moved it entirely to Appendix.}
\iffalse
\par \textbf{Ride sharing} %\textbf{Embedding on data lanes.} \
\label{sec:ride_sharing} \fs{I moved it here, since it is optional and we don't show results with it.}
Consensus phases and cars share the same basic message pattern~\cite{provable_abraham} and can, in theory, be sent and signed together (\textit{ride-sharing}) to approach the "zero-message" consensus overhead advertised by recent DAG-BFT protocols~\cite{danezis2022narwhal, Bullshark}.\footnote{We note that the relative consensus overheads decrease linearly as $n$ scales, and are thus already exceedingly small for $n >> k$ (max \#instances).} Ride-sharing, however, comes at a slight latency tradeoff, as consensus messages have to wait for cars to be available. We find that in practice that, even for small $n$, consensus coordination is not a throughput bottleneck (data processing is, \S\ref{sec:eval}), and thus we opt to defer in-depth discussion to \changebars{our supplemental material}{future work}.
\fi

%%% Version 2:
%\input{Sections/3-System/Protocol/consensus-on-lanes}
%\input{Sections/3-System/Protocol/core-consensus-new}
%\input{Sections/3-System/Protocol/optimizations-new}

%%% Version 1:
% \input{Sections/3-System/Protocol/core-consensus}
% \input{Sections/3-System/Protocol/end-to-end}
% %\input{Sections/3-System/Protocol/fairness}
% \input{Sections/3-System/Protocol/optimizations}

%%%%%%%%%%%%
%\input{Sections/3-System/Protocol/pseudocode}

%-------------------------------------------------------------------------------
%   EVAL
%-------------------------------------------------------------------------------
\begin{figure*}[!th]
\centering
\begin{minipage}{.6\textwidth}
  \centering
  \includegraphics[width=1\linewidth]{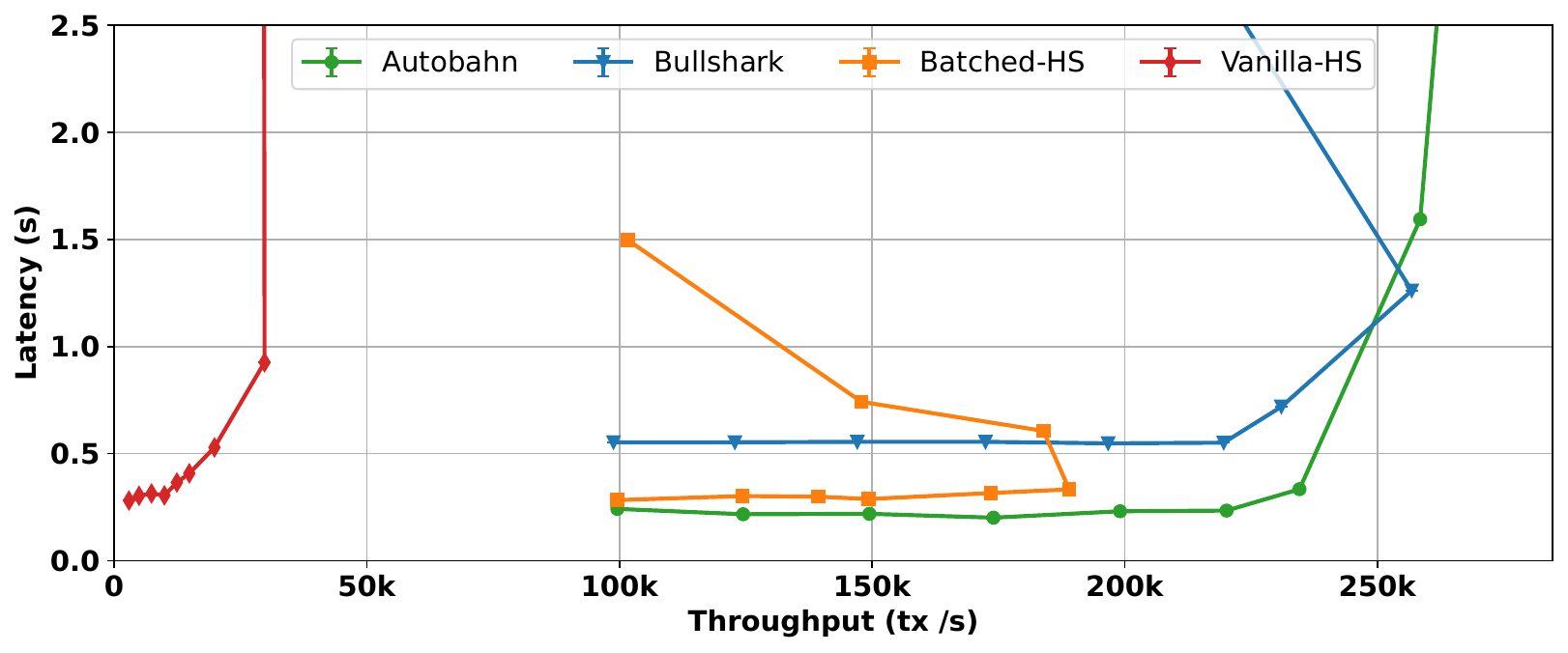}
  %}
  %
  %{}
  %{}
  %{}
  \vskip -2pt
  \caption{Throughput and Latency under increasing load.}
  \label{fig:performance}
\end{minipage}
\begin{minipage}{.39\textwidth}
  \centering
  \includegraphics[width=1\linewidth]%{}
  {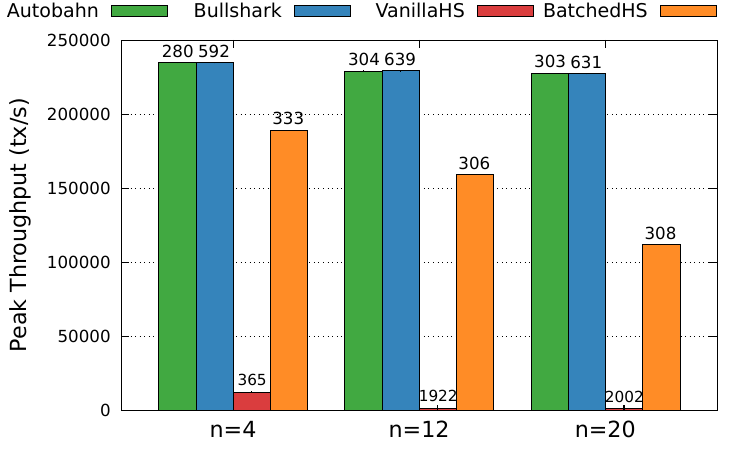}
  \vskip -0pt
  \caption{Peak throughput for varying $n$. Numbers atop bars show measured latency (ms).}
  \label{fig:Scaling}
\end{minipage}
\vskip -10pt
\end{figure*}

\section{Evaluation}
\label{sec:eval}

% \begin{figure*}[!th]

%   \centering
%   \includegraphics[width=0.9\linewidth]{Figures/eval/committee-latency-visibility-cr.pdf}
%   %Figures/eval/committee-latency-visibility.pdf
%   %{Figures/eval/committee-latency-larger-font.pdf}
%   %{Figures/eval/committee-latency.pdf}
%   %{Figures/eval/hangover-autobahn.pdf}
%   \vskip -2pt
%   \caption{Throughput and Latency under increasing load.}
%   \label{fig:performance}

% \vskip -10pt
% \end{figure*}

Our evaluation seeks to answer three questions:
\vspace{-1pt}
\begin{enumerate}
    \item \textbf{Performance}: How well does \sys{} perform in gracious intervals? (\S\ref{sec:perf})
    \item \textbf{Scaling} How well does it scale as we increase $n$? (\S\ref{sec:perf})
    \item \textbf{Blip Tolerance}\fs{or: Robustness}: Does \sys{} make good on its promise of seamlessness? (\S\ref{sec:async})
    %\item \textbf{Fault Tolerance}: How does \sys{} behave in presence of faults.
\end{enumerate}
\vspace{-1pt}

% \begin{figure*}[!t]
% \vskip -12pt
%    \centering
%    \includegraphics[width=0.85\textwidth]{Figures/eval/committee-latency.pdf}
%    \caption{Throughput and Latency under increasing load.}
%    \label{fig:performance}
%    \vskip -12pt
% \end{figure*}

We implement a prototype of \sys{}\footnote{https://github.com/neilgiri/autobahn-artifact} in Rust, starting from the open source implementation of Narwhal and Bullshark~\cite{codebullshark}. We use Tokio's TCP~\cite{tokio} for networking, and RocksDB~\cite{rocksdb} for persistent storage. Finally, we use ed25519-dalek~\cite{ed25519} signatures for authentication. 

\par \textbf{Baseline systems.} We compare against open-source prototypes of HotStuff~\cite{yin2019hotstuff} and Bullshark~\cite{Bullshark, spiegelman2022bullshark} which, respectively, represent today's most popular traditional and DAG-based BFT protocol.
We use the "batched" HotStuff (BatchedHS) prototype~\cite{codehotstuff} that simulates an upper bound on achievable performance by naively separating dissemination and consensus. Replicas optimistically stream batches, and consensus leaders propose hashes of any received batches.\fs{This improves both throughput and latency as \one one proposal may reference batches from multiple replicas, and \two batches can be proposed before the initiator becomes the leader itself.} This design is not robust in real settings: \one Byzantine replicas and blips may exhaust memory. There is no guarantee that batches will be committed (no \textit{reliable inclusion}) requiring the system to eventually drop transactions. \two Replicas must synchronize on missing batches before voting, which can cause blips (no \textit{non-blocking, timely sync}). We also evaluate a standard version of HotStuff (VanillaHS) in which we modify BatchedHS to send batches only alongside consensus proposals of the issuing replica.

%optimizes throughput and latency by separating data dissemination from consensus: all replicas disseminate batches, and consensus leaders propose hashes of any received batches. This design is meant to simulate an upper bound on performance, and is not robust in real settings: Byzantine replicas may exhaust the mempool (requiring it to drop transactions), or force leaders to incur blocking synchronization on batches that were not actually disseminated (not seamless). We evaluate also a vanilla Hotstuff version (modified from~\cite{codehotstuff}) in which batches are only sent alongside consensus proposals of the issuing replica.
\fs{is this paragraph necessary overall? I tried to slightly shorten}
Both Bullshark~\cite{codebullshark} and \sys{} adopt the horizontally scalable worker layer proposed by Narwhal~\cite{danezis2022narwhal}. For a fair comparison with HotStuff we run only with a single, co-located worker; in this setup the Reliable Broadcast used by workers is unnecessary, so we remove it to save latency.
\nc{Seems fine to me}

%The Bullshark prototype~\cite{codebullshark}, by default, leverages a dedicated worker layer~\cite{danezis2022narwhal} that streams batches in advance (akin to BatchedHS) using Reliable Broadcast (RB), and includes only batch references in consensus proposals. The RB is unecessary when run with a single, replica co-located worker, and we therefore remove it to reduce latency. \fs{ for fairness?} \sys{} adopts the same worker layer.
%\fs{TODO: This design kind of makes batch sizes irrelevant, as one can simply include more digests}

% \fs{I think we should run without the worker layer.}
% \par \textbf{Optional wokers:} All systems have the option to leverage the Narwhal \textit{worker} layer to further scale data dissemination -- our experiments do not leverage additional workers in order isolate protocol performances. In worker-mode, a single replica may employ several worker nodes that disseminate "mini-batches" of data in parallel. In this mode, data dissemination takes a two layered form: Cars in \sys{} (and DAG nodes in Bullshark/Narwhal) propose batches of mini-batch digests, instead of raw data requests. This allows for horizontal scalability beyond $n$, at the tradeoff for additional indirection and latency (2md).

% - All experiments performed with X warmup/cooldown
% - Latency = end-to-end from TX creation to TX commit  (No application workload -- req just ignored after ordering)
% - Throughput = TX commit per sec.  (can also show amortized
% - TX size: X bits.

\par \textbf{Experimental setup.} We evaluate all systems on Google Cloud Platform (GCP), using an intra-US configuration with nodes evenly distributed in \texttt{us-west1}, \texttt{us-west4}, \texttt{us-east1} and \texttt{us-east5}. We summarize RTTs between regions in Table~\ref{tab:latencyrtt}.
We use machine type \texttt{t2d-standard-16}\cite{gcp_machines} with 20GB of SSD, and 10GB/s network bandwidth.
A client machine (co-located in the same region as the replica) issues a constant stream of \changebars{no-op}{} transactions (tx), consisting of 512 random bytes~\cite{danezis2022narwhal, Bullshark}; we denote as (input) load the total number of tx/s streamed by all clients. 

We measure latency as the time between transaction arrival at a replica and the time it is ready to execute; throughput is measured as execution-ready transactions per second. \changebars{Our measurements do not include client-to-replica (and replica-to-client) latency; this latency is constant, small (compared to the overall consensus latency), and independent of the protocol choice. Client request invocation requires only an in-data center ping latency to the local replica ($\approx 15\mu s$). Replies to the client incur the tail latency of the $f+1$st closest replica ({\em i.e.,} 19/28ms in our setup), or only the ping latency if the local replica is trusted (as is common in blockchains).}{} 

We set a batch size of 500KB (1000 transactions) across all experiments and systems, but allow consensus proposals to include/reference more than one batch if available; this \textit{mini-batching} design~\cite{codehotstuff,codebullshark} allows replicas to organically reach larger effective batch sizes with reduced latency trade-off.\fs{Fix? We cap HS to some large max? Exact number Neil?} 
All systems are run with rotating leaders; we set view timers to 1s.\fs{We run BatchedHS capped at X because after that it stops working (too much sync); in practice you want to cap it anyways to avoid also byz synchronization abuse} 
Each experiment runs for 60 seconds.

\begin{table}[t]
    \vskip 4pt %-5pt
    \centering
    \footnotesize{
    \begin{tabular}{|c|c|c|c|c|}
           RTT & us-east1 & us-east5 & us-west1 & us-west4 \\
           us-east1 & 0.5  & 19 & 64 & 55 \\
           us-east5 & 19 & 0.5 & 50 & 57\\
          us-west1 & 64 & 50 & 0.5 & 28\\
          us-west4 &55 & 57 & 28 & 0.5 \\
    \end{tabular}
    }
    \vskip 2pt
    \caption{RTTs between regions (ms)}
    \label{tab:latencyrtt}
    \vskip -3pt %-10pt
\end{table}

\fs{It seems like about 5MB payload size is where a broadcast bottlenecks. That means that 10 digests per replica are needed }

\begin{figure*}[!t]
%\vskip -10pt
\hspace*{-0.5cm}   
\centering
\begin{minipage}{0.48\textwidth}
  \centering
  \includegraphics[width=1\linewidth]{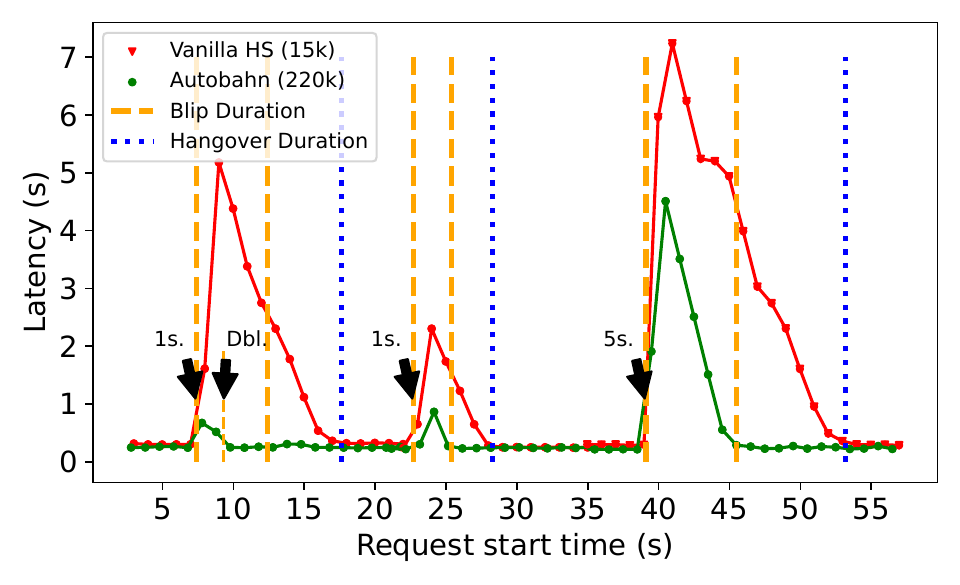}
    \vskip -3pt
  %\caption{Asynchrony hangovers.}
  \caption{Leader failures.\fs{make figures more rectangular}}
  \label{fig:leader-fail}
\end{minipage}
\hspace{4mm}
% \begin{minipage}{0.33\textwidth}
%   \centering
%    \includegraphics[width=1\linewidth]{Figures/eval/2node-partition.pdf}
%    %\includegraphics[width=1\linewidth]{}
%   %{Figures/eval/hangover-autobahn.pdf}
%   \vskip -3pt
%   \caption{EGRESS PLACEHOLDER.}
%   %\caption{\sys{} seamlessness.}
%   \label{fig:egress}
% \end{minipage}
\begin{minipage}{0.48\textwidth}
  \centering
    \includegraphics[width=1\linewidth]{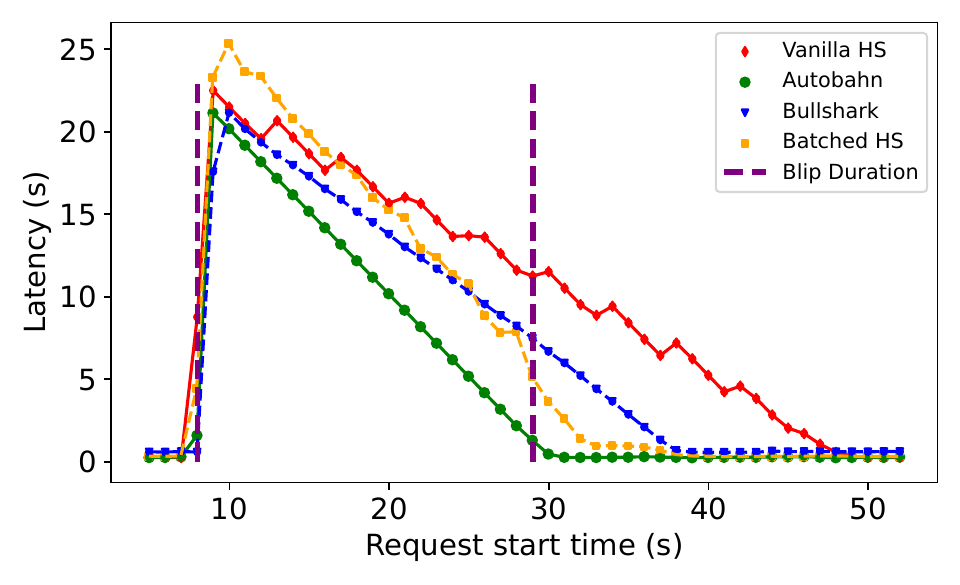}
  \vskip -3pt
  \caption{Partial partition.\fs{OUr line wrong: still workig on fix...}}
  %\caption{\sys{} seamlessness.}
  \label{fig:partition}
\end{minipage}
\vskip -10pt
\end{figure*}

\subsection{Performance in ideal conditions}
\label{sec:perf}
%- Pick moderate batch size in order to highlight latency as bottleneck
%- Do this for small n, with homogeneous latencies
% Expectation:
% - Tput matches bullshark scaling; much better than Hotstuff
% - Latency is lower, close to Jolteon
\fs{note: we ran all experiments with debug prints enabled, but that should hurt us the most. Running release mode only made about a 2ms difference for us, so we keep the numbers.}
Figure \ref{fig:performance} shows the performance in a fault-free, synchronous setting, using $n=4$ nodes.\fs{hopefully we can still get around to some bigger ones!} 
\sys{}'s throughput matches that of Bullshark (ca. 234K tx/s), but reduces latency by a factor of 2.1x (from 592ms to 280ms), edging out even BatchedHS (333ms) and VanillaHS (365ms).
HotStuff and Bullshark require, respectively 7mds and up to 12 mds (10.5~mds average) to commit a transaction, while \sys{} reduces coordination to 4mds on the Fast Path, and 6mds otherwise (with optimistic tips).
Both \sys{} and Bullshark significantly outperform both HotStuff variants for throughput by a factor of 1.23x (BatchedHS) and 15.6x\fs{I tentatively adjusted HS tput upwards to 15k, because that's what Neil used for the blip experiments (just to be consistent)} (VanillaHS), and are bottlenecked on the cost of deserializing and storing data on disk.
VanillaHS offers very low throughput \changebars{(ca. 15K tx/s before latency noticeably rises)}{(12.4k tx/s)} as it requires that replicas disseminate their own batches alongside consensus proposals. The system quickly bottlenecks on the (network and processing) bandwidth of one broadcast. \fs{it's probably not "just" the bandwidth, but a combination of receiver side processing too: e.g. deserialization, persistent storage, etc.}
BatchedHS scales significantly better (189K tx/s) than VanillaHS as it amortizes broadcasting cost. At high load, however, we find that it incurs a significant amount of data synchronization (resulting at times even in view changes) causing it to bottleneck; this corroborates findings in Narwhal~\cite{danezis2022narwhal} and demonstrates that this design is not robust in practice. \fs{alt: demonstrating that the design is not robust in practice (corroborating~\cite{danezis2022narwhal} findings).}
%one of their own batches at a time (as opposed to multiple digests of batches), and higher latency as a batch must wait for it's replica to become the leader in order to be proposed.

\par \textbf{Fast Path/Optimistic Tips.} To better understand the performance benefits of the fast path and optimistic tip optimizations, we evaluate \sys{} with the fast path deactivated and without the optimistic tip optimization. We omit the graph for space constraints. We observe a 40ms increase in latency when forced to always go ``slow''. This increase is smaller than a cross-country RT: the fast path incurs higher tail latency due to the larger CommitQC ($|n|$) and because the leader ends up contacting non-local replicas, including those located across  coasts. Similarly, we record an additional 33ms increase when proposing only certified tips.  This follows directly from the need to contact $f+1$ replicas to form a PoA. %In our setup, one of these replicas will necessarily be located  on the same coast. 

\par \textbf{Scaling number of replicas.} Figure \ref{fig:Scaling} shows the peak throughput achievable as we increase the number of replicas. 
Both \sys{} and Bullshark scale gracefully as $n$ increases; throughput remains bottlenecked on data processing, while latency is unaffected. In our setup, the total peak load $P$ remains constant, but each replica proposes a smaller fraction as $n$ grows since input load is balanced across replicas ($P/n$ per replica).\footnote{If replicas were to instead propose at constant rates $r < P/n$, throughput would increase with $n$ (up to the data processing bottleneck $P$).}
BatchedHS, in contrast, experiences a sizeable throughput degradation with growing $n$ ($-16\%$ at $n=12$, and $-41\%$ at $n=20$) as replicas are increasingly likely to be out of sync and thus are forced to perform additional data synchronization\changebars{}{on the critical path}\fs{save orphan?}.\fs{it's quite surprising actually. Since there are no faults, any synchronization necessary is simply from momentary asynchrony, and overheads of data processing; us and Bullshark seem to avoid this, through what I can only explain as a "buffer-time" introduced by cars/RB respectively.} Finally, we observe that VanillaHS suffers a significant throughput-latency tradeoff: because each replica proposes unique batches, and batches are only proposed when a replica becomes the leader, latency grows proportionally to $n$. We highlight this tension by bounding latency to be at most 2s. In this setup, VanillaHS throughput diminishes sharply as $n$ rises; it drops from \changebars{15K}{12.4k} to only 1.5K tx/s as latency surges rapidly. VanillaHS is 1.3 times slower than Autobahn at n=4, but over 6.6 times slower at n=20.

\subsection{Operating in \textit{partial} synchrony}
% \fs{TODO: 1) blip experiment should not say "network fluctuation, but that it is caused by leader failure". 2) we could show a network partition (and also show bullshark performance - it won't make progress), 3) we could make the network slower somehow (so simulate the async that Jay asks about) and show that indeed if data dissemination is slower than client input rate, then all systems will experience increased latency. }

\label{sec:async}
%\fs{Note: We simulate LOSS OF PROGRESS. True async simulation would randomly add delays to some links}
%\fs{both HS and autobahn are run with rotating leaders here}
%\fs{we simulate async by making the leader not send any proposals during the period. When async ends, the leader sends out the latest message it would've sent (essentially simulating a delay)}
The previous section highlighted \sys's good performance in the failure-free scenario: \sys{} achieves the best of both worlds. It matches Bullshark's throughput while preserving HotStuff's low latency. We now investigate whether \sys{} remains robust to blips. We use $n=4$. %In all graphs\fs{I would not say this. For Leader failures its probably true. For Partial partition I'm certain it wont be.}, Bullshark and BatchedHS exhibits the same blip behaviour as \sys{}. We thus omit it for clarity. \nc{Should we say the same for Batched HS?}

\par \textbf{Faulty Leader Blips.} We first consider blips that result from a single leader failure. Bullshark and BatchedHS exhibit the same blip behavior as \sys{}, and we thus omit them for clarity. Figure~\ref{fig:leader-fail} plots latency over time. Data points are averages over second-long windows. For the first two blips, we use a standard timeout of 1s~\cite{priv-com-aptos-mysten}.\fs{Timeouts in VanillaHS are per phase, whereas they are per (slot) view for Autobahn. To meaningfully compare blip behaviors, we use the same timeouts for both.}
 HotStuff's pipelined, rotating leader design, can, in the presence of a faulty leader, trigger not one but two timeouts. Because the votes for a proposal issued by a leader $L1$ are eagerly forwarded only to $L2$, a single failure of $L2$ can cause two timeouts to trigger: one for its preceding proposal, and one for its own. When instantiated with stable leaders, it can only generate a single blip. We thus consider both scenarios (respectively called Dbl, and 1s). VanillaHS (under 15K tx/s load) suffers from hangovers that persist for respectively 1.6x, 1.3x, and 1.2x of the blip duration beyond the good interval resuming. The system remains bottlenecked by data dissemination and delivery when trying to work off the transaction backlog. For the Dbl (3s) blip, for instance, the latency penalty not only exceeds the blip duration, for a maximum transaction latency of 6.2s, but remains as high as 1.3s after three seconds post-blip, and 700ms at four seconds post-blip (respectively 3.5x and 1.9x increase over the steady state latency of 365 ms).  \sys{} (under 220K tx/s load), in contrast, immediately recovers from the hangover. 
This behavior continues to hold true for larger timeout values. For a timeout as high 5s (still realistic -- Diem, Facebook's former blockchain used a timeout of 30s~\cite{baudet2019state, priv-com-aptos-mysten}), VanillaHS's hangover persisted for almost 7 seconds after the blip ended.% (despite the blip being only about 5s to begin with!). 

\nc{Old version in iffalse}

\par \textbf{Partition Blips.} Next, we consider a blip caused by a temporary network partition (20s) that isolates replicas into two halves (Fig. \ref{fig:partition}). The load is 15K tx/s (the max for VanillaHS).
\fs{Not: we ran without optimistic tips (shouldn't really affect latency)}\sys{} experiences only a small hangover because it continues to disseminate data (replicas can reach $f+1$ other replicas, enough to grow their lanes). Once the partition resolves, a single slot instantly commits the entire lane backlog, and replicas identify and request (in a single step) all missing data. It takes, respectively, about 1s to transfer and process all missing data (we are bottlenecked by bandwidth and delivery); this illustrates an unavoidable, non protocol-induced hangover. Transactions submitted during the \changebars{partition}{asynchronous} period experience a latency penalty close to the remaining blip duration. Bullshark and BatchedHS require slightly longer to recover (both ca. 9s). BatchedHS optimistically continues to disseminate during the partition, but upon resolution must enforce a cap on mini-batch references per proposal to avoid excessive synchronization on the timeout-critical path. Bullshark cannot continue to disseminate during the partition because it must reach $2f+1$ nodes to advance the DAG; after it resolves, however, its efficient dissemination can quickly recover the backlog.
VanillaHS, in contrast, exhibits a large hangover, directly proportional to the blip duration. 
\fs{BatchedHS (at 15k tx/s) experiences an 8s hangover: it continues to optimistically disseminate during the partition, but must synchronize on missing data on the timeout-critical path. To avoid continuous view changes, we must cap the maximum number of mini-batch references per proposal to 15. This ensures progress, but introduces a slight hangover. }

\iffalse
\par \textbf{Partition Blips.} Next, we consider blips that result from network partitions. We simulate a temporary (20s) partition that isolates replicas into two halves (Fig. \ref{fig:partition}). 
\fs{Not: we ran without optimistic tips (shouldn't really affect latency)}\sys{} experiences only small hangovers because it continues to disseminate data (replicas can reach $f+1$ other replicas, enough to grow their lanes). Once the partition resolves, a single slot instantly commits the entire lane backlog, and replicas identify and request (in a single step) all missing data. It takes, respectively, about 1s (at 15k tx/s load) and 4s (at 75k tx/s load) %\footnote{At loads beyond 75k/s the data backlog generated during the 20s partition grows too large, and we observe frequent message losses at the network level when trying to exchange missing data.}\fs{at higher loads the network seems to start dropping messages causing us to freeze execution (some tx is missing and we cant get it). consensus always keeps going though} 
to transfer and process all missing data (we are bottlenecked by bandwidth and delivery); this illustrates an unavoidable, non protocol-induced hangover. Transactions submitted during the \changebars{partition}{asynchronous} period experience a latency penalty close to the remaining blip duration\changebars{}{, which is optimal}. \fs{it's not perfect because we do have a small hangover}VanillaHS (at only 15k tx/s load), in contrast, exhibits a large hangover, directly proportional to the blip duration. 
\fs{BatchedHS (at 15k tx/s) experiences an 8s hangover: it continues to optimistically disseminate during the partition, but must synchronize on missing data on the timeout-critical path. To avoid continuous view changes, we must cap the maximum number of mini-batch references per proposal to 15. This ensures progress, but introduces a slight hangover. }
\fi

\fs{Our hangover can probably be smoothened out if we allowed to commit txs as soon as everything preceeding them in the total order has arrived. However, in our prototype we first wait for ALL txs in the history we are currently committing, and THEN we order and execute them all. }
\fs{NOTE: our latency is slightly higher than HS here. It's probably because the load is not high enough, causing us to wait on batch timers. The problem is that if we reduce our batch timers things break again...}
\fs{we can instantly commit a large lane, (and very quickly get all the car headers), but the data itself can take a while to get...}
\nc{leaving as is until final results come in}

\fs{note: not entirely sure why it zigzags, but it's probably because commitment can happen "out of order". For example, if a Tx1 arrives at R1 at time T, and Tx2 arrives at R2 at time T'>T during the partition, but when the partition ends R1 is the last to become leader, Tx2 may commit first}
\fs{TODO: add bullshark and BatchedHS. Because the partition separates west/east, Bullshark is not able to get any $2f+1$ quorums and thus stops advancing entirely (consensus and data dissemination are tightly coupled). Upon the end of the partition it thus has a hangover, but can recover faster since it disseminates more efficiently. The same applies to BatchedHS -> Maybe it actually ends up having a big hangover because consensus keeps timing out.}

\fs{old in comments}

\fs{more optional experiments in comments:}
\section{Related Work}
\label{sec:related}
\sys{} draws inspiration from several existing works. Scalog~\cite{ding2020scalog} demonstrates how to separate data dissemination from ordering to scale throughput at low latencies (for crash failures). Narwhal~\cite{danezis2022narwhal} discusses how to implement a scalable and BFT robust asynchronous broadcast layer, a core tenet for achieving seamlessness. \sys{}'s consensus itself is closely rooted in traditional, latency-optimal BFT-protocols, most notably PBFT's core consensus logic and parallelism~\cite{castro1999pbft}, Zyzzyva's fast path~\cite{kotla10zyzzyva, gueta2019sbft}, and Aardvark's doctrine of robustness~\cite{aardvark}. 

\par \textbf{Motorizing Consensus Protocols.} \sys{}'s architecture, by design, isn't tied to a specific consensus mechanism, and thus can easily accommodate future algorithmic improvements (including advances in asynchronous protocols).
Furthermore, existing consensus deployments may easily be \textit{motorized} to enjoy both improved throughput and seamlessness without adopting the custom consensus protocol described in \sys{}: a blackbox consensus mechanism of choice may be augmented by adding lanes, replacing consensus proposals with cuts, and adopting the synchronization and ordering logic presented.

%%Traditional
\par \textbf{Traditional BFT} protocols such as PBFT~\cite{castro1999pbft}, HotStuff~\cite{yin2019hotstuff}, and their respective variants~\cite{kotla10zyzzyva, gueta2019sbft}\cite{jalalzai2020fasthotstuff, jolteon-ditto}
optimize for good intervals: they minimize latency in the common\changebars{}{(synchronous)} case, but guarantee neither progress during periods of asynchrony, nor resilience to hangovers. Multi-log frameworks~\cite{stathakopoulou2019mir, stathakopoulou2022state, gupta2021rcc, arun2022scalable} partition the request space across multiple proposers to sidestep a perceived leader bottleneck, and intertwine the sharded logs to arrive at a single, total order. %Doing so demands careful and complex orchestration, and can result in suboptimal performance in the presence of faults.
They inherit both the good and the bad of their BFT building blocks: consensus has low latency, but remains prone to hangovers. %\fs{alt: \textit{network robustness remains poor}.} 

\par \textbf{Directed Acyclic Graph (DAG) BFT} protocols, as previously discussed~\cite{baird2016swirlds, keidar2021all, danezis2022narwhal, malkhi2022maximal, Bullshark}, originate in the asynchronous model and enjoy both excellent throughput and network resilience; recent designs \cite{spiegelman2022bullshark, spiegelman2023shoal, keidar2023cordial} make use of partial synchrony to improve consensus latency.
%These (certified) DAGs are close to achieving seamlessness (data synchronization on the timeout-critical path aside), but require several rounds of Reliable Broadcast (RB) communication. Bullshark~\cite{spiegelman2022bullshark}, the state of the art, requires 12md to commit (9md when optimized by Shoal~\cite{spiegelman2023shoal}).
\textit{Certified} DAGs~\cite{danezis2022narwhal, Bullshark} are close to achieving seamlessness (data synchronization on the timeout-critical path aside), but use Reliable Broadcast~\cite{bracha1985asynchronous, bracha1987asynchronous} (RB) to implement the voting steps of traditional BFT consensus, resulting in high latencies. Bullshark~\cite{spiegelman2022bullshark}, for instance, requires up to 4 rounds of RB (each comprised of 3 message exchanges) to commit, for a worst case total latency of 12mds (9mds when optimized by Shoal~\cite{spiegelman2023shoal}).

Recent works propose (but do not empirically evaluate) un-certified DAGs~\cite{keidar2023cordial, malkhi2024bbca}. These designs aim to improve latency by replacing the RB in each round with best-effort-broadcast (BEB), but forgo seamlessness because they must synchronize (possibly recursively) on missing data before voting.

Star~\cite{duan2024dashing} adopts a construction similar to \sys{}. It uses $n$ parallel proposers that create "weak-certificates" that only prove availability (equivalent to PoA), but retains a rigid round-based DAG-like construction. For each round it requires an external single-shot PBFT instance that commits $n-f$ weak-certificates; this design is prone to hangovers as the number of required consensus instances is directly proportional to blip duration.

\par \textbf{DAG vs Lanes}. Lanes in \sys{} loosely resemble the DAG structure, but minimize quorum sizes and eschew rigid dependencies across proposers. 
\changebars{Notably, \sys{} proposes a cut of all $n$ replica lanes; in contrast, DAGs can reliably advance proposals simultaneously from only $n-f$ replicas, and are thus subject to ignoring some proposals~\cite{Bullshark}.}{DAGs must operate at the proposal rate of the $n-f$'th fastest replica ({\em i.e.,} slowest correct, in presence of faults), and are subject to orphaning slow replicas' proposals~cite{Bullshark}. In contrast, \sys{}'s independent lanes allow for flexible proposal rates (each replica can propose at their load and resource capability), and consensus guarantees to commit all $n$ replica lanes (no orphans).} \fs{note: previously we said "slowest correct replica", but that was a bit imprecise and a reviewer asked}

%\changebars{}{DAGs must operate at the proposal rate of the $n-f'th$ fastest (i.e. slowest correct, in presence of faults), while \sys{}'s independent lanes allow for flexible proposal rates (each replica can propose at their load and resource capability); this is desirable in modern permissioned blockchains, where replicas might be operated by heterogeneous entities (with varying resource capacities)~cite{liu2023flexible}. %, and propose requests on behalf of their entity (with varying loads)~\cite{liu2023flexible}. 
\fs{DAGs bigger quorums also increase tail latencies and make it less robust to partitions?}

\par \textbf{Asynchronous BFT consensus} protocols~\cite{vaba, danezis2022narwhal, jolteon-ditto, liu2023flexible, honeybadger} sidestep the FLP impossibility result~\cite{fischer1985impossibility} and guarantee liveness even during asynchrony by introducing randomization. 
They typically use a combination of reliable broadcast (RB)~\cite{bracha1987asynchronous, bracha1985asynchronous} and asynchronous Byzantine Agreement (BA) to implement asynchronous Byzantine atomic broadcast~\cite{honeybadger, beat, gao2022dumbo}.
%They typically avoid a using a single predetermined leader to dictate a proposal, and instead generate shared common coins using asynchronous Byzantine Agreement, either Binary (BBA)~\cite{bracha1987asynchronous, mostefaoui2014signature, abraham2022efficient, cachin2000random} or Multi-Valued (MVBA)~\cite{vaba, cachin2001secure}, to determine a single winning proposal.} 
Unfortunately, these protocols require expensive cryptography and several rounds of message exchanges, resulting in impractical latency.\fs{(Threshold-signatures (for coin generation) and -encryption (to handle network adversary))} \neil{complexity measures for single-shot async (mvba) in message delays: vaba - 13-19.5 md, dumbo mvba 19-47 md, speeding dumbo 6-12 md. For multi-shot need to add in 3 md for RB in beginning}
%VABA~\cite{vaba} aims to minimize overall message complexity, and generates a shared coin from $n$ consensus instances using Threshold Signatures in only $O(n^2)$ messages.
The asynchronous systems closest related to \sys{} are Honeybadger~\cite{honeybadger} and Dumbo-NG~\cite{gao2022dumbo}. 
Honeybadger~\cite{honeybadger} achieves high throughput by operating $n$ instances of RB in parallel and trading off latency for large batch sizes; it uses $n$ binary BA instances~\cite{bracha1987asynchronous, mostefaoui2014signature, abraham2022efficient, cachin2000random} to select $n-f$ proposals that may commit. Dumbo-NG~\cite{gao2022dumbo} separates data dissemination from consensus to ameliorate batching/latency trade-offs. It constructs parallel chains of RB and layers a \changebars{}{sequential multi-shot} multi-valued BA protocol~\cite{vaba, cachin2001secure} atop; this architecture closely resembles the spirit of \sys{}, but incurs significantly higher latency.\fs{which proposes cuts of RB certificates}

%either Binary (BBA)~\cite{bracha1987asynchronous, mostefaoui2014signature, abraham2022efficient, cachin2000random} or Multi-Valued (MVBA)~\cite{vaba, cachin2001secure}, 

%-------------------------------------------------------------------------------
%   Conclusion
%-------------------------------------------------------------------------------
\section{Conclusion}
\label{sec:conclusion}
This work presents \sys{}, a novel partially synchronous BFT consensus protocol that is actually robust to \textit{partial synchrony}. %We introduce the dual notions of an \textit{asynchrony hangover}, and a \textit{seamlessness} to characterize the performance of partially synchronous protocols in practice.
\sys{} achieves seamlessness, while simultaneously matching the throughput of DAG-based BFT protocols and the latency of traditional BFT protocols.
Drive safe!

%This was Autobahn! Drive safe!
%Once we have arxiv ready paper, we should post in on Ittai's blog.

%-------------------------------------------------------------------------------
\section*{Acknowledgments}
%-------------------------------------------------------------------------------
We are grateful to Brad Karp (our shepherd), and the
anonymous reviewers for their thorough and insightful
comments. This work was supported in part by
the NSF grants CSR-17620155, CNS-CORE 2008667, CNS-CORE 2106954 \fs{these were from Basil -change}, a Sui Research Award, support from ithe Initiative for Cryptocurrencies and Contracts (IC3) as well as gifts from Accenture, AMD, Anyscale, Google,
IBM, Intel, Mohamed Bin Zayed University of Artificial Intelligence,
Samsung SDS, SAP, and VMware.

\bibliographystyle{ACM-Reference-Format}
\bibliography{References/references}

%-------------------------------------------------------------------------------
%   APPENDIX
%-------------------------------------------------------------------------------
\appendix
\pagebreak \newpage
%\cleardoublepage

%\textbf{Notice:} The following materials have not been peer-reviewed.

\section{Proofs}
\label{sec:proofs}

% \textbf{Notice}: The following material has not been peer-reviewed.

\subsection{Safety}\label{proof:safety}
We show that all correct replicas commit, for every slot, the same proposal. It necessarily follows, that all correct replicas compute a consistent log. In the following, we prove agreement for a given slot $s$, and drop the slot number for convenience.

We first prove that, within a view $v$, no two conflicting proposals can be committed. 
\begin{lemma}\label{lem:non-equiv}
    If there exists a \textsc{PrepareQC} or \textsc{CommitQC} for consensus proposal $p$ in view $v$, then there cannot exist a \textsc{PrepareQC} or a \textsc{CommitQC} for consensus proposal $p'\neq p$ in view $v$. 
\end{lemma}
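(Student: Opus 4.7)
The plan is to reduce all three possible conflict combinations (two PrepareQCs, two CommitQCs, or one of each) to a single statement: that two PrepareQCs for distinct digests cannot coexist in a view. The core tool is standard quorum intersection over Prep-Vote signatures, combined with the protocol invariant --- stated explicitly in the view-change section --- that a correct replica casts at most one Prep-Vote per view.

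First I would handle the reduction. A slow-path \textsc{CommitQC} for a proposal $p'$ in view $v$ is aggregated from $2f+1$ \textsc{Confirm-Ack} messages whose digest matches a \textsc{PrepareQC} for $p'$; since the \textsc{Confirm} phase is only started by the leader after broadcasting the corresponding \textsc{PrepareQC}, the existence of a \textsc{CommitQC} in $v$ implies the existence of a \textsc{PrepareQC} for the same proposal in $v$. A fast-path \textsc{CommitQC} is itself assembled from $n=3f+1$ matching \textsc{Prep-Vote}s for $p'$ in view $v$, which \emph{a fortiori} constitutes a \textsc{PrepareQC} (since $3f+1 \ge 2f+1$). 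Hence without loss of generality it suffices to prove that \textsc{PrepareQC}s for two distinct digests cannot coexist in view $v$.

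Next I would apply quorum intersection. A \textsc{PrepareQC} for $p$ contains $2f+1$ \textsc{Prep-Vote} signatures from distinct replicas on $h(p)$, and likewise for $p'$. Any two such voter sets intersect in at least $(2f+1)+(2f+1)-(3f+1) = f+1$ replicas, so at least one intersection member is correct (since at most $f$ are Byzantine). By the single-vote invariant of the protocol, this correct replica signs at most one \textsc{Prep-Vote} in view $v$; since signatures are unforgeable, it cannot have signed votes for both $h(p)$ and $h(p')\neq h(p)$. Contradiction.

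The main obstacle is not the quorum-intersection counting itself, which is routine, but the careful case split that folds both CommitQC variants (slow and fast) back into a statement about PrepareQCs. In particular, one has to be explicit that the slow-path CommitQC's existence \emph{requires} a matching PrepareQC to have been produced, so that the adversary cannot manufacture a CommitQC for a proposal that never gained a PrepareQC. Once this reduction is made clean, the remainder of the proof is a single line of standard BFT accounting.
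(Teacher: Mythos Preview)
Your proposal is correct and follows essentially the same approach as the paper: both reduce to the observation that any \textsc{PrepareQC} or \textsc{CommitQC} in view $v$ entails at least $n-f$ \textsc{Prep-Vote}s for the corresponding proposal, then apply quorum intersection together with the single-vote-per-view invariant. The paper's proof simply asserts in one line that ``there must exist at least $n-f$ replicas who sent a \textsc{Prep-Vote} for $p'$,'' whereas you make the fast-path versus slow-path \textsc{CommitQC} reduction explicit; this extra care is a presentational improvement, not a different argument.
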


\begin{proof}
    Suppose for the sake of contradiction there exists a \textsc{PrepareQC} or \textsc{CommitQC} for consensus proposal $p'\neq p$ in view $v$. 
    There must exist at least $n-f$ replicas who sent a \textsc{Prep-Vote} for $p'$ in view $v$. At least $n-f$ replicas sent a \textsc{Prep-Vote} for proposal $p$ in view $v$. These two quorums intersect in at least $1$ correct replica, a contradiction since a correct replica only sends one \textsc{Prep-Vote} message in a view.
\end{proof}

\begin{lemma}\label{lem:same-view}
    If a correct replica commits a (consensus) proposal $p$ in view $v$, the no correct replica commits a proposal $p'\neq p$ in view $v$.
\end{lemma}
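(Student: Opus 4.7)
The plan is to reduce the claim directly to Lemma~\ref{lem:non-equiv}. An honest replica only commits in view $v$ upon observing a $\textsc{CommitQC}_{s,v}$ for the proposal, obtained either via the slow path (aggregating $2f+1$ matching \textsc{Confirm-Ack} messages for a $\textsc{PrepareQC}_{s,v}$) or the fast path (upgrading a quorum of $n$ matching \textsc{Prep-Vote} messages directly into a \textsc{CommitQC}). The first step is to observe that in both cases, the existence of a $\textsc{CommitQC}_{s,v}$ for $p$ implies the existence of a $\textsc{PrepareQC}_{s,v}$ for $p$: on the slow path this is explicit, while on the fast path the set of $n \geq n-f$ \textsc{Prep-Vote}s already constitutes a \textsc{PrepareQC}.

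The second step is the symmetric observation: if some other honest replica commits $p' \neq p$ in view $v$, then by the same argument a $\textsc{PrepareQC}_{s,v}$ (or a $\textsc{CommitQC}_{s,v}$) for $p'$ must exist. But Lemma~\ref{lem:non-equiv} rules out the coexistence of \textsc{PrepareQC}s (or \textsc{CommitQC}s) for two different proposals in the same view, yielding a contradiction.

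I do not anticipate a real obstacle here, as the lemma is essentially a direct corollary of Lemma~\ref{lem:non-equiv} once the commit rule is unfolded. The only subtlety worth spelling out explicitly is the fast-path case: one must argue that the $n$ \textsc{Prep-Vote}s collected by the leader to form the fast-path \textsc{CommitQC} are themselves a valid \textsc{PrepareQC} (so that Lemma~\ref{lem:non-equiv} applies uniformly), and that an honest replica voting on the fast path in view $v$ cannot simultaneously contribute a \textsc{Prep-Vote} for a conflicting $p'$ in view $v$, since honest replicas issue at most one \textsc{Prep-Vote} per view. This single-vote invariant is already what drives Lemma~\ref{lem:non-equiv}, so no additional machinery is required.
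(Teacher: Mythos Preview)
Your proposal is correct and follows essentially the same approach as the paper: assume a second commit for $p'\neq p$ in view $v$, observe that any commit (fast or slow path) entails a \textsc{PrepareQC} or \textsc{CommitQC} for that proposal in view $v$, and invoke Lemma~\ref{lem:non-equiv} for the contradiction. Your explicit unfolding of the fast path (noting that $n$ \textsc{Prep-Vote}s subsume an $n-f$ \textsc{PrepareQC}) is more detailed than the paper's two-line version but not strictly needed, since Lemma~\ref{lem:non-equiv} already covers \textsc{CommitQC}s directly.
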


\begin{proof}
    Suppose for the sake of contradiction a correct replica commits proposal $p'\neq p$ in view $v$. There must exist a fast \textsc{CommitQC} or a \textsc{PrepareQC} for $p'$ in view $v$. By lemma \ref{lem:non-equiv}, however, any \textsc{PrepareQC} or fast \textsc{CommitQC} in view $v$ must be for proposal $p$.
\end{proof}
Next, we show that if a proposal committed in some view $v$, no other proposal can be committed in a future view.
\begin{lemma}\label{lem:prepare}
    If a correct replica commits a consensus proposal $p$ in view $v$, then any valid \textsc{Prepare} for view $v'>v$ must contain proposal $p$.
\end{lemma}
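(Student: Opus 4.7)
The plan is to proceed by strong induction on $v' > v$. Since the lemma is only about $v' > v$, any valid Prepare at view $v'$ must be accompanied by a valid $\textsc{TC}_{s, v'-1}$ (the only ticket available for $v > 0$), and the leader's \textit{winning proposal} is derived from this TC. So it suffices to show that this winning proposal equals $p$.

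First, I would invoke the induction hypothesis together with Lemma~\ref{lem:non-equiv} to pin down what the TC can possibly contain at views $\geq v$: every $\textsc{PrepareQC}$ at such views is for $p$, and every $\textsc{Prep-Vote}$ cast by an honest replica at such views is for $p$. In particular, every $highQC$ at view $\geq v$ in TC is for $p$, and every $highProp$ stored by an honest replica at view $\geq v$ carries the data of $p$. This reduces the problem to verifying that at least one of the two winning-proposal rules fires on $p$ at a view $\geq v$, since any entry at a view $< v$ is dominated by such a $p$-entry.

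I would then split on how the commit happened in view $v$. In the \emph{slow-path} case, the $\textsc{CommitQC}$ witnesses $2f+1$ $\textsc{Confirm-Ack}$s, so $2f+1$ replicas stored $\textsc{PrepareQC}_{s,v}$ (for $p$) in their $conf[s]$. By standard quorum intersection with the $2f+1$ timeouts in TC, at least $f+1$ of those timeouts carry a $highQC$ at view $\geq v$, which by the preceding paragraph is for $p$. Rule (1) therefore selects $p$, and ties go to $highQC$. In the \emph{fast-path} case, the fast $\textsc{CommitQC}$ witnesses $n$ $\textsc{Prep-Vote}$s at view $v$; intersecting this set with the $2f+1$ timeouts in TC yields an overlap of exactly $2f+1$, so \emph{every} timeout in TC comes from a replica that voted for $p$ at view $v$. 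Among the $\geq f+1$ honest such timeouts, each has $prop[s] = (p, u)$ for some $u \geq v$: it stored $(p,v)$ on voting, and any subsequent overwrite came from a later valid Prepare which, by the induction hypothesis, is also for $p$. Rule (2) therefore sees $p$ appearing $\geq f+1$ times and selects it.

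The main obstacle is the fast-path case. One has to argue carefully that Byzantine timeouts cannot spoil the $f+1$-occurrence count, which requires interpreting ``highest proposal appearing $f+1$ times'' at the level of the proposal \emph{content} (so the $f+1$ honest $(p, u_i)$ entries all count toward $p$, regardless of the particular $u_i$ each honest replica most recently observed). Once this accounting is fixed, the argument that Byzantine spoilers at views $< v$ cannot out-rank a $p$-entry at view $\geq v$, and that the tie-break going to $highQC$ never upgrades an adversarial entry above $p$, are both routine. Putting the two cases together closes the induction and establishes the lemma.
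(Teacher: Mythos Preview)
Your argument follows the paper's: strong induction on $v'$, deriving the winning proposal from $\textsc{TC}_{s,v'-1}$ and splitting on fast versus slow path; you even make explicit the content-level reading of Rule~(2) that the paper's induction step leaves implicit.

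One numerical slip to correct in the slow-path case: from $2f{+}1$ \textsc{Confirm-Ack}s you can only conclude that $f{+}1$ \emph{correct} replicas stored $\textsc{PrepareQC}_{s,v}$ (Byzantine signers need not have), and intersecting those $f{+}1$ correct replicas with the $2f{+}1$ timeouts in the TC guarantees only \emph{one} honest timeout reporting a $highQC$ at view $\geq v$, not $f{+}1$ as you write. This is exactly the count the paper uses, and happily it still suffices for your conclusion: that single entry, together with your preliminary observation that every \textsc{PrepareQC} at view $\geq v$ is for $p$, already pins Rule~(1) to $p$; and since any valid \textsc{Prepare} for $p'\neq p$ exists only at views $\leq v$ (by the induction hypothesis for $>v$), any Rule~(2) candidate for $p'$ is at view $\leq v$ and either loses outright or ties and defers to the $highQC$.
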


\begin{proof}
    We prove by induction on view $v'$.

    \textbf{Base case:} Let $v'=v+1$. Suppose for the sake of contradiction there exists a valid \textsc{Prepare} for a proposal $p'\neq p$ in view $v'$. There are two cases: either 1) $p$ was committed in view $v$ on the fast path or 2) $p$ was committed in view $v$ on the slow path.
    
    \textbf{1) Fast path.} Since $p$ was committed on the fast path there must exist $n$ \textsc{Prep-Vote} messages for $p$ in view $v$, of which at least $n-f$ correct replicas stored a \textsc{Prepare} for proposal $p$ locally. Any TC in view $v$ contains $n-f$ \textsc{Timeout} messages, and by quorum intersection at least $n-2f$ of these \textsc{Timeout} messages must contain a \textsc{Prepare} message for proposal $p$. Since a \textsc{Timeout} message can contain at most $1$ \textsc{Prepare} message, there cannot exist $n-2f$ \textsc{Prepare} messages for $p'$ in view $v$. By lemma \ref{lem:non-equiv}, any \textsc{PrepareQC} or \textsc{CommitQC} in view $v$ must be for $p$. Therefore any \textsc{PrepareQC}, \textsc{CommitQC}, or set of $n-2f$ \textsc{Prepare} messages for $p'$ must be in a view $<v$. Therefore, the winning proposal for any TC in view $v$ must be $p$, a contradiction since by the assumption there was a valid \textsc{Prepare} containing a valid TC for proposal $p'$.

    \textbf{2) Slow path.} Since $p$ was committed on the slow path there must exist $n-f$ \textsc{Confirm-Ack} messages (\textsc{CommitQC}) for $p$ in view $v$, of which at least $n-2f$ correct replicas stored a \textsc{PrepareQC} for proposal $p$ locally. Any TC in view $v$ contains $n-f$ \textsc{Timeout} messages, and by quorum intersection at least $1$ of these \textsc{Timeout} messages must contain a \textsc{PrepareQC} for proposal $p$. By lemma \ref{lem:non-equiv}, any prepare or \textsc{CommitQC} in view $v$ must be for $p$. Therefore any \textsc{PrepareQC} or \textsc{CommitQC} for $p'$ must be in a view $<v$. Therefore, the winning proposal for any TC in view $v$ must be $p$ since even if there does exist $f+1$ \textsc{Prepare} for $p'$ in view $v$, a prepare/\textsc{CommitQC} takes precedence. This, however, is a contradiction since by the assumption there was a valid \textsc{Prepare} in view $v'$ containing a valid TC with a winning proposal $p'$.
    
    \textbf{Induction Step:} We assume the lemma holds for all views $v'-1> v$, and now prove it holds for view $v'$. Suppose for the sake of contradiction there exists a valid \textsc{Prepare} for a proposal $p'\neq p$ in view $v'$. 
    Since $v'>0$, this \textsc{Prepare} message must contain a TC for view $v'-1$. This TC consists of $n-f$ \textsc{Timeout} messages, for which the winning proposal is $p'$. This means that the highest (by view) \textsc{PrepareQC} or set of $n-2f$ matching \textsc{Prepare} messages must be for $p'$. By the base case and induction step any valid \textsc{Prepare} (and thereby \textsc{PrepareQC}) in views $>v$ must be for proposal $p$.  There are two cases: either $p$ was committed on the fast path or on the slow path in view $v$.
    
    \textbf{Fast Path.} Since $p$ was committed on the fast path, the only \textsc{Prepare} message all correct replicas voted for in view $v$ was for proposal $p$, so there can exist at most $f$ \textsc{Timeout} messages containing a \textsc{Prepare} in view $v$ for proposal $p'$. Thus, any \textsc{PrepareQC} or set of $n-2f$ \textsc{Prepare} for proposal $p'$ must be in a view $<v$. Since $p$, was committed on the fast path, there exists at least $n-f$ replicas who updated their highest \textsc{Prepare} to be \textsc{Prepare} for proposal $p$ in view $v$. By quorum intersection, any TC must contain at least $n-2f$ of these messages, a contradiction since the winning proposal is for $p'$.
    
    \textbf{Slow Path.} By lemma \ref{lem:non-equiv}, any \textsc{PrepareQC} in view $v$ must be for proposal $p$. Thus, any \textsc{PrepareQC} must be in a view $<v$. Since $p$ was committed on the slow path, there must exist at least $n-f$ replicas who voted for a \textsc{PrepareQC} for proposal $p$ in view $v$.By quorum intersection, any TC must contain at least $1$ \textsc{Timeout} message from this set of replicas, which will contain a \textsc{PrepareQC} with view at least $v$. By the induction assumption any \textsc{Prepare} message with view $>v$ must be for proposal $p$. Any ties in the view between a \textsc{PrepareQC} and a set of $f+1$ \textsc{Prepare} is given to the \textsc{PrepareQC}, so the winning proposal must be $p$ for any TC, a contradiction.
\end{proof}

\begin{lemma}\label{lem:safe-slot}
    If a correct replica commits a consensus proposal in slot $s$, then no correct replica commits a different consensus proposal in slot $s$.
\end{lemma}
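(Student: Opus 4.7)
The plan is to combine the three preceding lemmas via a simple case analysis on the views in which the two commits occur. Suppose for contradiction that an honest replica $R_1$ commits proposal $p$ in view $v$ and another honest replica $R_2$ commits $p' \neq p$ in view $v'$. Without loss of generality we may assume $v \leq v'$ (if not, swap the roles of $R_1$ and $R_2$), and we split on whether $v = v'$ or $v < v'$.

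If $v = v'$, this contradicts Lemma~\ref{lem:same-view} directly: that lemma already forbids two honest replicas from committing distinct proposals in the same view. If $v < v'$, I would argue as follows. To commit $p'$ in view $v'$, $R_2$ must have observed either a fast \textsc{CommitQC} (i.e., $n$ \textsc{Prep-Vote}s for $p'$, hence a \textsc{PrepareQC} for $p'$) or a slow \textsc{CommitQC} formed from a \textsc{PrepareQC} for $p'$ in view $v'$. In either case, some valid \textsc{Prepare} message for $p'$ was issued in view $v'$, because a \textsc{PrepareQC} can only form after $n-f$ honest votes on a \textsc{Prepare}, and honest replicas only vote on valid \textsc{Prepare} messages. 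By Lemma~\ref{lem:prepare}, however, since $v' > v$ and $p$ committed in view $v$, every valid \textsc{Prepare} in view $v'$ must carry $p$, not $p'$, contradicting $p' \neq p$.

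The main obstacle is really just making sure the invocation of Lemma~\ref{lem:prepare} is airtight: one must argue that whichever path ($R_2$ took to commit — fast or slow — the commit is preceded by a valid \textsc{Prepare} message in view $v'$ carrying $p'$, so that Lemma~\ref{lem:prepare}'s conclusion can be applied to yield a contradiction. This is immediate from the protocol description (commits require either a \textsc{PrepareQC} of $n$ votes or a \textsc{CommitQC} built on a \textsc{PrepareQC}, each of which is preceded by a leader's \textsc{Prepare} broadcast that is validated before any honest replica votes), so the proof is ultimately a short corollary of Lemmas~\ref{lem:same-view} and~\ref{lem:prepare}. I do not expect any real technical difficulty beyond being explicit about this reduction.
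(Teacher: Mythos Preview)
Your proposal is correct and follows essentially the same approach as the paper: a case split on whether the two commits occur in the same view (handled by Lemma~\ref{lem:same-view}) or different views (handled by reducing to a valid \textsc{Prepare} in the later view and invoking Lemma~\ref{lem:prepare}). The only cosmetic difference is that the paper fixes $v$ as the \emph{earliest} view in which any correct replica commits, whereas you use a WLOG argument to assume $v \le v'$; these are equivalent setups.
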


\begin{proof}
   Let view $v$ be the earliest view in which a correct replica commits a proposal $p$. Suppose for the sake of contradiction a correct replica commits a proposal $p'\neq p$ in view $v'$. There are two cases: 1) $v'=v$ and 2) $v'>v$.
   \textbf{Case 1.} By lemma \ref{lem:same-view} no correct replica will commit a proposal $p'\neq p$ in view $v$, a contradiction. 
   \textbf{Case 2.} Since $p'$ is committed by a correct replica, there must exist a \textsc{CommitQC} or a fast \textsc{CommitQC} for $p'$ in view $v'$, which implies there must exist a valid \textsc{Prepare} message for $p'$ in view $v'$. By lemma \ref{lem:prepare} any valid \textsc{Prepare} message in view $v'>v$ must be for proposal $p$, a contradiction.
\end{proof}

Finally, we show that all correct replicas arrive at a consistent log.
\begin{theorem}
    All correct replicas commit the same requests in the same order.
\end{theorem}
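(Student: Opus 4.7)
The plan is to reduce the theorem to Lemma~\ref{lem:safe-slot} combined with the determinism of the cut-processing procedure (\S\ref{sec:synchronization}). First, I would argue by induction on the slot number $s$ that every honest replica that commits slot $s$ appends the same sequence of requests to its log as every other honest replica that commits slot $s$. Since execution of slot $s$ is held back until all slots $s' < s$ have been executed, it suffices to establish the inductive step assuming consistent logs through slot $s-1$.

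For the inductive step, Lemma~\ref{lem:safe-slot} gives that every honest replica that commits slot $s$ commits the same consensus proposal $P_s$, i.e., the same vector of certified tips $\langle \mathit{tips}_1, \dots, \mathit{tips}_n \rangle$. By the induction hypothesis, all honest replicas share the same value of \textit{last-commit}$[l]$ for every lane $l$ at the start of processing slot $s$. The processing step then, for each lane $l$, identifies a unique oldest ancestor $\mathit{start}_l$ of $\mathit{tips}_l$ with position $\textit{last-commit}[l]+1$, and deterministically zips the resulting per-lane suffixes. I therefore need to show that the per-lane suffix $[\mathit{start}_l, \dots, \mathit{tips}_l]$ is the same at every honest replica.

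This is where the data-layer invariants earn their keep. The tip $\mathit{tips}_l$ carries a \textsc{PoA}, so FIFO voting (\S\ref{sec:datalayer}) guarantees that at least one honest replica holds the entire history, and because lane positions are gap-free and proposals are hash-chained, any two syntactically valid suffixes of length $\mathit{tips}_l.pos - \textit{last-commit}[l]$ ending at $\mathit{tips}_l$ must be identical: each proposal determines its parent by hash, so a backward walk from $\mathit{tips}_l$ is unique. Any fork of a Byzantine lane that lives strictly below $\mathit{start}_l$ is explicitly discarded, so equivocation above that cutoff cannot cause divergence. Thus all honest replicas extract the same set of new proposals per lane.

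Finally, the deterministic zipping function (e.g., round-robin across lanes from oldest to newest) maps identical per-lane inputs to an identical sequence of requests, so all honest replicas append the same ordered block to their log upon committing slot $s$. Updating \textit{last-commit}$[l] = \mathit{tips}_l.pos$ is likewise deterministic, preserving the induction hypothesis. The main obstacle I anticipate is arguing cleanly around Byzantine equivocation within a lane; the key observation is that consensus pins down $\mathit{tips}_l$ and hence, by the hash-chain and position-gap-free rules, pins down every ancestor down to position $\textit{last-commit}[l]+1$, making the suffix canonical regardless of which honest replica performs the synchronization.
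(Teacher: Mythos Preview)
Your proposal is correct and follows essentially the same approach as the paper: reduce to Lemma~\ref{lem:safe-slot} for per-slot agreement on the cut, appeal to the hash-chain structure (the paper phrases this as ``collision resistance of hash functions'') to make each tip's history canonical, and invoke the deterministic fork-resolution and zipping procedure to conclude. Your version is more explicit---you frame the argument as an induction on slots and carry \textit{last-commit}$[l]$ through the hypothesis, whereas the paper compresses these steps into two sentences---but the skeleton and the key lemmas are identical.
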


\begin{proof}
    Correct replicas order all consensus proposals by increasing slot numbers. By lemma \ref{lem:safe-slot}, for any slot $s$, correct replicas commit the same consensus proposal in slot $s$. By the collision resistance property of hash functions, all correct replicas will agree on the same history for each tip within a consensus proposal. And since all correct replicas use the same deterministic function that resolves forks in Byzantine lanes, zips the data, and constructs an ordering within a slot, all correct replicas will order the same blocks in the same order.
\end{proof}

\subsection{Liveness}\label{proof:liveness}
Next, we prove that all correct clients' requests will eventually be committed. We assume a continuous input stream of client requests to all correct replicas (and thus that all data lanes progress). We first prove that \sys{} consensus itself is guaranteed to make progress given synchrony, and then show that every correct replica's data proposal will eventually be committed. 

To do so, we lay some brief groundwork. We define GST to be the (unknown) time at which the network becomes synchronous, and $\Delta$ to be the (known) upper bound on message delivery time after GST. For simplicity we assume synchrony lasts forever after GST\footnote{In practice, synchrony is of course finite, and only needs to be "sufficiently" long.}. We set a replica's local consensus timeout value for any slot $s$ \fs{can be set to a conservative. - or is 10 the minimum?} to  $10\Delta$; this, we show, suffices to guarantee commitment in good intervals.

%We start by showing that correct leaders are guaranteed to start a proposal.

We begin by showing that every correct replica will eventually commit a slot $s$

To do so, we first prove that after GST, a correct leader will take a bounded amount of time to send a \textsc{Prepare} message. For any view $v>0$, this is trivial but for $v=0$ it may take time for a correct leader to satisfy the coverage rule. We show this takes at most $2\Delta$ after GST.
\begin{lemma}\label{lem:coverage}
    If slot $s$, view $v$, starts after GST and is led by a correct leader, it will take at most $2\Delta$ time after receiving a ticket to send a \textsc{Prepare} message in slot $s$, view $v$.
\end{lemma}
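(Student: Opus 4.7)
The plan is to show that lane coverage, the only non-trivial prerequisite for an honest leader to broadcast \textsc{Prepare} beyond having a ticket, is satisfied within $2\Delta$ of the ticket's arrival. Recall that coverage (\S\ref{sec:coverage}) requires the leader to observe $n-f$ new certified tips relative to the frontier of the previous committed slot, and recall that at least $n-f$ replicas are honest and, by assumption, under continuous client load.

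First I would argue that under synchrony, an honest proposer certifies a new car in at most $2\Delta$: one $\Delta$ for the \textsc{Prop} broadcast to reach the $n-f$ honest voters, and one $\Delta$ for the $f+1$ resulting \textsc{Vote}s to return and form a \textsc{PoA}. Next, I would argue that under continuous load and pipelining, at the moment $t \geq \text{GST}$ when the leader receives its ticket, every honest proposer either has a car in flight whose \textsc{Prop} was broadcast at some time $t_0 \leq t$, or immediately issues one, so a fresh certified tip exists at the proposer by $\max(t_0, t) + 2\Delta \leq t + 2\Delta$.

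Then I would argue that these fresh certified tips are visible to the leader by $t + 2\Delta$ as well: the leader is itself one honest proposer (so its own lane's new tip is learned locally without any network delay beyond the $2\Delta$ vote round trip), and each other honest proposer, upon forming a \textsc{PoA}, either piggybacks it on a subsequent \textsc{Prop} that the leader received strictly earlier, or, per the data layer specification, broadcasts it eagerly. In either case the \textsc{PoA}s corresponding to cars started at or before $t$ are already in transit to the leader and arrive no later than $2\Delta$ after their originating \textsc{Prop}, i.e., no later than $t + 2\Delta$. Together with the leader's own lane, this yields at least $n-f$ new certified tips at the leader by time $t + 2\Delta$, at which point coverage is satisfied and the honest leader immediately sends \textsc{Prepare}.

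The main obstacle is bookkeeping for proposers whose in-flight \textsc{Prop} was sent long before GST: such a \textsc{Prop} may not yet have been delivered everywhere at time $t$, and its \textsc{PoA} formation therefore has a worst-case timing anchored at GST rather than at the original broadcast. I would handle this by noting that the leader can simply wait for the \emph{next} car from each such proposer, which is guaranteed to start by time $t$ under continuous load and pipelining, and whose \textsc{PoA} is observable at the leader by $t + 2\Delta$ as argued above; since the timeout for the view is set to $10\Delta \gg 2\Delta$, this slack is comfortably absorbed.
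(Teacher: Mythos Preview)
Your argument has a real timing gap. You aim to show the leader observes $n-f$ new \emph{certified} tips (\textsc{PoA}s) by $t+2\Delta$, but the arithmetic is off by one hop: for a car whose \textsc{Prop} is issued at $t_0\le t$, the \textsc{PoA} forms at the \emph{proposer} no earlier than $t_0+2\Delta$, and then must still travel to the leader, arriving only at $t_0+3\Delta$. Your sentence ``piggybacks it on a subsequent \textsc{Prop} that the leader received strictly earlier'' is backwards---the subsequent \textsc{Prop} carrying \textsc{PoA}$_i$ is sent \emph{after} \textsc{PoA}$_i$ forms, not before---and ``broadcasts it eagerly'' still costs one $\Delta$. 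Appealing to the $10\Delta$ timeout in your final paragraph does not rescue the lemma, whose statement commits you to exactly $2\Delta$.

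The paper closes this gap in two ways you are missing. First, it case-splits on the view: for $v>0$ the leader holds a \textsc{TC} and either reproposes the winning proposal or immediately proposes its own local cut; coverage is not checked, so the bound is trivially $0$. Second, for $v=0$ the paper exploits the ticket itself as an anchor: the $n$ tips inside \textsc{Prepare}$_{s-1}$ have necessarily already been broadcast (otherwise the slot-$(s{-}1)$ leader could not have referenced them), so for the $\ge n-f$ honest ones the \textsc{Vote}s return and \textsc{PoA}s form within just one $\Delta$; the honest proposer then issues the next data proposal at height $+1$, which reaches the leader in one more $\Delta$. Crucially, coverage (\S\ref{sec:coverage}) only asks for $n-f$ lanes with a new \emph{proposal}---not a new certified tip---so the leader need not wait for that next proposal's \textsc{PoA} to arrive; observing the proposal itself suffices. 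Your stronger reading of coverage (``$n-f$ new certified tips'') is precisely what forces the extra $\Delta$ you cannot afford.
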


\begin{proof}
    There are two cases: 1) $v=0$ or 2) $v>0$.
    
    \textbf{Case 1.} For view $0$, a correct leader must have received a \textsc{Prepare} message in slot $s-1$ containing a proposal with $n$ tips, of which at least $n-f$ are from correct replicas. A proof of availability for correct tips will take at most $\Delta$ time to form after GST, and the next tip at the next greater height will take at most $\Delta$ time to reach all replicas. Therefore, by $2\Delta$, the leader will receive at least $n-f$ new tips, satisfying the coverage rule to propose a new cut.

    \textbf{Case 2.} For view $v>0$, a correct leader must have received a TC for view $v-1$ in slot $s$. If the winning proposal for a TC is $\neq \bot$, then the leader will immediately send a \textsc{Prepare} message for slot $s$, view $v$, containing the winning proposal (and the TC). Otherwise if the winning proposal is $\bot$, the leader will immediately send a \textsc{Prepare} message containing its own local set of $n$ tips as the consensus proposal.
    Therefore, it takes at most $2\Delta$ for a correct leader to send a \textsc{Prepare} message in slot $s$, view $v$.
\end{proof}

Next, we show that all correct replicas will reliably enter a common view, in bounded time. This ensures that all correct replicas will accept messages from the leader, and respond accordingly.
\begin{lemma}\label{lem:view-sync}
    For any slot $s$, let $v$ be a view after GST with a correct leader, and $t$ be the time the first correct replica enters $v$. All correct replicas will enter $v$ by time $t+4\Delta$.
\end{lemma}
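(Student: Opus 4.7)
The plan is to prove the lemma by case analysis on how the first correct replica $R_1$ comes to enter view $v$ at time $t$. Only two mechanisms exist: either (i) $R_1$ receives a valid ticket $T_{s,v}$ (a $\textsc{CommitQC}_{s-1}$ if $v=0$, or a $\textsc{TC}_{s,v-1}$ possibly embedded in a \textsc{Prepare}), or (ii) $R_1$ locally assembles a $\textsc{TC}_{s,v-1}$ from $2f+1$ \textsc{Timeout} messages. The key idea is that in every case, a quorum certificate exists at some time $\leq t$ whose underlying witnesses include at least $f+1$ correct broadcasters, and I can exploit GST-bounded delivery of those honest broadcasts.

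When $v=0$, the ticket $\textsc{CommitQC}_{s-1}$ is broadcast by the prior slot's leader. Since $R_1$ receives it at time $t$, it was sent at some $t' \leq t$, so by GST every correct replica receives it (and thus enters view $0$) by $t' + \Delta \leq t + \Delta$. When $v>0$, the $\textsc{TC}_{s,v-1}$ was formed by some replica at a time $t'' \leq t$ and contains at least $f+1$ \textsc{Timeout} messages from correct replicas, each broadcast at time $\leq t'' \leq t$. By $t+\Delta$ every correct replica has received these $f+1$ \textsc{Timeout}s and, by the rebellion rule, broadcasts its own \textsc{Timeout}. By $t + 2\Delta$ every correct replica has therefore collected \textsc{Timeout}s from all correct replicas (at least $2f+1$), assembles a $\textsc{TC}_{s,v-1}$, and advances to view $v$. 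The additional slack up to $4\Delta$ conservatively absorbs corner cases, most notably correct replicas that have not yet entered view $v-1$ and must first acquire the upstream ticket before their acceptance/rebellion logic can fire, as well as the extra forwarding step introduced by parallel multi-slot agreement (e.g., a \textsc{Prepare}$_{s-1}$ must be relayed before the timer for view $v$ is armed).

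The main obstacle I anticipate is the sub-case in which $R_1$ enters view $v$ via a \textsc{Prepare} that a Byzantine leader delivers only to $R_1$: propagation of the \textsc{Prepare} itself cannot then be relied upon. I would side-step this by reasoning not about the $\textsc{TC}$ as a unit, but about the $f+1$ correct \textsc{Timeout} signatures it embeds: these necessarily predate $t$ and, being broadcasts from correct replicas, reach every correct replica within $\Delta$, triggering the rebellion rule. View synchronization therefore becomes independent of the (possibly adversarial) leader's behavior, which is the delicate point that makes the $4\Delta$ bound go through and, crucially, lays the groundwork for the next lemma establishing that all correct replicas commit within the $10\Delta$ timer window.
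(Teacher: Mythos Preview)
Your approach is genuinely different from the paper's, and the difference matters.

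The paper's proof does \emph{not} reason about Bracha-style amplification of \textsc{Timeout} broadcasts at all. Instead it exploits the hypothesis that the leader of $(s,v)$ is correct and uses that leader as a relay: the first correct replica $R_1$ to enter $v$ forwards its ticket to the leader ($\Delta$), the leader assembles a \textsc{Prepare} (at most $2\Delta$ by the coverage lemma when $v=0$, instantaneously when $v>0$), and broadcasts it ($\Delta$); every correct replica receives this \textsc{Prepare} and enters $v$. The $4\Delta$ bound is exactly $\Delta+2\Delta+\Delta$.

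Your rebellion-based argument for $v>0$ is valid (and in fact tighter than the paper's uniform bound), but your $v=0$ case has a real gap. You assume the $\textsc{CommitQC}_{s-1}$ was broadcast and therefore reaches everyone within $\Delta$, but the replica that assembled and sent it may be Byzantine and may have unicast it only to $R_1$. Your proposed side-step---drilling down to the $f+1$ honest signatures inside the certificate---does not help here: the constituent \textsc{Confirm-Ack} (or \textsc{Prep-Vote}) messages are \emph{unicast to the leader}, not broadcast, so there is no set of honest broadcasts predating $t$ that you can invoke GST-delivery on. The same problem arises in the parallel variant when the ticket is a \textsc{Prepare}$_{s-1}$ from a Byzantine leader. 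The paper closes this hole precisely by having $R_1$ forward the ticket to the (correct) leader of $(s,v)$, a mechanism you mention only vaguely as ``slack'' rather than as the load-bearing step it is.
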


\begin{proof}
    The earliest correct replica to enter view $v$ in slot $s$ must have received a \textsc{Prepare} message in slot $s-1$ or a TC in slot $s$, view $v-1$ (ticket). It will then forward this ticket to the leader of slot $s$, view $v$, which will arrive at the leader within $\Delta$ time. \fs{a TC, by design, should also just form at everyone}\neil{right this is just to bound the time replicas receive the same ticket} If $v=0$, by lemma \ref{lem:coverage} the leader's proposal will take at most $2\Delta$ time to form. Otherwise if $v>0$, then the leader And it then sends a \textsc{Prepare} message for slot $s$, which arrives at all correct replicas within $\Delta$ time, after which all correct replicas enter view $v$.
\end{proof}

Recall, that replicas only vote for a consensus proposal (\textsc{Prepare}) if they can locally assert to the availability of all data. This is trivial when proposing only cerified tips (no synchronization is necessary), but when proposing optimistic tips, replicas may need to first fetch missing data. We show that this takes only a bounded amount of time.
\begin{lemma}\label{lem:data-sync}
    If a correct replica receives a \textsc{Prepare} message in slot $s$, view $v$, from a correct leader after GST, it will take at most $2\Delta$ before it sends a \textsc{Prep-Vote} message.
\end{lemma}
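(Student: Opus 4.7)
The plan is to analyze the two possible regimes for proposed tips and bound the time the replica needs before it can send a \textsc{Prep-Vote}. The key observation is that Autobahn's data layer guarantees that \emph{ancestors} of any tip in a Prepare are certified, so those can be fetched asynchronously off the critical path; the only potentially blocking synchronization is on the \emph{tip itself} and only when optimistic tips (\S\ref{sec:optimistic_tips}) are used. I would therefore split into three cases based on what the Prepare contains.

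First I would handle the easy cases. Case (a): all proposed tips are certified. Then by the definition of the voting rule a correct replica can check validity of \textsc{Prepare} (ticket, slot, non-repeat) locally and reply immediately; the time to send \textsc{Prep-Vote} is $0 \le 2\Delta$. Case (b): $v>0$ and the Prepare carries a winning proposal derived from the TC. By the argument in \S\ref{sec:optimistic_tips}, any tip inside a winning proposal was already supported by at least $f+1$ \textsc{Prep-Vote}s in a previous view, and is thus \emph{implicitly certified}; again no blocking sync is needed and the replica votes immediately.

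Case (c) is the substantive one: $v=0$ (or $v>0$ with no winning proposal) and the leader proposes \emph{optimistic} tips. Here a replica $R$ that is missing some tip $t$ issues a direct sync request to the leader $L$ and waits for the reply before voting. Because $L$ is correct, it only includes a tip in its Prepare if it has the tip's data locally (either its own leader-tip, or a tip delivered to it by the data layer). After GST, $R$'s request reaches $L$ within $\Delta$, and $L$'s reply reaches $R$ within another $\Delta$, so $R$ receives the missing tip within $2\Delta$ of receiving \textsc{Prepare}. The ancestors referenced by $t$ are certified by the FIFO-voting and chaining properties of the data layer (\S\ref{sec:datalayer}, \S\ref{sec:synchronization}), so their fetch proceeds in parallel and does not delay voting. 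Hence \textsc{Prep-Vote} is sent within $2\Delta$ of receiving \textsc{Prepare}.

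The main obstacle I anticipate is justifying that the tip-sync round-trip is always sufficient, even when multiple tips are missing or when an optimistic tip actually belongs to a Byzantine lane owner. The former is handled because $R$ can batch all missing-tip requests into a single message to $L$, preserving the $2\Delta$ bound. The latter is handled by noting that correctness of $L$ (not of the lane owner) is what matters here: a correct leader, by the optimistic-tip rule, only proposes tips it has itself received, so $L$ can always answer $R$'s sync request from its local state. Taking the maximum across the three cases gives the claimed $2\Delta$ bound.
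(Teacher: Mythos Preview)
Your proposal is correct and follows essentially the same approach as the paper: distinguish certified tips (vote immediately) from optimistic tips (sync the tip itself from the correct leader in one round-trip, i.e., $2\Delta$), and observe that ancestors are already certified via the parent \textsc{PoA} so they need not block the vote. Your additional case~(b) for $v>0$ with a winning proposal and your explicit handling of batched sync requests and Byzantine lane owners are refinements the paper's proof leaves implicit, but they do not change the argument's structure.
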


\begin{proof}
    a correct replica will only vote for a \textsc{Prepare} message if it can assert the availability of every tip (and it's history) in the consensus proposal. If the tip is certified (\textsc{PoA}), no further work is necessary, and the replica can vote immediately. If it instead is not certified (optimistic) a replica must ensure that it has already received the associated data proposal. If it has, it may vote immediately.
    Otherwise, a replica will send to the leader a \textsc{Sync} message, requesting all missing optimistic tips, which will arrive at the leader by $\Delta$ time. A correct leader, by design, will only propose optimistic tips for which it has all the associated data proposal; thus it will reply to a sync request with the missing data proposal (batch + parent \textsc{PoA}), which takes at most $\Delta$ time to arrive. Notably, no sync is necessary for ancestors of the respective tips, as the parent \textsc{PoA} transitively proves it's availability. 
    % \fs{not necessarily? this is ommitting the discussion of certified tips. The leader need not have the data synchronized to propose; but everyone, including the leader, can get it in one RTT. For optimistic tips (just the latest tip) the leader has the data}\neil{right didn't do a good job of explaining this but this is the worst case scenario of optimistic tip where you need to sync. for the certified tip/optimistic when you have the data case replicas can vote immediately so in those cases the time is 0 but in the worst case it can take up to 2 Delta. will reword so this is more clear.}, will then send the data for the missing tips to the replica, which will take at most $\Delta$ time to arrive.
\end{proof}

Next, we show that all correct replicas will commit in a view led by a correct leader.
\begin{lemma}\label{lem:honest-commit}
    If there is some view $v$ in slot $s$ that is led by a correct leader after GST, then all correct replicas commit the correct leader's proposal in view $v$.
\end{lemma}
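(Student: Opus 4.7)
The plan is a straight-forward timing argument that composes the previously established lemmas to show that, within the $10\Delta$ timeout window, a \textsc{CommitQC} reaches every correct replica before any of their view-$v$ timers expire. Let $t$ denote the time at which the first correct replica enters view $v$ (i.e.\ first obtains the ticket $T_{s,v}$ and starts its timer). All subsequent bounds are stated relative to $t$.

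First, I would invoke Lemma \ref{lem:view-sync} to conclude that by time $t+4\Delta$ every correct replica has received the correct leader's \textsc{Prepare} message for slot $s$, view $v$, and hence has entered view $v$ and started its timer no later than $t+4\Delta$. Second, I would apply Lemma \ref{lem:data-sync}: each correct replica that receives the \textsc{Prepare} sends a valid \textsc{Prep-Vote} within at most $2\Delta$ (the extra time covering any optimistic-tip synchronization with the leader). Thus by $t+6\Delta$ all $n-f$ correct replicas have sent matching \textsc{Prep-Vote}s for the leader's proposal; they arrive at the leader by $t+7\Delta$, allowing it to assemble a \textsc{PrepareQC} and broadcast the \textsc{Confirm} (slow path suffices—we do not need the fast path here).

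Continuing the pipeline, by $t+8\Delta$ all correct replicas receive the \textsc{Confirm} and reply with \textsc{Confirm-Ack}; these reach the leader by $t+9\Delta$. The leader assembles a \textsc{CommitQC} and broadcasts it so that every correct replica receives it by $t+10\Delta$. Since a timer is cancelled upon observing a \textsc{CommitQC} for the slot, and since the earliest correct replica's timer started at $t$ with duration $10\Delta$, no correct replica will time out on view $v$ before committing. Consequently, every correct replica commits the leader's proposal in view $v$, as required.

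The main obstacle is the book-keeping: I must ensure that (i) no correct replica has already advanced past view $v$ (ruled out because advancement requires either a \textsc{CommitQC}, which would itself constitute commitment of the leader's proposal by Lemma \ref{lem:prepare}, or a higher-view \textsc{TC}, which cannot form before any correct replica times out at $t+10\Delta$), and (ii) the $2\Delta$ slack in Lemma \ref{lem:data-sync} subsumes the optimistic-tip sync round-trip so that Prep-Votes really do reach the leader within the budget. Everything else is a careful addition of $\Delta$-delays along the Prepare/Confirm/Commit pipeline, staying within the $10\Delta$ timeout chosen for view $v$.
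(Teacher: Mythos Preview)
Your proposal is correct and follows essentially the same approach as the paper: invoke Lemma~\ref{lem:view-sync} for the $4\Delta$ view-entry bound, Lemma~\ref{lem:data-sync} for the $2\Delta$ vote delay, and then chain $\Delta$-hops through the \textsc{Prepare}/\textsc{Confirm}/\textsc{Commit} pipeline to show a \textsc{CommitQC} arrives within the $10\Delta$ timeout. Your explicit book-keeping in point~(i)---arguing that no correct replica can have already advanced past view $v$---is a detail the paper's proof leaves implicit, so your version is, if anything, slightly more careful.
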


\begin{proof}
    By lemma \ref{lem:view-sync}, all correct replicas will receive a \textsc{Prepare} message in slot $s$, view $v'$, and enter view $v'$ within $4\Delta$ of the first correct replica entering $v'$. By lemma \ref{lem:data-sync}, all correct replicas will take at most $2\Delta$ time to send a \textsc{Prep-Vote} to the leader, who will receive these votes by $\Delta$ time and form a \textsc{PrepareQC} from these $n-f$ votes (or a fast \textsc{CommitQC} if it receives $n$ votes. The leader will then send a \textsc{Confirm} message if it formed a \textsc{PrepareQC} or a \textsc{Commit} message if it formed a fast \textsc{CommitQC} to all replicas, which will arrive within $\Delta$ time. If a correct replica receives a valid fast \textsc{CommitQC}, it will commit. Otherwise, correct replicas will send a \textsc{Conf-Vote} to the leader, which will arrive at the leader by $\Delta$ time. The leader will form a \textsc{CommitQC} from $n-f$ \textsc{Conf-Vote} messages, and then send a \textsc{Commit} message containing the \textsc{CommitQC} to all correct replicas, who will receive it by $\Delta$ time. Upon receiving a \textsc{Commit} message, a correct replica will commit. Since \fs{haven't said it is? Should say: "by setting it to 10, we can thus guarantee". 10 seems high?}\neil{added a sentence in the beginning of the liveness section about this} the local timeout is $10\Delta$, all correct replicas will commit in view $v'$ before timing out.
\end{proof}

Using the above lemma, we show that every correct replica will eventually commit in slot $s$.
\begin{lemma}\label{lem:main-live}
    Let view $v$ in slot $s$ start after GST. Every correct replica eventually commits a proposal in a view $v'\geq v$ in slot $s$.
\end{lemma}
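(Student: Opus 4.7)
The plan is to show that, starting from view $v$, the view sequence cannot stall indefinitely at any Byzantine-led view, and within a bounded number of views we are guaranteed to hit a correct leader, at which point Lemma~\ref{lem:honest-commit} finishes the job.

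First, I would establish a \emph{view-advancement} claim: if the first correct replica enters some view $v'' \geq v$ at time $t$, then either (a) some correct replica commits in slot $s$ by time $t + O(\Delta)$, or (b) every correct replica enters view $v'' + 1$ by some bounded time $t + c\Delta$ (for a constant $c$ compatible with the $10\Delta$ local timeout chosen for the protocol). The argument mirrors the proof of Lemma~\ref{lem:view-sync}: the first correct replica that enters $v''$ starts its timer; by the conservative choice of $10\Delta$ and Lemma~\ref{lem:honest-commit} (which does not apply here since the leader is Byzantine), it will eventually time out and broadcast \textsc{Timeout}. Once any single correct replica broadcasts \textsc{Timeout} for $v''$, the $f+1$ forwarding rule from \S\ref{sec:view_change} ensures that every correct replica, upon receiving $f+1$ \textsc{Timeout} messages (at least one from a correct replica), joins the rebellion by broadcasting its own \textsc{Timeout}. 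Within $\Delta$ further time, all $2f+1$ correct replicas receive $2f+1$ distinct \textsc{Timeout} messages, locally assemble a $\textsc{TC}_{s,v''}$, and advance to view $v''+1$.

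Second, I would combine this with the leader-rotation schedule. Because the leader function is deterministic and cycles through replicas (modified by the offset introduced in \S\ref{sec:parallel} to decorrelate schedules across slots), among any $f+1$ consecutive views in slot $s$, at least one is led by a correct replica. Let $v^\star$ be the first such view with $v^\star \geq v$. Applying the view-advancement claim iteratively to views $v, v+1, \dots, v^\star - 1$, every correct replica eventually enters view $v^\star$ after GST.

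Third, I would invoke Lemma~\ref{lem:honest-commit} on view $v^\star$: since $v^\star$ starts after GST and is led by a correct leader, every correct replica commits the leader's proposal in view $v^\star$, which is a view $v' = v^\star \geq v$, as required.

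The main obstacle is the view-advancement claim: one must carefully verify that the $f+1$-forwarding rule plus the $10\Delta$ timeout budget indeed guarantee that \emph{all} correct replicas form a \textsc{TC} and advance before their own timers have expired uselessly, and that this works even when correct replicas entered $v''$ at slightly different times (a spread of at most $4\Delta$ by Lemma~\ref{lem:view-sync}). A subtle point is that during a Byzantine-led view some correct replicas may already be partway through their timer when others have just entered; the timeout of $10\Delta$ is chosen precisely to absorb this spread together with the $O(\Delta)$ message exchanges needed to assemble a \textsc{TC}. Once that bookkeeping is in place, the induction on views and the rotation argument are straightforward.
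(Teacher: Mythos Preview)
Your proof is correct and follows the same strategy as the paper: identify a correct-leader view $v' \geq v$ (guaranteed by leader rotation and the bound of $f$ Byzantine replicas) and apply Lemma~\ref{lem:honest-commit}. The paper's own proof is only two sentences and leaves the view-advancement argument—that correct replicas actually \emph{reach} $v'$ via the timeout and $f{+}1$-forwarding mechanism—entirely implicit, so your explicit treatment of \textsc{TC} formation and timer bookkeeping is in fact more rigorous than what appears in the paper.
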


\begin{proof}
   Since the number of Byzantine replicas is bounded by $f$, there exists some view $v'\geq v$ in slot $s$, such that $v'$ is led by a correct leader. By lemma \ref{lem:honest-commit}, all correct replicas will commit the correct leader's proposal in view $v'$.
\end{proof}

To construct the log, a replica must be able to synchronize on all data proposals subsumed by a consensus proposal cut.
%\fs{TODO: add (maybe separate lemma) that we can get the data in constant time, and BY the time consensus is done (which makes us seamless).}
\begin{lemma}\label{lem:data-avail}
   For any committed proposal $p$, a correct replica will be able to retrieve all data included (and transitively referenced) by the tips in $p$. 
\end{lemma}

\begin{proof}
    We distinguish two cases: \one The protocol is instantiated to use only certified tips for proposals, and \two the protocol is configured to use the optimistic tip optimization presented in \S \ref{sec:optimistic_tips}.
    
    \one: Since proposed tips are certified ($PoA$ exists and is provided) at least one correct replica must be in possession of the tip's data proposal \textit{and} all of its history (FIFO voting). A replica can thus directly synchronize on the entire tip history by contacting the quorum of $f+1$ replicas that certified the tip; at least one correct replica will reply.

    \two: A correct replica does not vote for a proposal unless it is in possession of the data for all proposed tips. Since a correct replica only commits upon receiving $n$ \textsc{Prep-Vote} messages (fast \textsc{CommitQC}) or $n-f$ \textsc{Confirm-Ack} messages (\textsc{CommitQC}), there must exist at least $f+1$ correct replicas who must have voted. Consequently, for every committed optimistic tip $\geq f+1$ correct replicas have stored the corresponding data. A replica missing a committed optimistic tip thus simply requests to sync from the Quorum of replicas that voted to commit.
    We remark that optimistic tips are only considered valid by correct replicas if their direct parent is certified. Since the parent is certified, it follows from \one that the entire tip history can be synchronized.

\end{proof}

Finally, we show that all data proposals from correct data lanes are eventually committed. It follows that all correct client requests are eventually committed and executed:

%\fs{TODO: Add Lemma that lanes keep growing}

\begin{lemma}\label{lem:client}
    If a correct replica receives a client request it will include it in a car $c$, and all correct replicas will eventually add a tip subsuming $c$ to its local set of tips.
\end{lemma}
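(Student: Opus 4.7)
The plan is to trace the request through \sys{}'s data dissemination layer and argue, in three steps, that it eventually surfaces as a tip at every correct replica's local lane map.

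First, I would show that a correct replica $R$ which receives a client request $\texttt{req}$ eventually issues a data proposal $\textsc{Prop}$ whose batch contains $\texttt{req}$. By assumption, $R$ continuously batches incoming user requests into its own lane and broadcasts them via the car primitive (\S\ref{sec:datalayer}). Since correct replicas do not withhold client requests indefinitely (a standard fairness condition on batch assembly), $\texttt{req}$ will be included in some batch $B$ at position $p$ of $R$'s lane within finite time, and $R$ will broadcast $\textsc{Prop} \coloneqq (\langle p, B, \textit{parent}\rangle_R, \textit{cert})$ to all replicas.

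Second, I would show that after GST every correct replica delivers $\textsc{Prop}$ and installs it as the tip of lane $R$ in its local $\textit{lanes}$ map. The argument is by induction on the lane position $p$. The base case ($p = 1$) is immediate: after GST the point-to-point broadcast reaches every correct replica within $\Delta$, and since there is no parent requirement, each such replica stores the proposal and returns a \textsc{Vote}, thereby setting $\textit{lanes}[R][1].\textit{prop} = \textsc{Prop}$. For the inductive step, observe that $R$ only issued position $p$ after certifying position $p-1$, so it previously broadcast the proposal for $p-1$; by the induction hypothesis, every correct replica eventually delivers and stores that proposal, which is precisely the precondition stated in Step 2 of the data layer protocol ($R'$ waits until it has voted for the parent, then processes \textsc{Prop}). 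Thus every correct replica eventually installs $\textsc{Prop}$ as $\textit{lanes}[R][p]$, which by construction is then the tip of $R$'s lane at that replica (until superseded by a later proposal that itself transitively references $\textsc{Prop}$ via FIFO chaining).

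Third, I would observe that the statement of the lemma requires only that a tip \emph{containing} $\texttt{req}$ is added to each correct replica's local set. The proposal installed in Step~2 is exactly such a tip; even if it is subsequently overtaken by a newer proposal in $R$'s lane, that later proposal transitively references $\textsc{Prop}$ (by FIFO voting) and hence its history, and the replica will have stored $\textsc{Prop}$ in $\textit{lanes}[R][p]$ at some point, satisfying the claim.

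The main obstacle is the FIFO delivery constraint, which is why the induction in Step~2 is necessary: one cannot argue that $\textsc{Prop}$ is delivered in isolation, because a correct receiver waits for the parent before processing. The induction handles this cleanly because $R$, being correct, never skips lane positions, so each prior proposal was itself broadcast after GST (or will be re-delivered on sync), and the chain of dependencies has bounded length. A subtle point worth double-checking is that pre-GST proposals of $R$ also eventually reach every correct replica; this follows from retransmissions at the reliable authenticated channel layer assumed in \S\ref{sec:model}, together with the same FIFO storage rule applied recursively from the earliest missing position.
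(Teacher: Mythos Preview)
Your proposal is correct and follows essentially the same route as the paper: trace the client request into the correct replica's lane, argue that the proposal is broadcast and (via reliable channels) eventually delivered to every correct replica, which then stores it as a tip. The one notable difference is that you make the FIFO-voting dependency explicit via an induction on lane position, whereas the paper's proof handles this implicitly in a single sentence (``must wait for a proof of availability to form for the parent''); your treatment is more careful on this point but not a different approach.
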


\begin{proof}
    When a correct replica receives a client request it will add it to its next data proposal, which becomes the new tip of its data lane. Before disseminating the data proposal, a correct replica must wait for a proof of availability ($PoA$) to form for its parent. 
    All messages sent by correct replicas will eventually arrive at all other correct replicas. 
    Since \sys{}'s data dissemination layer is free of timeouts, message exchanges simply run at the pace of the network. 
    A correct replica's data proposal will eventually reach all correct replica, and thus $\geq f+1$ correct replicas will vote for it, guaranteeing that a $PoA$ can be formed (and thus allowing the next car to begin). 
    When instantiated with optimistic tips, all correct replicas will adopt the data proposal as this lanes highest optimistic tip.
    When instantiated without, the proposing replica will form a $PoA$ and disseminate it. All correct replicas will adopt the certified data proposal as the highest tip. 
    Since correct replicas never equivocate their lanes never fork. Thus the tip transitively subsumes all previously disseminated cars.
\end{proof}

\begin{lemma}\label{lem:growth}
    Given a continuous stream of client requests, a correct replica's lane will grow infinitely.
\end{lemma}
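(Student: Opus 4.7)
The plan is to prove the lemma by strong induction on the lane position $k$, showing that for every $k \geq 1$, the correct replica $P$ eventually produces and certifies a car at position $k$. The main subtlety is that certifying a car at position $k$ requires $f+1$ \textsc{Vote}s, and by the FIFO voting rule those voters must themselves have voted on position $k-1$. So a naive induction on ``$P$'s tip reaches $k$'' is not strong enough: I will need to strengthen the hypothesis to also assert that \emph{every} correct replica eventually votes on position $k$.

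First I would set up the strengthened inductive invariant $I(k)$: (a) $P$'s lane contains a certified car (proposal + \textsc{PoA}) at position $k$, and (b) every correct replica eventually delivers $P$'s proposal at position $k$ and sends a matching \textsc{Vote}. The base case $I(1)$ follows directly: the continuous request stream guarantees $P$ has a batch, $P$ broadcasts its first proposal (with no parent obligation), reliable point-to-point channels ensure delivery to all $\geq 2f+1$ correct replicas, each of them votes (no prior position is needed), and $P$ aggregates at least $f+1$ such votes into a \textsc{PoA}.

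For the inductive step, assuming $I(j)$ for all $j \leq k$, I would argue as follows. By $I(k)$(a), $P$ holds a \textsc{PoA} for position $k$; by the continuous-request assumption, $P$ eventually has a fresh batch available, and so it broadcasts a \textsc{Prop} at position $k+1$ whose $parent$ is the hash of its position-$k$ proposal and whose $cert$ is the position-$k$ \textsc{PoA}. Reliable channels deliver this proposal to every correct replica. By $I(k)$(b) applied to positions $1, \dots, k$, each correct replica has already voted on those positions and therefore satisfies the FIFO check for voting at position $k+1$; the other validity checks (no prior vote at $k+1$, matching parent) are also satisfied by construction. Hence every correct replica votes, which immediately yields $I(k+1)$(b), and $P$ receives at least $2f+1 \geq f+1$ matching \textsc{Vote}s and assembles a \textsc{PoA}, giving $I(k+1)$(a).

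The hardest aspect is not any single computation but pinning down the right induction hypothesis: one must carry both ``$P$ certifies position $k$'' and ``every correct replica votes at position $k$'' together, because FIFO voting couples these across replicas. Once the strengthened invariant is in place, reliability of channels and the $f+1$-sized quorum of cars make the step routine. Note that no synchrony assumption is needed for growth itself; GST is only relevant for bounding \emph{how fast} the lane grows, which is outside the scope of this lemma.
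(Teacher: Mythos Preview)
Your proof is correct. It is, however, considerably more explicit than the paper's: the paper dispatches this lemma in two sentences by invoking the preceding Lemma~\ref{lem:client} (which already argues that a correct replica's tip and its \textsc{PoA} eventually reach all correct replicas) and then observing that an infinite stream of requests triggers this infinitely often.

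Your route is a self-contained strong induction on lane position, with the strengthened invariant that \emph{every} correct replica eventually votes at each position. This is genuinely the right thing to track, because the FIFO voting rule makes the ability to vote at position $k{+}1$ depend on having voted at position $k$; the paper's proof of Lemma~\ref{lem:client} asserts ``$f{+}1$ correct replicas will vote for it'' without spelling out why the FIFO precondition is met, so your induction actually makes explicit a step the paper leaves implicit. The trade-off is that the paper's decomposition reuses Lemma~\ref{lem:client} (which it also needs elsewhere, e.g.\ for Lemma~\ref{lem:lane}), keeping the growth lemma itself to a one-liner, whereas your version is standalone but longer. One minor phrasing point: in the inductive step you say each correct replica ``has already voted'' at position $k$ when the $k{+}1$ proposal arrives; in an asynchronous model the two events need not be ordered that way, but since the protocol has receivers \emph{wait} when the parent has not yet been voted on, and both events are guaranteed to occur eventually, the conclusion still holds.
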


\begin{proof}
    As established in lemma \ref{lem:client}, correct replicas will form new cars upon receiving client requests, and succeed in forming $PoA$s. Since we assume clients to submit a continuous (infinite) stream of input requests, correct replicas' lanes too  will grow continuously.
\end{proof}

\begin{lemma}\label{lem:lane}
    All data proposals of a correct replica's lane will be committed and executed at all correct replicas.
\end{lemma}

\begin{proof}
%\fs{I'd also prove it for stable leader, or more generally. If R becomes the leader it's trivial. If it doesn't, its data will still arrive at another correct within delta, and thus the other correct leader will propose it (since it proposes a cut of n lanes)}\neil{good point will add that to the proof}
    There is an infinite number of available consensus slots. Under round robin assignment, there thus must be an infinite amount of correct replicas that are chosen as initial proposer for a slot. 
    It follows from lemma \ref{lem:client} that every correct proposer will eventually see, and thus propose, the highest lane tips from all correct replicas' lanes. By lemma \ref{lem:growth} correct lanes grow infinitely, and thus lane coverage will always become satisfied. Finally, since a correct replica never forks, any tip in a correct replica's lane transitively subsumes all cars the replica ever proposed. 

    By lemma \ref{lem:honest-commit} all correct leader proposals after GST will be committed. It follows that all cars in correct lanes will be committed. By lemma \ref{lem:data-avail} all committed data proposals can be retrieved by every correct replica. It follows that all correct replicas will commit and execute all data proposals proposed by correct replicas.
\end{proof}

\begin{theorem}
    Eventually any client request will be ordered and executed.
\end{theorem}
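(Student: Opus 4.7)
The plan is to chain together the preceding lemmas so that any client request flows through (i) dissemination into some correct replica's lane, (ii) guaranteed indefinite growth and commit of that lane, (iii) synchronization on all subsumed data, and (iv) deterministic zipping into the total order.

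First, I would argue that any client request $r$ eventually reaches at least one correct replica (clients broadcast to all replicas, of which at least $n-f$ are correct). By Lemma~\ref{lem:client}, that correct replica eventually places $r$ into a data proposal at some position $k$ of its lane $l$, and every correct replica eventually records that tip in its local \textit{lanes} map. By the FIFO/chaining construction of lanes and Lemma~\ref{lem:growth}, the lane $l$ then continues to grow past position $k$, with each subsequent car transitively certifying availability of position $k$.

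Next, I would invoke Lemma~\ref{lem:lane}: lane $l$ is committed infinitely often through cuts proposed by correct leaders in view $0$. Since correct leaders include the latest certified (or optimistic) tip of \emph{every} lane in their consensus proposal, and since the lane's tip position grows without bound, there exists a slot $s$ after GST whose committed cut $\langle tips \rangle$ satisfies $tips[l].pos \geq k$. By Lemma~\ref{lem:main-live}, that slot commits at every correct replica; by Lemma~\ref{lem:data-avail} and Lemma~\ref{lem:fast-sync}, every correct replica synchronizes on the full history of $tips[l]$, which by construction includes the car at position $k$ containing $r$.

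Finally, I would appeal to the deterministic ordering procedure in \S\ref{sec:synchronization}: once $tips[l].pos \geq k > \textit{last-commit}[l]$, the replica identifies a \textit{start} position $\leq k$ and zips all new proposals between \textit{start} and $tips[l]$ into the log, so $r$ receives a position in the totally ordered log. By the safety theorem, all correct replicas assign $r$ the same position, establishing that eventually any client request is ordered.

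The main obstacle, and what I would want to state carefully, is bridging the gap between Lemma~\ref{lem:lane} (which only asserts that \emph{some} tip of the lane is committed infinitely often) and the need to commit a tip at position $\geq k$. This requires noting that correct leaders always include each lane's \emph{latest} observed tip (monotonically increasing over time thanks to Lemma~\ref{lem:growth} and eventual propagation via Lemma~\ref{lem:client}), so across infinitely many commits of $l$ the committed position is unbounded, hence eventually exceeds $k$. Everything else is routine application of the already-established lemmas.
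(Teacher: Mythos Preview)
Your proposal is correct and takes essentially the same route as the paper: both chain Lemmas~\ref{lem:client}, \ref{lem:lane}, \ref{lem:main-live}, and \ref{lem:data-avail} to carry a request from dissemination into a correct replica's lane, through commitment of a cut containing that lane's tip, to final ordering. You are in fact more careful than the paper in justifying why the committed tip eventually has position at least $k$; the only point the paper makes more explicit is that all slots preceding $s$ must also commit (again via Lemma~\ref{lem:main-live}) before slot~$s$ can be zipped into the log, which you leave implicit.
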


\begin{proof}
   We assume that each client request will be resent to a different replica if it has not received a response after some timeout. Eventually, some correct replica, $h$, will receive the client request and include it in a car. By lemma \ref{lem:lane}, all correct replicas' data proposals will be ordered and executed.
  
\end{proof}

%%%%%%%%%%%%%%%%%%%%%%%%%%%%%%%%%%%%%%%%%%%%%%%%%%%%%%%%%%%%%%%%%%%%%%%%%%%%%%%%%%%%%%%%%%%%%%%%%%%%%%%%%%%%%%%%%%%%
%%%%%%%%%%%%%%%%%%%%%%%%%%%%%%%% SEAMLESSNESS
%%%%%%%%%%%%%%%%%%%%%%%%%%%%%%%%%%%%%%%%%%%%%%%%%%%%%%%%%%%%%%%%%%%%%%%%%%%%%%%%%%%%%%%%%%%%%%%%%%%%%%%%%%%%%%%%%%%%

\subsection{Seamlessness discussion} \label{proof:seamless}

\subsubsection{\sys{} is seamless}
Next, we illustrate that \sys{} is not only live, but also \textit{seamless}. Three design considerations are key to avoid hangovers: 

\one First, \sys{} allows data dissemination to proceed at the pace of the network despite consensus blips. While consensus may fail to make progress due to missed timeouts, data lanes keep growing.

\two Second, upon eventual return of a good interval, \sys{} guarantees that all certified disseminated proposals are swiftly committed: consensus proposals reference an arbitrary amount of disseminated proposals with "constant" cost (cost independent of the backlog size)\footnote{Data proposals in \sys{} are linear in the number of lanes. However, for each lane, only a single tip is needed to reference an arbitrarily long history.}, and missing data can be synchronized with constant protocol latency. 

\three Finally, asynchronously disseminated data can be fetched off the timeout-critical\footnote{Every action, or delay, that may contribute to the violation of a timeout is \textit{timeout-critical}.} path of consensus, and thus does not make the protocol more susceptible to violating timeouts.
%We show that it is not only guaranteed to be able to fetch all data, but will do so in constant time, and before it locally commits $s$. While not required for liveness, this property allows \sys{} to be seamless.

Our proof of seamlessness does \textit{not} extend to the optimistic tip optimization presented in \S\ref{sec:optimistic_tips}. We provide some additional discussion at the end of this section.

\par \textbf{A note on Lane Coverage:} For simplicity, we ignore considerations for lane coverage in the following discussion. We assume that lane coverage is set reasonably (w.r.t to the given dissemination pace), and assume that after the end of a consensus blip lane coverage will always be satisfied for new consensus proposals led by correct replicas. This assumption is justified if data dissemination outpaces consensus, which is "trivially" the case in presence of a consensus blip. If the assumption does not hold, then there cannot exist a noteworthy backlog (w.r.t to lane coverage configuration), and thus we need not be concerned about hangovers.\\

We first show that data dissemination is responsive, {\em i.e.,} continues to make progress at the pace of the network. Put differently, \sys{} continues to disseminate client requests even in the presence of consensus blips.

\begin{proof}
This follows directly from Lemmas \ref{lem:client} and \ref{lem:growth}.
\end{proof}

We say that data proposal has been successfully disseminated if it has been certified, and the certificate has reached at least one correct replica.\footnote{With the optimistic tip optimization it is enough for just the data proposal (uncertified) to arrive at a correct replica.} 
   
We showed in Lemma \ref{lem:client} that all disseminated data proposals will be subsumed by tips once available to a correct replica.
Since \sys{} consensus proposals consist of tips, and tips transitively reference the entire lane history it follows:

\begin{corollary}\label{cor:constant}
    \sys{} can propose arbitrarily long lane histories with constant cost.
\end{corollary}

We show next that available tips will be proposed as soon as a correct replica (that has seen the tip) becomes the leader for a new \textit{fresh} slot. Fresh denotes a slot that is in view $0$, {\em i.e.,} for which the leader can propose a new tip cut of choice (we assume lane coverage is given). This is as early as any protocol can hope to propose disseminated data: if it has not arrived at any correct replica, then it is not visible (not successful dissemination), and we cannot expect it to be proposed.

\begin{lemma}
    All data proposals issued by correct replicas, that have successfully been disseminated to a correct replica $R$, will be proposed in the first available \textit{fresh} slot $s$ led by $R$. 
\end{lemma}

\begin{proof}
    A correct replica $R$ that is the leader for a fresh slot $s$, {\em i.e.,} $v=0$, will propose the latest available tips it has received from all replicas. Consequently, it will implicitly propose to commit \textit{all} data proposals transitively referenced by the lane history. Since correct replica lanes do not fork, this implies that all data proposals issued by correct replicas, that have been successfully disseminated, will be proposed.
\end{proof}

From our liveness proof it follows that the proposal is guaranteed to be committed after GST. 
\fs{Of course, it may fortuitously commit earlier too, which is strictly advantageous. Although you can argue that in the GST model, if something committed, GST must've "held"}

% Note: We go at the pace of the network. Even in asynchrony. While consensus may "experience" asynchrony - {\em i.e.,} fail to make progress due to missed timeouts, data lanes keep growing. They have no timeout! Simply need quorum of f+1
%       If all messages are so slow that no new cars form, then there is nothing for us to do.
%       Likewise, if there are partitions so bad that cars cant form, then this is beyond seamlessness (not a protocol mechanism)
%       \fs{this is dodgy: technically we are limiting dissemination via a protocol mechanism }

% In Lemmas \ref{lem:data-sync} and \ref{lem:data-avail} we showed that synchronization takes at most constant time on the critical path of consensus, and that eventually a replica will be able to retrieve all data required to commit.
% \fs{TODO: Show that sync is off the critical path for certified tips.}
In Lemma \ref{lem:data-avail} we showed that eventually a replica will be able to retrieve all data required to commit. We remark that synchronization for certified tips is asynchronous, {\em i.e.,} off the timeout-critical path for consensus. Thus \sys{}'s synchronization does not make the consensus protocol more susceptible to blips.
Replicas can vote without having synchronized, but must wait to synchronize before locally ordering and executing committed requests.

Next, we show that \sys{} is not only guaranteed to be able to fetch all data, but will do so in constant number of message exchanges (2 total), allowing replicas to complete synchronization by the time they locally commit a slot $s$ (subject to data bandwidth constraints). 
This property allows \sys{} to be seamless, as the maximum number of message exchanges for synchronization is independent of the size of the backlog. This ensures that synchronization does not introduce protocol-induced delays to commitment. We note that, of course, synchronization speed is bound to the available network bandwidth; we discuss in \ref{sec:hangover-limits} the implications.

\begin{lemma}\label{lem:fast-sync}
   For any committed proposal, a correct replica will be able to synchronize on missing cars in constant time, off the timeout-critical path, and by the time it commits the proposal. \fs{after GST} 
\end{lemma}

\begin{proof}

To synchronize on a certified tip \changebars{}{or any ancestor of an optimistic tip,} a replica sends a \textsc{Sync} message to quorum of the $f+1$ replicas who voted to create the proof of availability (PoA), of which at least $1$ must be correct. This \textsc{Sync} message will arrive at this correct replica $\Delta$ time afterwards. Since correct replicas respect FIFO voting, this correct replica must have the data for all ancestors for the missing tip. It will then send a \textsc{Sync-Reply} message to the requesting replica containing all data for the missing tip and its ancestors. This message will arrive at the correct replica within another $\Delta$ time, taking in total $2\Delta$ for a correct replica to synchronize. 
 A correct replica starts synchronizing upon receiving a \textsc{Prepare} message, and commits at the earliest after $2$ more message delays (Fast Path), and latest after $4$ more message delays (Slow Path). Since synchronization requires 2 message delays all correct replicas will synchronize on the data \textit{roughly} by the time consensus finishes:
 
\end{proof}

We conclude that \sys{} is seamless. Consensus blips do not halt data dissemination, and upon return of a good interval, the entire backlog of disseminated data can be committed in one consensus proposal, with protocol complexity independent of the backlog, thus avoiding hangovers.

\par \textbf{A note on Optimistic Tips:} The optimistic tip optimization may introduce synchronization on the timeout-critical path of consensus, and is thus, according to our definition, not seamless. 
Nonetheless, we posit that it is, in practice, quite robust to blips as the amount of on-critical-path synchronization is limited to a single car per lane. Replicas need only ever synchronize on the data proposal associated with the optimistic tip before voting.
Since the ancestor of an optimistic tip is certified, synchronization of the remaining (possibly long) lane history is asynchronous.

If a correct replica is missing data for an optimistic tip, then by lemma \ref{lem:data-sync} it will take at most $2\Delta$ to retrieve the data. Thus, in practice, one may simply increase the timeout by a constant amount ($2\Delta$) to avoid blips caused by synchronization.

\subsubsection{Practical considerations, and non-protocol induced hangovers.}\label{sec:hangover-limits}
Synchronization requires only two message exchanges (a sync request, and a sync reply) to synchronize on arbitrarily long data lanes. 

\par \textbf{Asymmetric Latencies:} Due to assymetric latencies, the 2md required to synchronize may be slightly longer (or shorter!) than the 2md to commit depending on which replicas need to be contacted. Likewise, on the Slow Path, the fastest $n-f$ replicas may create and distribute a $CommitQC$ before the slowest $f$ have received \textsc{Prepare} and started to synchronize. However, at least $f+1$ correct replicas will have synchronized in time successfully, which is enough for output commit.
Such practical considerations are orthogonal to seamlessness: any resulting delays are \one not protocol-induced (not unique to \sys{} and not caused by its protocol mechanisms), and \two of little practical concern. Since delays are \one independent of the size of committed history the effects are exceedingly minor.

\par \textbf{Bandwidth Constraints:} The size of the sync reply is, of course, proportional to the length data lane and the amount of data that must be fetched. If network bandwidth is a limiting factor, then synchronization of a large history may implicitly cause a brief hangover. We submit that this not in conflict with seamlessness, as it is not consensus protocol-induced. All data must be disseminated at some point, and synchronization does not introduce more data dissemination than was originally intended, and beyond what a subset of correct replicas have already received. Synchronization simply catches up all correct replicas on data dissemination they should have completed but have not, e.g. due to volatile network behavior such as partitions; such synchronization is unavoidable. To minimize synchronization in less extreme network cases, on may employ forwarding schemes (e.g. gossip of disseminated data) to maximize the amount of data already available.
Further, large histories that need to be synced can be staggered, and sent in FIFO order at the bandwidth the network allows. \sys{} may commit proposals as they arrive, and does not need to wait for "all" of a history to arrive in order to start committing. For example, for a lane $l$, data proposals at position $s$ can already be ordered and executed before the data proposals for $s+1$ arrive.

%or Byzantine proposers that fail to disseminate fully.\footnote{To combat this, one may employ gossip mechanisms to forward data preemptively.}
Traditional BFT protocols avoid "synchronizing" many messages at once because it does not allow dissemination to advance beyond consensus. They end up disseminating the same amount of data eventually, but will grind to a halt completely in the presence of consensus blips. Notably, replicas not part of a vote quorum may have to synchronize on missing data too upon commit.

\subsubsection{Are DAGs seamless?}\label{sec:dag-seamless}
In the following, we briefly discuss the seamlessness of existing DAG protocols. We distinguish two classes of DAG protocols: 
\one Certified DAGs, such as Narwhal~\cite{danezis2022narwhal} or Bullshark~\cite{Bullshark} employ Reliable Broadcast (RB) in each round to disseminate data.
\two Uncertified DAGs, such as Cordial Miners~\cite{keidar2023cordial} or BBCA-Chain~\cite{malkhi2024bbca} employ only Best-Effort-Broadcast (BEB) for data dissemination. We note that these two designs are currently theoretical-only, and have not been empirically evaluated.\footnote{BBCA-Chain uses a PBFT-like primitive for leader DAG nodes. All other DAG nodes use BEB.}

\par \textbf{Synchronization on timeout-critical path.} All four of the above DAG protocols incur blocking synchronization on the timeout-critical path of consensus.
For uncertified DAGs this is fundamental. Since data dissemination is best effort, edges in DAG nodes simply consist of reference (hash digest) to a preceding (typically of the previous round\footnote{Most DAGs progress in a rigid round by round pattern, but some exceptions exist; BBCA-Chain, for instance, appears to not prescribe any particular structure to its non-leader nodes.}) data proposal. Analogous to optimistic tips in \sys{} (\S\ref{sec:optimistic_tips}), such uncertified edge references do not guarantee availability. Since a replica can only ascertain the availability of a edge by locally being in possession of the data, it must synchronize on missing edges before adopting (or voting for) a new DAG node to asure liveness. Such synchronization is thus on the timeout-critical path of consensus, and makes the protocol susceptible to violating timeouts. 
Surprisingly, we observed existing certified DAGs (specifically Narwhal and Bullshark) also perform synchronization on the timeout-critical path. Existing protocol descriptions (Bullshark~\cite{Bullshark}, Alg.2, L.52) and open-source implementations~\cite{codebullshark} require causal histories to be available locally \textit{before} voting. Consequently, synchronization may contribute to producing consensus blips. However, synchronizing on the timeout-critical path is not fundamental to certified DAGs. Round certificates ($2f+1$ votes) implicitly act as availability proofs (akin to $PoA$ in \sys{}). Thus replicas can be allowed to vote before synchronizing, and data fetching can be moved off the timeout-critical path (in the background).

\par \textbf{Recursive synchronization.} In DAGs, each node contains edges to \textit{some} subset of preceeding nodes. Typically, a node in round $r$ requires $2f+1$ references to nodes in round $r-1$. 
Because of this design, it is not possible to infer the full causal history from a single DAG node in constant time. For instance, given only a node in round $r$, it is impossible to infer which other replicas' nodes it referenced in some round $r-k$ ($k>1$). The causal history may be "arbitrarily" intertwined, and can only be inferred by recursively tracing back the path of the edges.
As a result, in order to synchronize on a node's history, a replica may need to issue several rounds of sequential data fetching, in each round learning more of the history and thus \one what needs to be fetched next, and \two from which replica to fetch it (there is no guarantee that one correct replica is in possession of the entire required history). Unfortunately, such unpredictable synchronization time may result in hangovers.

We speculate that Narwhal and Bullsharks choice to synchronize on the timeout-critical path is largely to avoid such behavior. In practice, synchronizing on the timeout-critical path may help Narwhal and Bullshark to streamline synchronization and make commit time more predictable.\footnote{In Cordial Miners correct replicas pessimistically forward locally received data to other replicas. This helps improve state homogeneity, and thus reduces need for sync, but introduces potentially redundant overheads.} This makes the DAG operate somewhat in lock-step (and grow slower), but still cannot fully sidestep the problem. Furthermore, doing so also results in a conundrum between synchronization on the timeout-critical path, and poor robustness/predictability of synchronization time.

Our communications with companies deploying DAGs~\cite{priv-com-aptos-mysten} support our claims. They report that in heterogenous deployments (which are the norm in decentralized systems), replicas frequently incur large amount of synchronization (on the timeout-critical path), resulting in timeouts and loss of consensus progress. To avoid such scenarios, recent work has focused on leveraging reputation schemes~\cite{spiegelman2023shoal} to try to elect as leaders only the fastest, and best connected replicas. While this can help to some extent in practice, it does not eliminate the fundamental shortcoming; truly seamless protocols, like \sys{}, provide a more principled solution. \sys{}'s lane based design, for instance, drastically simplifies reasoning about synchronization: it easy to infer what needs to be synced on (a lane prefix), and where to request it from (a single correct replica must have it all).

\fs{could give example, but don't think it's needed}
\subsection{Fork Garbage Collection and Bounded Wastage}\label{sec:bounded-waste}
Under round-robin leader assignment every correct leader eventually issues a proposal. Consequently, every certified data proposal ({\em i.e.,} $PoA$) that has been received by any correct replica will eventually be proposed, and thus committed (reliable inclusion). This guarantees liveness for correct replica proposals, but it also allows us to \one garbage collect forks, and \two bound the amount of data proposals a Byzantine replica can disseminate without intending to commit. 

In order to issue a new data proposal, a Byzantine replica must certify it's previous car and distribute the $PoA$. By the reliable inclusion property, a Byzantine replica thus cannot issue a new data proposal, without implicitly confirming to commit a data proposal at all prior positions. In short, it is guaranteed that a data proposal will be committed for all but the latest lane position $pos_{latest}$.

We note, that this does not guarantee that all certified data proposals with $pos < pos_{latest}$ will commit.
Since Byzantine replicas may equivocate, it is possible to produce $PoA$'s for several different proposals per position. 
Each correct replicas votes for (and stores) at most one data proposal per lane position, but different replicas may store different certified data proposals for the same position.

By the reliable inclusion argument, all positions $<pos_{latest}$ will eventually be committed. Commit will order at most one proposal per position, and replicas can discard all obsolete data proposals locally stored. Thus it is guaranteed that eventually all forks will be resolved and garbage collected.

Furthermore, correct replicas store at most one data proposal issued by a Byzantine replica that is not guaranteed to commit or garbage collected, {\em i.e.,} the data proposal at $pos_{latest}$.
Since each data proposal enforces a bounded batch size $b$, a correct replica may at most have to store $f * b$ transactions (one proposal from $f$ Byz Replicas) that will never commit or be garbage collected, thus bounding the amount of \textit{wastage} in the local mempool at any time. 

\textbf{Pipelined cars.} \sys{} can, if desired, be augmented to pipeline the certification process of consecutive proposals to minimize sequential wait times in the case that batches are filled exceedingly quickly. Like parallel consensus proposals, the amount of uncertified cars at a given time may be bounded by enforcing that a position $p$ cannot begin before $p-k$ has been certified.
Pipelining up to $k$ cars, in turn, allows a Byzantine replica to distribute up to $k$ uncertified proposals that are not guaranteed to be reliably included. The wastage bound follows accordingly: $k * f * b$ transactions might never be committed or garbage collected.

%%%%%%%%%%%%%%%%%%%%%%%%%%%%%% END:

\section{Additional Technical Discussion}
In the following we discuss some optional modification that are not integral to the core \sys{} protocol. We do not implement or evaluate these, but find there is value in nontethless presenting them.

\subsection{Avoiding Synchronization with Optimistic Tips}\label{s:reputation}
The optimistic tip optimization presented in \S\ref{sec:optimistic_tips} can, in the presence of Byzantine actors, cause (a constant amount of) synchronization on the timeout-critical path. For instance, a Byzantine proposer may forward its batch only to a correct leader, and no other replicas; if the leader proposes the batch, it must aid other correct replicas with synchronization before they may vote, which may  potentially cause its view to time out.

To reduce unwanted synchronization in practice, \sys{} can be augmented with a simple (replica-local) reputation mechanism. A leader forced to synchronize on some lane $l$'s tip downgrades its local perception of $l$'s reputability. Below a threshold, the leader stops proposing optimistic tips for $l$, and simply falls back to proposing a certified tip. Each replica is thus responsible for warranting optimism for its own lane and consequently its own data proposals' latency. Reputation can be regained over time, e.g. for each committed car.

We remark that such a reputation scheme serves exclusively to minimize \textit{redundant} dissemination, and is not necessary for safety or liveness. Without, the worst case synchronization effort on the critical path remains constant; only the optimistic tips must be synchronized, {\em i.e.,} at most one data proposal per lane.

\subsection{Ride Sharing}
Consensus phases and cars share the same basic message pattern~\cite{provable_abraham} and can, in theory, be sent and signed together (\textit{ride-sharing}) to minimize the total number of messages (and associated authentication costs).  %approach the "zero-message" consensus overhead advertised by recent DAG-BFT 

%Moreover, ride-sharing comes at a slight latency tradeoff, as consensus messages have to wait for cars to be available. We find that in practice that, even for small $n$, consensus coordination is not a throughput bottleneck (data processing is, \S\ref{sec:eval}), and thus we opt to defer in-depth discussion to 

%Consensus phases and cars share the same basic message pattern~\cite{provable_abraham}. To minimize the total number of messages (and associated authentication costs) \sys{} can be modified to avoid sending consensus messages individually, and instead piggyback them atop available cars. 
Hower, it may not always be convenient to hitch a ride: Cars must slow down to accommodate the larger Quorum sizes of consensus, and consensus messages may need to wait for available cars (up to 2md if one narrowly misses a \textsc{Proposal}. Ride sharing thus poses a trade-off between optimal resource utilization and latency. %To strike a middle-ground, consensus message may wait for a small timeout, and otherwise drive themselves.

We find that in practice that, even for small $n$, consensus coordination is not a throughput bottleneck (data processing is, \S\ref{sec:eval}), and thus we opt to evaluate \sys{} without ride sharing. We note that the relative consensus overheads decrease linearly as $n$ scales, and are thus already exceedingly small for $n >> k$ (max \#instances).

Nonetheless, for completeness, we discuss how to theoretically implement ride-sharing in \sys{}. 

%We opt to evaluate \sys{} without ride sharing. \fs{we can run the ride sharing, but we don't yet share signatures}

\par \textbf{Shared Cars.} Consensus messages can be embedded on the lane of the respective instance leader (Fig \ref{fig:ride-share}); other lanes remain unaffected. Prepare and Confirm phases are piggybacked on a car each while \textsc{Timeouts} may ride any message. Multiple consensus instances ({\em i.e.,} messages for different slots and views) may hitch a ride on the same car and share one "batched" signature. In exchange, a replica may need to forward all messages in a batch to prove the validity of a single signed message; one can reduce  costs by signing a merkle root of the batch, attaching a small merkle path to authenticate individual messages, and caching repeat signatures~\cite{suri2021basil}.
% when sharing certificates (e.g. when proposing a certified tip, or when sending a QC during View Changes) a replica may need to forward all messages in the batch to prove signature validity. To reduce such message costs, one may instead sign only a merkle root of a batch, and send only the relevant message plus the respective merkle path~\cite{suri2021basil}.

\begin{figure}[h!]
    \vskip 0pt
   \centering
   \includegraphics[width=0.45\textwidth]{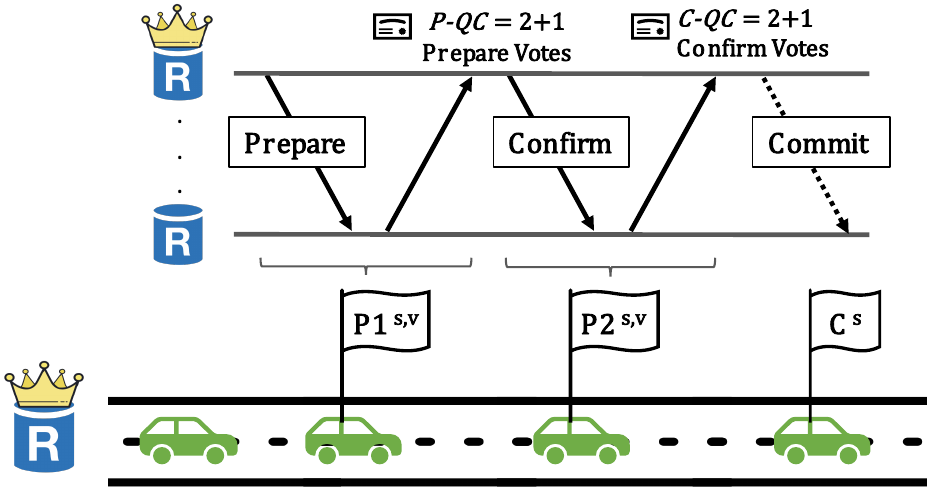}
   \caption{Core Consensus coordination pattern. Messages may be piggybacked on existing cars: P1 = \textsc{Prepare}, P2 = \textsc{Confirm}, C = \textsc{Commit}.}
   \label{fig:ride-share}
   \vskip 0pt
\end{figure}

Cars, by default, only wait for $f+1$ votes -- too few to successfully assemble PrepareQCs and CommitQCs. To account for this we upgrade any car carrying one or more \textsc{Prepare/Confirm} consensus messages into a special \textit{ambassador} car that waits for $n-f$ votes; ambassadors representing \textsc{Prepares} may attempt to briefly wait for the full $n$ \textsc{Prepare-Votes} required to construct a Fast-CommitQC. Fig. \ref{fig:consensus} illustrates the embedding, flags denote ambassadors.
Correct replicas will always vote to confirm \textsc{Proposal} availability, but may omit a consensus-vote if they have changed views for the respective slot. To distinguish faulty voters from correct replicas \sys{} delays cars following an ambassador: ambassador cars wait for the earlier of QC formation, and expiration of a short timer (beyond the first $f+1$ replies (PoA)). Upon timing out, a lane proceeds with the next car, and continues consensus processing for late arrivals individually or on a later car. We remark that although the data proposal pace may decelerate due to increased wait-time for ambassador cars, lane growth fundamentally remains asynchronous. 

%%%%%%%%%% RIDING SOLO
\par \textbf{Ridin' Solo.}
It may not always be convenient for consensus messages to hitch a ride. In the worst-case, a consensus message that is available just after starting the current car may need to wait up to 2md until the next \textsc{Proposal} (1md in expectation). Ride sharing thus poses a trade-off between optimal resource utilization and latency. To strike a middle-ground, consensus message may wait for a small timeout to see whether they can hitch a ride, and otherwise drive themselves.
However, rather than sending consensus messages fully independently, \sys{} may once again piggyback them alongside data proposals in a special \textit{off-lane car} to maximize message utility. 

\fs{the below is actually more subtle than I describe, so it may be worth to just describe as sending externally. Detail in comment:}
%It's not enough to just have a single off-lane car for Prepare, but Confirm also needs to go offlane. This means that you have to have up to 2 offlane cars in a row -> Rather, it makes sense to only make the Prepare off-lane, and then make the Lane WAIT on the Confirm car so that one is "normal" again}

This off-lane car follows an ambassador car's usual structure, and may carry both multiple consensus instance messages. Uniquely, however, it does not have a lane position, and does not obey lane rules (certified chained FIFO).

In order to minimize end-to-end latency, the off-lane car includes as parent reference the latest proposal (yet to be certified) proposal sent by the leader. This is safe and live as it is equivalent to an optimistic leader tip; the consensus leader is responsible for the delivery of it's tip history. 

Since the off-lane car lacks a position, normal cars in the leaders lane do not extend the off-lane car. To avoid orphaning an off-lane car whose payload was certified but did not commit, a subsequent car may be augmented to include a special (optional) \textit{tow} reference to a certificate of an off-lane car. This allows the off-lane car to eventually be committed as part of the towing car's history.

\end{document}